\newtheorem{lemma}{Lemma}
\newtheorem{proposition}{Proposition}
\newtheorem{remark}{Remark}
\newtheorem{corollary}{Corollary}
\long\def\symbolfootnote[#1]#2{\begingroup%
\def\thefootnote{\fnsymbol{footnote}}\footnote[#1]{#2}\endgroup}
\newtheorem{theorem}{Theorem}
\begin{document}

\title{Joint Source-Channel Coding with Time-Varying Channel and Side-Information}
%\author{I\~naki Estella Aguerri and Deniz G\"{u}nd\"{u}z \\
%Centre Tecnol\`{o}gic de Telecomunicacions de Catalunya (CTTC)\\
%08860 - Castelldefels, Barcelona, Spain\\
%Email: \{{\tt inaki.estella, deniz.gunduz} \}{\tt @cttc.es} \\
%}

\author{
  \IEEEauthorblockN{I\~naki Estella Aguerri and Deniz G\"{u}nd\"{u}z\footnote{
  I\~naki Estella Aguerri is with the Mathematical and Algorithmic Sciences Lab, France Research Center, Huawei Technologies Co. Ltd., Boulogne-Billancourt, France. E-mail: {\tt inaki.estella@huawei.com \tt}.
  Deniz G\"{u}nd\"{u}z is with the Department of Electrical and Electronic Engineer at Imperial College London, London, UK. E-mail: {\tt  d.gunduz}{\tt @imperial.ac.uk}\\
    Part of the reseach was done during the Ph.d. studies of I\~naki Estella Aguerri at Imperial College London.  This paper was presented in part at the  IEEE International Conference on
Communications, Kyoto, Japan, Jun. 2011  \cite{estella2011icc}, at the IEEE International Symposium
on Information Theory, St. Petersburg, Russia, Aug. 2011 \cite{ Estella2011DistExponent}, and at the  IEEE International Symposium on
Information Theory, Istanbul, Turkey, Jul. 2013 \cite{Estella2013SystematicLossy}.}\\}
%  \IEEEauthorblockA{Imperial College London,
%    London, UK\\
%    Email: {\tt \{i.estella-aguerri12, d.gunduz\}}{\tt @imperial.ac.uk}}
}

% make the title area
\maketitle
\vspace{-1cm}
\begin{abstract}Transmission of a Gaussian source over a time-varying Gaussian channel is studied in the presence of time-varying correlated side information at the receiver. A block fading model is considered for both the channel and the side information, whose states are assumed to be known only at the receiver. The optimality of separate source and channel coding in terms of  average end-to-end distortion is shown when the channel is static while the side information state follows a discrete or a continuous and quasiconcave distribution. When both the channel and side information states are time-varying, separate source and channel coding is suboptimal in general. A partially informed encoder lower bound is studied by providing the  channel state information to the encoder. Several achievable transmission schemes are proposed based on uncoded transmission, separate source and channel coding, joint decoding as well as hybrid digital-analog transmission. Uncoded transmission is shown to be optimal for a class of continuous and quasiconcave side information state distributions, while the channel gain may have an arbitrary distribution. To the best of our knowledge, this is the first example in which the uncoded transmission achieves the optimal performance thanks to the time-varying nature of the states, while it is suboptimal in the static version of the same problem. Then, the optimal \emph{distortion exponent}, that quantifies the exponential decay rate of the expected distortion in the high SNR regime, is characterized for Nakagami distributed channel and side information states, and it is shown to be achieved by hybrid digital-analog and joint decoding schemes in certain cases, illustrating the suboptimality of pure digital or analog transmission in general.
\end{abstract}

% keywords
\begin{IEEEkeywords} Distortion exponent, fading channel and side information, side information diversity, uncoded transmission, hybrid digital-analog transmission, joint decoding, joint source-channel coding. \end{IEEEkeywords}

\IEEEpeerreviewmaketitle
\section{Introduction}

Many common applications, such as multimedia  transmission over cellular networks, or the accumulation of local sensor measurements at a fusion center, require the transmission of a continuous amplitude source signal over a wireless fading channel, to be reconstructed with the minimum possible average distortion at the destination. Depending on the application layer requirements, additional delay constraints might be imposed on the system. For example, in video streaming or voice transmission, the source signal has to be reconstructed within a certain deadline. Moreover, in many practical scenarios, in addition to the received signal, the destination might have access to additional side information correlated with the source signal. This correlated side information might be obtained either from other transmitters in the network, or through the own sensing devices of the destination. While current protocols do not exploit this extra information, theoretical benefits of correlated side information are well known \cite{wyner1978rate}. We model this practical communication scenario as a joint source-channel coding problem of transmitting a Gaussian source over a time-varying Gaussian channel with the minimum average end-to-end distortion in the presence of time-varying correlated side information at the receiver.  We consider a block fading model for the states of both the channel and the side information, which are assumed to be known perfectly at the receiver.

When both the channel and the side information are static, Shannon's separation theorem applies \cite{Shamai:IT:98}, and the optimal performance is achieved by separate source and channel coding; that is, the concatenation of an optimal Wyner-Ziv source code \cite{wyner1978rate} with an optimal capacity achieving channel code. However, under strict delay constraints, if the channel and the side information are time-varying, and the channel state information (CSI) is available only at the receiver, the transmitter cannot know the optimal source and channel coding rates, and the separation theorem
fails. In order to have a good performance on average, the transmitter has to adapt to the time-varying nature of both the channel and the side information without knowing their realizations.

Strategies based on separate source and channel coding suffer from the threshold effect and do not adapt well to the uncertainties of the channel \cite{mittal2002hybrid}. On the other hand, uncoded (analog) transmission is a simple joint source-channel coding scheme robust to signal-to-noise (SNR) mismatch, and does not suffer from the threshold effect. In Gaussian point-to-point channels, uncoded transmission is an alternative optimal scheme in the absence of side information \cite{Goblick:IT:65}, \cite{Gastpar:IT:03}. However, it becomes suboptimal in the presence of correlated side information. In broadcasting and relaying scenarios where multiple users with different channel and side information qualities are present, a purely digital coding scheme based on joint decoding of the channel and source codewords, is shown to exhibit improved robustness to the threshold effect, and to achieve the optimal or superior performance \cite{tuncel2006slepian, Nayak2010DigSchemes, Gunduz2013:Relay}. 
In \cite{wilson2010joint} a hybrid digital-analog coding scheme, called HDA, is proposed and shown to be robust to the threshold effect, and unlike uncoded transmission, HDA is optimal even in the presence of correlated side information at the receiver. 
Hybrid digital analog transmission is also shown to outperform separate source and channel coding and uncoded transmission in certain static setups, such as the transmission of a Gaussian source in the presence of correlated interference \cite{Huang2012CorrInt},\cite{ Varasteh2012:SP}, the transmission of a bivariate Gaussian over a multiple access channel \cite{lapidoth2010sending} or an interference channel \cite{Estella2015ITW:JSCCIC}, or to achieve the optimal distortion in the transmission of a bivariate Gaussian source over a broadcast channel \cite{Tian2011BivariateBroadcast}.

Characterization of the optimal expected distortion  in the absence of time-varying side information has received a lot of interest in recent years\cite{ng2007minimumLayered,tian2008SuccRefinement,gunduz2008joint,Caire2007hybrid,bhattad2008distortion}.
 Despite the ongoing efforts, the optimal performance remains an open  problem. The expected distortion in this model is studied using multi-layer source codes concatenated superposition coding schemes \cite{ng2007minimumLayered,tian2008SuccRefinement}. More conclusive results on this problem have been obtained by focusing on the high SNR behavior of the expected distortion. The SNR exponent of the expected distortion, called the \emph{distortion exponent}, is characterized in the multi-antenna setup in certain regimes in \cite{gunduz2008joint},\cite{Caire2007hybrid} and \cite{bhattad2008distortion}, and it is shown that multi-layer source and channel codes, or hybrid digital-analog coding schemes are needed to achieve the optimal distortion exponent.

The pure source coding version of our problem, in which the channel is considered as an error-free constant-rate link, is studied in \cite{ng2007minimum}, and it is shown that, contrary to the channel coding problem,  when the side information follows a continuous quasiconcave fading distribution, a single layer source code suffices to achieve the optimal performance. Recently, the joint source channel coding problem has also been considered in \cite{Zhao2010ImpactSideInfo} and \cite{Koken2013LossyHDA}. In \cite{Zhao2010ImpactSideInfo}, the distortion exponent for separate source and channel coding is derived when the side information sequence has two states, the side information average gain does not increase with the SNR, and the channel has Rayleigh fading. In \cite{Koken2013LossyHDA}, HDA and joint decoding schemes are considered, and their performance is studied  using the distortion loss, which quantifies the loss with respect to a fully informed encoder that perfectly knows the channels and side information states.

In this paper, we consider the joint source-channel coding problem both in the finite and high SNR regimes. We first show the optimality of separate source and channel coding when the channel is static. Leveraging on this result and by providing the encoder with the channel state information, we derive a lower bound on the expected distortion. We then study achievable schemes based on uncoded transmission, separate source and channel coding (SSCC), joint decoding (JDS), as well as hybrid digital-analog transmission (S-HDA) and compare the performance of these schemes with the lower bound. We show that uncoded transmission meets the  lower bound when the side information fading state belongs to a certain class of continuous quasiconcave distributions, while separate source and channel coding is suboptimal. This class includes monotonically decreasing functions, which occur, for example, under Rayleigh fading. To the best of our knowledge, this is the first result showing the optimality of uncoded transmission in a fading channel scenario while it would be suboptimal in the static case. Then, we show that JDS always outperforms SSCC, and we numerically show that S-HDA performs very close to the proposed lower bound, although in general no particular scheme outperforms the others at all conditions.

Next, we obtain the distortion exponent corresponding to the proposed upper and lower bounds for Nakagami distributed channel and side information states. We parameterize the uncertainty by the shape parameter, denoted by $L_c$ for the channel and $L_s$ for the side information. For $L_c\geq 1$, we characterize the optimal distortion exponent and show that it is achieved by S-HDA, in line with the numerical results. For $L_c<1$, we show that JDS achieves the optimal distortion exponent in certain regimes, while S-HDA is suboptimal. However, as $L_s$ increases, the performance of JDS saturates, and eventually becomes worse than S-HDA, whose distortion exponent converges to the upper bound.

We will use the following notation in the rest of the paper. We denote random variables with upper-case letters, e.g., $X$, their realizations with lower-case letters, e.g., $x$, and the sets with calligraphic letters $\mathcal{A}$. We denote $\mathrm{E}_{X}[\cdot]$ as the expectation
with respect to $X$, and $\mathrm{E}_{\mathcal{A}}[\cdot]$ as the expectation
over the set $\mathcal{A}$. We denote by $\mathds{R}_n^+$ the set of positive real numbers, and by $\mathds{R}_n^{++}$ the set of strictly positive real numbers in the $n$-dimensional Euclidean space $\mathds{R}^n$, respectively. We define $(x)^+=\max\{0,x\}$. Given two functions $f(x)$ and $g(x)$, we use $f(x)\doteq g(x)$ to denote the exponential equality $\lim_{x\rightarrow\infty}\frac{\log f(x)}{\log g(x)}=1$, while  $\stackrel{.}{\geq}$ and $\stackrel{.}{\leq}$ are defined similarly.

The rest of the paper is organized as follows: in Section \ref{sec:SysMod} we introduce the system model. In Section \ref{sec:PreliminaryResults} we review some of the related previous results, and characterize the optimal performance for a static channel. In Section \ref{sec:UpperAndLowerBounds} we propose upper and lower bounds on the performance. In Section \ref{sec:OptimalityUncoded} we prove the optimality of uncoded transmission under certain side information fading distributions. In Section
\ref{sec:FiniteSNRResults} we provide numerical results for the finite SNR regime, while in  Section \ref{sec:HighSnrAnalysis} we consider a high SNR analysis, and characterize the optimal distortion exponent.  Finally, in Section \ref{sec:Conclusions} we  conclude the paper.

\section{System Model}\label{sec:SysMod}
%----------------------------------------
\begin{figure}
\centering
\includegraphics[width=0.7\textwidth]{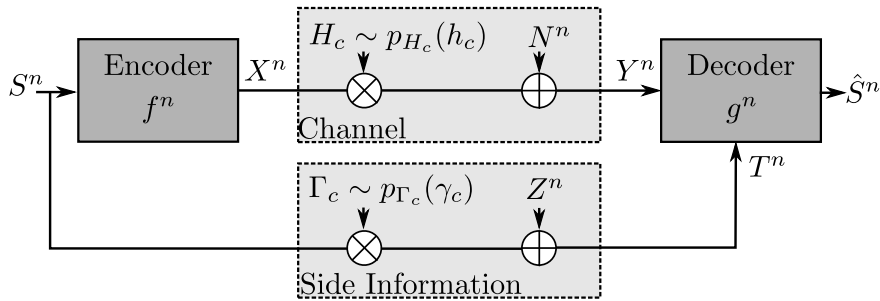}
\caption{Block diagram of the joint source-channel
coding problem with fading channel and side information.}
\label{fig:Model}
\end{figure}
%----------------------------------------
We consider the transmission of a random source sequence $S^n$ of independent and
identically distributed (i.i.d.) entries form a zero mean, unit variance real
Gaussian distribution, i.e., $S_i\sim \mathcal{N}(0,1)$,
over a time-varying channel (see Fig. \ref{fig:Model}).  An encoder $f^n:\mathds{R}^n\rightarrow\mathds{R}^{n}$ maps the source sequence
$S^n$ to the channel input, $X^n\in \mathds{R}^{n}$, i.e., $x^n=f^n(s^n)$, while satisfying an average power constraint:
$\frac{1}{n}\sum_{i=1}^n\mathrm{E}[X_i^2]\leq1$. The block-fading channel is given by
 \begin{IEEEeqnarray}{rCl}
 Y^n = H_c X^n + N^n,\label{eq:ChannelModel}
 \end{IEEEeqnarray}
 where $H_c\in \mathds{R}$ is the channel fading state with probability density function (pdf) $p_{H_c}(h_c)$, and $N^n$ is the additive white Gaussian
noise $N_i\sim\mathcal{N}(0,1)$, $\forall i$.

In addition, there is an orthogonal block-fading side information channel connecting the source to the destination, which provides an uncoded noisy version of the source sequence to the destination. This second channel models the time-varying correlated side-information at the destination. Similarly to the communication channel, we model this side information channel as a memoryless block fading
 channel given by
\begin{IEEEeqnarray}{rCl}\label{eq:SImodel}
T^n=\Gamma_c S^n + Z^n,
\end{IEEEeqnarray} where  $\Gamma_c\in
\mathds{R}$ is the side information fading state with
pdf $p_{\Gamma_c}(\gamma_c)$, $X^n$
is the uncoded channel input, and $Z^n$ is the additive white
Gaussian noise, i.e., $Z_i\sim\mathcal{N}(0,1)$, $\forall i$.

We define ${H}\triangleq H_c^2\in \mathds{R}^+$ and $\Gamma\triangleq \Gamma_c^2\in \mathds{R}^+$ as the instantaneous \emph{channel gain} and the instantaneous \emph{side information gain}, with pdfs $p_{{H}}(h)$ and $p_{\Gamma}(\gamma )$, respectively.

We assume a stringent delay constraint that imposes each source block of $n$ source samples to be transmitted over one channel block, consisting of
$n$ channel uses. Both the channel and side information states, $H_c$ and $\Gamma_c$, are assumed
to be constant, with values $h_c$ and $\gamma_c$, respectively, for the
duration of one channel block, and independent among different blocks. The
channel and side information state realizations $h_c$ and $\gamma_c$ are
assumed to be known at the receiver, while the encoder is only aware
of their distributions.

The decoder reconstructs the source sequence from the
 channel output $Y^n$, the side information sequence $T^n$, and the channel and side information states $h_c$ and $\gamma_c$, using a mapping
$g^n: \!\mathds{R}^{n} \!\times \!\mathds{R}^n \!\times\!\mathds{R}\!\times\!
\mathds{R} \!\!\rightarrow \!\!\mathds{R}^n$, where
$\hat{S}^n\!=g^n(Y^n,T^n,h_c,\gamma_c)$.

For given channel and side information distributions, we are interested in
characterizing the minimum \emph{expected distortion}, $\mathrm{E}[D]$, where
the quadratic distortion between the source sequence and the
reconstruction is given by
\begin{IEEEeqnarray}{rCl}
D\triangleq\frac{1}{n}\sum^n_{i=1} (X_i-\hat{X}_i)^2.
\end{IEEEeqnarray}
The expectation is taken with respect to the source, channel and side information states,
and the noise distributions. The minimum expected distortion can be
expressed as
\begin{IEEEeqnarray}{rCl}\label{eq:EDOptimal}
ED^*\triangleq\lim_{n\rightarrow\infty}\min_{f^n,g^n}\mathrm{E}[D].
\end{IEEEeqnarray}

%The minimum expected
%distortion is found as
%\begin{IEEEeqnarray}{rCl}
%ED^*\triangleq\min_{f,g}E[D].
%\end{IEEEeqnarray}

%%%%%%%%%%%%%%%%%%%%%%%%%%%%%%%%%%%%%%%%%%%%%%%%%%%%%%%%%%%%%%%%%%%%%%%%%%%%%%%%%%%%%%%%%%%%%%%%%
\section{Preliminary Results}\label{sec:PreliminaryResults}
%%%%%%%%%%%%%%%%%%%%%%%%%%%%%%%%%%%%%%%%%%%%%%%%%%%%%%%%%%%%%%%%%%%%%%%%%%%%%%%%%%%%%%%%%%%%%%%%%
We first review some of the existing results in the literature for the source coding version of the problem under consideration, in which the fading channel is substituted by an error-free channel of finite capacity. We then focus on the scenario in which the channel is noisy but static, i.e., the channel gain is constant and known both at the encoder and the decoder. We show that separate source and channel coding is optimal in this scenario.
%%%%%%%%%%%%%%%%%%%%%%%%%%%%%%%%%%%%%%%%%%%%%%%%%%%%%%%%%%%%%%%%%%%%%%%%%%%%%%%%%%%%%%%%%%%%%%%%%
\subsection{Background: Lossy Source Coding with Fading Side
Information}\label{sec:SourceCoding}
%%%%%%%%%%%%%%%%%%%%%%%%%%%%%%%%%%%%%%%%%%%%%%%%%%%%%%%%%%%%%%%%%%%%%%%%%%%%%%%%%%%%%%%%%%%%%%%%%

In the source-coding version of our problem the fading
channel is substituted by an error-free channel of rate
$R$ and a time-varying side information sequence $T^n$ is available at the destination \cite{ng2007minimum}. Here we briefly review the
results of \cite{ng2007minimum} which will be used later in the
paper.

Let the distribution $p_{\Gamma}(\gamma )$ be discrete with $M$
states $\gamma_{1}\leq\cdots\leq\gamma_{M} $ with probabilities
$\text{Pr}[\Gamma=\gamma_{i}]=p_i$. We define the side
information sequence available at the decoder when the realization
of the side information fading gain is $\gamma_{si}$ as $T_{i,1}^{n}\triangleq \sqrt{\gamma_{i}}
S^n+Z^n$ \footnote{To avoid confusion in the indexing, we use $T_{i,1}^{n}\triangleq[T_{i,1},...,T_{i,n}]$ to denote all the elements $T_{i,j}$, $j=1,...,n$, for the side information sequence in the $i$-th  state.}. Note that the side information has a degraded structure, characterized by the Markov chain
\begin{IEEEeqnarray}{rCl}
T_{1,j}-\cdots-T_{M-1,j}-T_{M,j}-S_j, \quad j=1,...,n.
\end{IEEEeqnarray}
 This is equivalent to the Heegard-Berger source coding problem with degraded
side information \cite{Berger1985SideInfMayBeAbsent}, in which an encoder is connected by an error-free channel of rate $R$ to $M$ receivers, and receiver $i$ has access to side information $T^n_{i,1}$.
% The minimum expected distortion is given by the solution
%to the following problem,
%\begin{IEEEeqnarray}{rCl}
%ED^*(R)=\min_{\mathbf{D}: R_{\mathrm{HB}}(\mathbf{D})\leq R}
%\mathbf{p}^T\mathbf{D},\label{eq:EDproblem}
%\end{IEEEeqnarray}
%where $\mathbf{p}\triangleq [p_1,...,p_M]$,
%$\mathbf{D}=[D_1,...,D_M]$ with $D_i$ defined as the achievable
%distortion at receiver $i$ and $R_{\mathrm{HB}}(\mathbf{D})$ is the
%Heegard-Berger rate-distortion function given by
%\begin{IEEEeqnarray}{rCl}\label{eq:RateSum}
%R_{\mathrm{HB}}(\mathbf{D})=\min_{W_1^M\in\mathcal{P}(\mathbf{D})}\sum_{i=1}^MI(X;W_i|W_{1}^{i-1},Y_i),
%\end{IEEEeqnarray}
%where $W_{1}^{i}$ denotes the auxiliary random variables
%$W_1,...,W_i$, and $\mathcal{P}(\mathbf{D})$ is the set of random
%variables $W_1^M$ satisfying the Markov chain condition
%\begin{IEEEeqnarray}{rCl}
%W_M-\cdots-W_1-X-Y_M-Y_{M-1}-\cdots-Y_1\nonumber,
%\end{IEEEeqnarray}
%for which there exist source reconstructions
%$\hat{X}_{i}(Y_i,W_{1}^{i})$ satisfying $\mathrm{E}[d_i(X,\hat{X}_i)]\leq
%D_i$, $i=1,...,M$.
%
%When the source $X^n$ is Gaussian, it can be shown that the optimal
%auxiliary random variables $W_1^M$ minimizing (\ref{eq:EDproblem}) are jointly Gaussian.  
%Then,
It is shown in \cite{ng2007minimum} that the optimal rate allocation  can be obtained as the solution to a convex optimization problem.

%
%the minimum expected distortion for a Gaussian source with finite number of side
%information states can be found by solving the following convex optimization problem
%\cite[Eq. (59)-(62)]{ng2007minimum}:
%\begin{IEEEeqnarray}{rCl}\label{eq:GaussEDDisc}
%&ED&_{F}^*(R)=\!\min_{D_1,...,D_M\in \mathds{R}^{++}}\sum_{i=1}^{M}p_i D_i\nonumber\\
%&\text{s.t. }&\!-\frac{1}{2}\!\sum_{i=0}^{M-1}\log(1+(\gamma_{i+1}-\gamma_{i})D_i)-\frac{1}{2}\log D_M\leq \!R,\nonumber\\
%&&D_i\leq (D^{-1}_{i-1}+\gamma_{i}-\gamma_{i-1})^{-1},\;
%i=1,...,M, \label{eq:EDproblemLast}
%\end{IEEEeqnarray}
%where $D_0\triangleq \sigma^2_x=1$ and $\gamma_{ 0}\triangleq 0$. The Heegard-Berger rate distortion function also extends to the set
%of infinitely many degraded fading states, $\gamma_{ 1}\leq \gamma_{ 2}\leq \cdots$
%with $\sum_{i=1}^{\infty}p_i=1$ \cite{ng2007minimum}. For a
%countable number of states, the expected distortion is given in \cite[Eq. (75)-(78)]{ng2007minimum} as the solution to
%\begin{IEEEeqnarray}{lCl}\label{eq:GaussEDCount}
%&ED&_{C}^*(R)=\min_{D_1,D_2...\in \mathds{R}^{++}}\sum_{i=1}^{\infty}p_i D_i\nonumber\\
%&\text{s.t. }&-\frac{1}{2}\sum_{i=0}^{\infty}(\log(D_{i-1}^{-1}+\gamma_{ i}-\gamma_{ ,i-1})+\log{D}_i)\leq R,\nonumber\\
%&&D_i\leq (D^{-1}_{i-1}+\gamma_{ i}-\gamma_{ ,i-1})^{-1},\;
%i=1,2,... \label{eq:EDproblemLast}
%\end{IEEEeqnarray}
When $p_{\Gamma}(\gamma)$ is
continuous and quasiconcave\footnote{A function $g(x)$ is
quasiconcave if its supersets $\{x|g(x)\geq \alpha\}$ are convex for
all $\alpha$.}, the optimal expected distortion is achieved by
single-layer rate allocation targeting a single side information state $\bar{\gamma} $
\cite{ng2007minimum}.
Then, the optimal expected distortion is given by
\begin{IEEEeqnarray}{rCl}\label{eq:SingleLayerCode}
ED_{Q}^*(R)\!=\!\int_{0}^{\bar{\gamma} }\!\frac{p_{\Gamma}(\gamma)}{1+\gamma}d\gamma\!+\!\int_{\bar{\gamma} }^{\infty}\!\frac{p_{\Gamma}(\gamma)}{(\bar{\gamma} +1)2^{2R}+\gamma-\bar{\gamma} }d\gamma,
\end{IEEEeqnarray}
where $\bar{\gamma} $ minimizing (\ref{eq:SingleLayerCode}) is determined as follows: Let a super-level set
be defined as
$[\gamma_l(\alpha),\gamma_r(\alpha)]\triangleq\{\gamma|
p_{\Gamma}(\gamma)\geq \alpha\}$. Then, $\bar{\gamma} $ is defined as the left endpoint of the super-level set induced by $\alpha^{*}$, i.e.,
$\bar{\gamma} =\gamma_l(\alpha^*)$, where $\alpha^*\in[0,\max
p_{\Gamma}(\gamma)]$ is found by solving the equation
\begin{IEEEeqnarray}{rCl}\label{eq:OptGamma}
\int_{\gamma_l(\alpha^*)}^{\infty}\frac{p_{\Gamma}(\gamma)-\alpha^*}{((1+\gamma_{l}(\alpha^*))2^{2R}+\gamma-\gamma_{l}(\alpha^*))^2}d\gamma=0.
\end{IEEEeqnarray}

If the side information state is Rayleigh distributed, the side information gain $\Gamma$ is exponentially distributed. Then it can be seen
that $\bar{\gamma} =0$ and the optimal expected distortion becomes
\begin{IEEEeqnarray}{rCl}
ED^*_{Ray}(R)=\frac{1}{\mathrm{E}[\Gamma]}e^{\frac{2^{2R}}{\mathrm{E}[\Gamma]}}E_{1}\left(\frac{2^{2R}}{\mathrm{E}[\Gamma]}\right),
\end{IEEEeqnarray}
where $E_1(x)\triangleq \int_{x}^{\infty}t^{-1}e^{-t}dt$ is the
exponential integral \cite{ng2007minimum}.

Results in our paper are valid for discrete, i.e., finite or countable, number of
states $\gamma_i$, as well as continuous quasiconcave side information distributions. To
unify these results, we define the function $ED^*_s(R)$ as the minimum expected
distortion in the source coding problem for these three setups.

%%%%%%%%%%%%%%%%%%%%%%%%%%%%%%%%%%%%%%%%%%%%%%%%%%%%%%%%%%%%%%%%%%%%%%
\subsection{Static Channel and Fading Side Information}\label{sec:KnownState}
%%%%%%%%%%%%%%%%%%%%%%%%%%%%%%%%%%%%%%%%%%%%%%%%%%%%%%%%%%%%%%%%%%%%%%%%%

Consider a static noisy channel from $X^n$ to $Y^n$ of capacity $\mathrm{C}$. The side information is still block-fading as in (\ref{eq:SImodel}) with distribution $p_{\Gamma}(\gamma)$. Note that it is a joint source-channel coding generalization of the source coding problem in Section \ref{sec:SourceCoding}. We
denote the minimum expected distortion in the case of a static channel by $ED^*_{sta}$.
%
%First, we show that, when $p_{\gamma}(\gamma)$ has finite or
%countable number of states separate source and channel coding is
%optimal. This reduces the problem to the source coding problem of
%Section \ref{sec:SourceCoding} with $R=\mathcal{C}(h)$. Then, we use
%this result to generate a lower bound on $ED^*(h)$ when
%$p_{\gamma}(\gamma)$ is a quasiconcave and continuous distribution,
%by discretizing the channel states, and showing that this lower
%bound is achievable in the limit of this discretization.
%
%
%%%%%%%%%%%%%%%%%%%%%%%%%%%%%%%%%%%%%%%%%%%%%%%%%%%%%%%%%%%%%%%%%%%%%%%%%%%%%%%%%%%%%%%%%%%%
%\subsection{Enhancement channel with countable number of states}
%%%%%%%%%%%%%%%%%%%%%%%%%%%%%%%%%%%%%%%%%%%%%%%%%%%%%%%%%%%%%%%%%%%%%%%%%%%%%%%%%%%%%%%%%%%%%
Optimality of separate source and channel coding in this scenario can be proven for discrete or continuous quasiconcave side information distributions. This reduces the problem to the source coding problem of Section \ref{sec:SourceCoding} with $R=\mathrm{C}$.

\begin{theorem}\label{th:separation}
 Assume that the  channel is static with capacity $\mathrm{C}$. When the side information gain $\Gamma$ has a discrete or a continuous quasiconcave pdf, $p_{\Gamma}(\gamma)$, the minimum expected
distortion, $ED^*_{sta}$, is achieved by separate source and channel
coding, and is given by
\begin{IEEEeqnarray}{rCl}
ED^*_{sta}= ED_s^*(\mathrm{C}).
\end{IEEEeqnarray}
\end{theorem}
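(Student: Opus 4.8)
The plan is to establish the theorem by a matching converse and achievability argument, reducing the static-channel joint source-channel coding problem to the source coding problem of Section~\ref{sec:SourceCoding} with the rate constraint $R=\mathrm{C}$. The achievability direction is straightforward: since the channel is static with capacity $\mathrm{C}$ known at both terminals, concatenate an optimal Wyner-Ziv--type source code operating at rate $R=\mathrm{C}$ (the one achieving $ED_s^*(\mathrm{C})$, whether discrete-layered or single-layer quasiconcave as in \eqref{eq:SingleLayerCode}) with a capacity-achieving channel code. Standard random-coding and error-probability arguments show that the channel code delivers the source description reliably in the limit $n\to\infty$, so $ED^*_{sta}\leq ED_s^*(\mathrm{C})$.

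The converse is the substantive part. First I would fix a target side information state $\gamma_i$ (in the discrete case) or $\bar{\gamma}$ (in the quasiconcave case) and argue that, from the encoder's point of view, the channel output $Y^n$ together with the realized state acts as a finite-rate bottleneck: by a cut-set / data-processing argument, $I(S^n;Y^n,T^n_{i,1}) \leq I(X^n;Y^n) + I(S^n;T^n_{i,1}\,|\,X^n,Y^n)$, and since the channel is static the first term is at most $n\mathrm{C}$ while the side information contribution is exactly the genie term already present in the source coding problem. The key point is that the encoder does not know $\Gamma_c$, so a single codeword $X^n=f^n(S^n)$ must simultaneously serve all side information states; this is precisely the structure of the Heegard-Berger problem with degraded side information quoted in Section~\ref{sec:SourceCoding}. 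Thus any joint scheme induces, for each state, a "description" of rate at most $\mathrm{C}$ in the Wyner-Ziv sense, and the expected distortion is lower bounded by the optimal rate allocation problem solved in \cite{ng2007minimum}, i.e. $ED^*_{sta}\geq ED_s^*(\mathrm{C})$.

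The main obstacle I anticipate is making the converse rigorous for the \emph{continuous} quasiconcave case, where there is a continuum of side information states rather than finitely many, so one cannot directly invoke the finite Heegard-Berger converse. I would handle this by a quantization/limiting argument: discretize the support of $\Gamma$ into finitely many states, apply the discrete converse (yielding a bound in terms of the optimal convex rate-allocation program of \cite{ng2007minimum}), and then let the quantization become fine, using continuity of $ED_s^*$ in the state distribution together with the fact—established in \cite{ng2007minimum}—that for quasiconcave $p_\Gamma(\gamma)$ the optimal allocation collapses to the single-layer form \eqref{eq:SingleLayerCode} targeting $\bar\gamma=\gamma_l(\alpha^*)$. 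A secondary technical point is the standard conversion of a single-block, per-realization distortion bound into the $n\to\infty$ expected-distortion statement \eqref{eq:EDOptimal}; this is routine via Fano-type inequalities and the continuity of the quadratic rate-distortion function, and I would relegate it to a remark. Once both directions are in place, combining them gives $ED^*_{sta}=ED_s^*(\mathrm{C})$, with separate source and channel coding as an optimal strategy.
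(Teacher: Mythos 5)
Your proposal is correct and follows essentially the same route as the paper: achievability by concatenating the optimal Heegard--Berger/Wyner--Ziv source code at rate $\mathrm{C}$ with a capacity-achieving channel code, a converse for the discrete case via the single-letter condition $\mathrm{C}\geq R_{\mathrm{HB}}(\mathbf{D})$ obtained from the converses of \cite{Steinberg2006HierarchicalJointSourceChannel} and \cite{Berger1985SideInfMayBeAbsent}, and a discretization-plus-limiting argument for the continuous quasiconcave case. The only notable difference is in how the discretization is justified as a lower bound: the paper introduces a genie that upgrades the side information to the best state in each partition cell before invoking the convergence results of \cite{ng2007minimum}, whereas you appeal to continuity of $ED_s^*$ in the state distribution --- the same idea, stated slightly less explicitly.
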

\begin{proof}
 See Appendix \ref{app:separation} for a sketch of the proof.
\end{proof}

%%%%%%%%%%%%%%%%%%%%%%%%%%%%%%%%%%%%%%%%%%%%%%%%%%%%%%%%%%%%%%%%%%%%%%%%%%%%%%%%%%%%%%%%%%%
\section{Upper and Lower Bounds}\label{sec:UpperAndLowerBounds}
%%%%%%%%%%%%%%%%%%%%%%%%%%%%%%%%%%%%%%%%%%%%%%%%%%%%%%%%%%%%%%%%%%%%%%%%%%%%%%%%%%%%%%%%%%%%

In this section we return to the problem presented in Section
\ref{sec:SysMod} in which both the channel and the side information gains are block-fading. We construct two lower bounds on $ED^*$. The first one is obtained by informing the
encoder with both the channel and side information states ${H}$ and $\Gamma$. Then, we construct a tighter lower bound by informing the encoder
only with the channel state ${H}$. Next, we obtain upper bounds on $ED^*$ based on various achievable schemes. Comparison of the proposed upper and lower bounds in different regimes of operation is relegated to later sections.

\subsection{Informed Encoder Lower Bound}\label{ssec:InformedBound}
A trivial lower bound on $ED^*$, called the \emph{informed encoder lower bound}, is obtained by providing the
encoder with the instantaneous states of the channel and the side information. At each realization, the problem reduces to
the systematic model considered in \cite{Shamai:IT:98} (see also
\cite{Steinberg2006HierarchicalJointSourceChannel}), for which  separation holds. For states $(h,\gamma)$, the
optimal distortion is given by
$D_{\text{inf}}(h,\gamma)\triangleq(1+h)^{-1}(1+\gamma)^{-1}$. Hence, the informed encoder lower bound is given by
\begin{IEEEeqnarray}{rCl}\label{eq:EDInformedBound}
ED_{\text{inf}}^*=\mathrm{E}_{{H},\Gamma}[D_{\text{inf}}({H},\Gamma)].
\end{IEEEeqnarray}
\subsection{Partially Informed Encoder  Lower Bound}\label{ssec:PartiallyInformedBound}
We can obtain a tighter lower bound called the \emph{partially informed encoder lower bound}, by providing the encoder only with the channel realization $h$. For a given
channel realization $h$, the setup reduces to the one
considered in Section \ref{sec:KnownState}, and for a discrete or continuous quasiconcave $p_{\Gamma}(\gamma)$, separation
 applies for each channel realization.
\begin{lemma}\label{lem:GenlowerBoundSISO}
If $p_{\Gamma}(\gamma)$ is discrete or continuous quasiconcave, the minimum expected distortion is lower bounded by
\begin{IEEEeqnarray}{rCl}
ED^*_{pi}\triangleq \mathrm{E}_{{H}}[ED_{s}^*(\mathcal{C}({H}))]\label{eq:GenlowerBoundSISO},
\end{IEEEeqnarray}
where $\mathcal{C}(h)\triangleq\frac{1}{2}\log(1+h)$ is the capacity of the
 channel for a given realization $h=h_c^2$.
\end{lemma}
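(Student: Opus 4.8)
The plan is to derive the bound by a genie-augmentation argument: give the encoder access to the channel state $H$ (but not the side information state $\Gamma$), and argue that the resulting problem decomposes over channel realizations into the static-channel problem already solved in Theorem~\ref{th:separation}. First I would fix a sequence of encoder/decoder pairs $(f^n,g^n)$ achieving expected distortion close to $ED^*$, and note that any such scheme is also a valid scheme for the modified problem in which the encoder additionally knows $h$; hence the minimum expected distortion of the modified problem is a lower bound on $ED^*$. The key observation is that conditioned on $\{H = h\}$, the modified problem is exactly the static-channel-with-fading-side-information problem of Section~\ref{sec:KnownState}, with channel capacity $\mathcal{C}(h) = \tfrac{1}{2}\log(1+h)$, since a static Gaussian channel with gain $h$ and unit-variance noise has exactly this capacity under the unit average power constraint.

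Next I would invoke Theorem~\ref{th:separation}: because $p_{\Gamma}(\gamma)$ is assumed discrete or continuous quasiconcave, separation is optimal for each fixed channel realization, and the minimum expected distortion conditioned on $H=h$ is exactly $ED_s^*(\mathcal{C}(h))$. Taking the expectation over $H$ — using the tower property of expectation, $\mathrm{E}[D] = \mathrm{E}_{H}\big[\mathrm{E}[D \mid H]\big]$ — gives
\begin{IEEEeqnarray}{rCl}
ED^* \;\geq\; \mathrm{E}_{H}\big[ED_s^*(\mathcal{C}(H))\big] \;=\; ED^*_{pi},
\end{IEEEeqnarray}
which is the claimed bound. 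One subtlety to address carefully is the interchange of the $n\to\infty$ limit in the definition \eqref{eq:EDOptimal} with the expectation over $H$: since the per-realization distortion is uniformly bounded (the source has unit variance, so $D \leq 1$ for the MMSE-type reconstruction, or one can truncate), dominated convergence justifies passing the limit inside, so that the asymptotic minimum expected distortion of the genie-aided problem is indeed $\mathrm{E}_H[ED_s^*(\mathcal{C}(H))]$.

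I expect the main obstacle to be the measure-theoretic bookkeeping when $H$ is continuous: one must argue that an optimal (or near-optimal) scheme for the partially informed problem can be taken to be "measurable in $h$" and that the conditional distortion it achieves genuinely attains $ED_s^*(\mathcal{C}(h))$ for (almost) every $h$, rather than just on average. This is handled by a standard selection/approximation argument — for each rational target distortion level, choose a scheme within $\epsilon$ of $ED_s^*(\mathcal{C}(h))$ and patch these together — but it is the only place where care beyond routine manipulation is needed. The rest is a direct application of Theorem~\ref{th:separation} together with the fact that conditioning on the channel state cannot hurt the encoder, so the genie-aided problem lower-bounds the original one.
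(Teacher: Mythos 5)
Your argument is correct and is essentially the paper's own (the paper states this lemma without a formal proof, relying on exactly the genie/conditioning reduction to Theorem~\ref{th:separation} that you spell out). The only remark worth making is that your final measurability/selection concern pertains to the achievability of $\mathrm{E}_H[ED_s^*(\mathcal{C}(H))]$ by a partially informed encoder, which is not needed for the lemma as stated: the lower bound on $ED^*$ only requires the converse direction, i.e., that any scheme's conditional distortion given $H=h$ is at least $ED_s^*(\mathcal{C}(h))$, followed by the tower property and dominated convergence as you describe.
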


Providing only the side information state to the encoder does not lead to a tight computable lower bound, since the optimality of separate source and channel coding does not hold in this case.
Although the partially informed encoder lower bound is tighter, we will include the informed encoder bound in our results, as it provides a benchmark for the performance when both channel and side information states are available at the transmitter, and sheds light on the value of the CSI feedback for this joint source-channel coding problem.

%%%%%%%%%%%%%%%%%%%%%%%%%%%%%%%%%%%%%%%%%%%%%%%%%%%%%%%%%%%%%%%%%%%%%%%%%%%%%%%%%%%%%%%%%%%%%%%%%%%%%%%%%%%%%%%%%%%%%%%%%%%%%%%%%%%%%%%%%%
\subsection{Uncoded Transmission}\label{sec:UncTr}
%%%%%%%%%%%%%%%%%%%%%%%%%%%%%%%%%%%%%%%%%%%%%%%%%%%%%%%%%%%%%%%%%%%%%%%%%%%%%%%%%%%%%%%%%%%%%%%%%%%%%%%%%%%%%%%%%%%%%%%%%%%%%%%%%%%%%%%%%%

Uncoded transmission is a memoryless zero-delay transmission
scheme in which the channel input $X_i$ is generated by scaling the
source signal $S_i$ while satisfying the power constraint. In our model
both the source variance and power constraint of the encoder are
$1$, and hence, no scaling is needed, i.e., $X_i=S_i$. The received signal from the channel is then given by
%\begin{IEEEeqnarray}{rCl}
$Y_i= h_c S_i+Z_i,$ $i=1,...,n.$
%\end{IEEEeqnarray}
The receiver reconstructs each component with a minimum mean-squared error (MMSE)\footnote{
For available data vector $\mathbf{A}\sim \mathcal{N}(0,\mathbf{C}_a)$, the MMSE in estimating  the Gaussian vector $\mathbf{X}\sim \mathcal{N}(0, \mathbf{C}_x)$ is achieved with the estimator $\hat{\mathbf{X}}=\mathrm{E}[\mathbf{X}|\mathbf{A}]$ and is given by $D_{\text{MMSE}}\triangleq(\mathbf{C}_x+\mathbf{C}_{xa}^H\mathbf{C}_a^{-1}\mathbf{C}_{xa})^{-1}$, where $\mathbf{C}_{xa}\triangleq \mathrm{E}[\mathbf{AX}^H]$ \cite{elGamal:book}.}
 estimator using both the channel output and the side information sequence, i.e., $\hat{S}_i=\mathrm{E}[S_i|Y_i,T_i]$, $i=1,...,n$. The
distortion of source component $S_i$ for channel and side information gains $h$ and $\gamma$ is given by $D_{u}(h,\gamma)\triangleq(1+h+\gamma)^{-1}$.
The achievable average distortion with uncoded transmission is given by
\begin{IEEEeqnarray}{rCl}
ED_u&=&\mathrm{E}_{{H},\Gamma}[D_{u}({H},\Gamma)].
\end{IEEEeqnarray}

%Uncoded transmission is a zero delay transmission scheme in which
%each channel input $U_i$ is generated by scaling the source signal
%$X_i$ to satisfy the power constraint. At the receiver each
%component is reconstructed with an MMSE estimator using both the
%enhancement and base channel outputs at a distortion
%$D_{u}(h,\gamma)=(1+h+\gamma)^{-1}$. Averaging over the channel
%realizations, we have
%\begin{IEEEeqnarray}{rCl}\label{eq:EDU}
%ED_u&=&E_{H\Gamma}[D_{u}(H,\Gamma)].%=\int_{h,\gamma} D_u(h,\gamma)
%%p_{\gamma}(\gamma)p_h(h)d\gamma dh \nonumber.
%\end{IEEEeqnarray}

%%%%%%%%%%%%%%%%%%%%%%%%%%%%%%%%%%%%%%%%%%%%%%%%%%%%%%%%%%%%%%%%%%%%%%%%%%%%%%%%%%%%%%%%%%%%%%%%%%%%%%%%%%%%%%%%%%%%%%%%%%%%%%%%%%%%%%%%%%
\subsection{Separate Source and Channel Coding (SSCC)}\label{ssec:SeparateBinning}
%%%%%%%%%%%%%%%%%%%%%%%%%%%%%%%%%%%%%%%%%%%%%%%%%%%%%%%%%%%%%%%%%%%%%%%%%%%%%%%%%%%%%%%%%%%%%%%%%%%%%%%%%%%%%%%%%%%%%%%%%%%%%%%%%%%%%%%%%%
In SSCC a single
layer Wyner-Ziv source code is followed by a channel code. Note that due to the lack of state information at the transmitter the rates of the source and channel codes
are independent of the channel and side information states.

The quantization codebook consists of $2^{n(R_c+R_s)}$ length-$n$
codewords, $W^n(i)$, $i=1,...,2^{n(R_c+R_s)}$, generated through a
`test channel' $W=S+Q$, where
$Q\sim\mathcal{N}(0,\sigma_Q^2)$ is independent of $S$. The
quantization noise variance is chosen such that $R_s+R_c=I(S;W)+\epsilon$, for an arbitrarily small $\epsilon>0$,
i.e., $\sigma^2_Q=(2^{2(R_s+R_c-\epsilon)}-1)^{-1}$. The generated quantization
codewords are then uniformly distributed into $2^{nR_c}$ bins. Each bin index $s$ is assigned to an independent Gaussian channel codeword $X^n(s)$, $s\in[1,..., 2^{nR_c}]$, generated i.i.d.  with $X\sim\mathcal{N}(0,1)$.
%
%a
%Gaussian channel codebook with $2^{nRc}$ length-$n$ codewords
%$U^n(s)$ is generated independently with $U\sim\mathcal{N}(0,1)$,
%and the codeword $U^n(s)$, $s\in[1,..., 2^{nR_c}]$, is assigned to
%the bin index $s$. 
%Given a source realization $X^n$, the encoder searches for a
%quantization codeword $W^n(i)$ that is jointly
%typical\footnote{For definition and properties of typicality and jointly typical random variables we refer the reader to \cite{Cover:book}. Although the typicality arguments do not
%directly apply to Gaussian distributions  due to its continuous alphabet, they can be extended using conventional discretization arguments \cite{elGamal:book}.} with $X^n$. Assuming one such
%codeword is found, the channel codeword $U^n(s)$ is transmitted over
%the channel, where $s$ is the bin index of $W^n(i)$.
Given source realization $S^n$, the encoder searches for a
codeword $W^n(i)$  jointly
typical\footnote{For definition and properties of typicality and jointly typical random variables we refer the reader to \cite{Cover:book}.} with $s^n$, and transmits the corresponding channel codeword $X^n(s)$, where $s$ is the bin index of $W^n(i)$.

At reception, the decoder tries to recover the bin index $s$ using the  channel output $Y^n$, and then looks for a quantization codeword within the estimated bin, that is jointly typical with the side information
sequence $T^n$. If the quantization codeword $W^n$ is successfully decoded, then
$\hat{S}^n$ is reconstructed with an optimal MMSE estimator as $\hat{S}_i=\mathrm{E}[S_i|T_i,W_i]$ for
$i=1,...,n$. 
%
%At reception, the bin index $s$ is recovered with high probability
%using the  channel output $V^n$ if,
%\vspace{-2mm}
%\begin{IEEEeqnarray}{rCl}\label{eq:rates}
%R_c< I(U;V).
%\end{IEEEeqnarray}
%The decoder then looks for a quantization codeword within the
%estimated bin, that is jointly typical with the side information
%sequence $Y^n$. If the bin index is correct, the correct codeword will be decoded with high
%probability if,\vspace{-1mm}
%\begin{IEEEeqnarray}{rCl}\label{eq:rates2}
%R_c>I(X;W|Y).
%\end{IEEEeqnarray}
%If the quantization codeword $W^n$ is successfully decoded, then
%$\hat{X}^n$ is reconstructed with an optimal MMSE estimator as $\hat{X}_i=\mathrm{E}[X_i|Y_i,W_i]$ for
%$i=1,...,n$.
An outage is declared whenever, due to the randomness of the
channel or the side information, the quantization codebook
cannot be correctly decoded. The outage event is given by
\begin{IEEEeqnarray}{rCl}\label{eq:OutageSetsBinning}
\mathcal{O}_{s}&\triangleq&\{(h,\gamma):R_c\geq I(X;Y)
\text{ or }R_c\leq I(S;W|T)\},\nonumber
\end{IEEEeqnarray}
where $I(S;W|T)=\frac{1}{2}\log\left(1+(2^{2(R_s+R_c+\epsilon)}-1)/(\gamma+1)\right)$ and  $I(X;Y)=\frac{1}{2}\log(1+h)$.

For a quantization rate is $R$, if the quantization codeword is decoded correctly, and the side information state is $\gamma$, the achieved distortion is
\begin{IEEEeqnarray}{rCl}\label{eq:DigiDist}
 D_d(R,\gamma)\triangleq(\gamma+2^{2R})^{-1},
\end{IEEEeqnarray}

If an outage occurs, only the
side information sequence is used to estimate the source, and we have $\hat{S}_i=\mathrm{E}[S_i|T_i]$, and the achievable distortion is given by
$D_d(0,\gamma)$. Then, the expected distortion of SSCC is given by
\begin{IEEEeqnarray}{rCl}\label{eq:SepED}
ED_{s}(R_s,R_c)&=&\mathrm{E}_{\mathcal{O}^c_{s}}[D_d(R_s+R_c,\Gamma)]+\mathrm{E}_{\mathcal{O}_{s}}[D_d(0,\Gamma)]\nonumber,
%&=&\!\int_{\mathcal{O}^c_{s}}\!\!D_d(R_s+R_c,\gamma)p_{{H}}(h)p_{\Gamma}(\gamma)dh d\gamma\nonumber\\
%&+&\!\int_{\mathcal{O}_{s}}D_d(0,\gamma)p_{{H}}(h)p_{\Gamma}(\gamma)dh d\gamma,
\end{IEEEeqnarray}
where $\mathcal{O}^c_{s}$ is the complement of the outage event.
%defined as
%\begin{IEEEeqnarray}{rCl}\label{eq:OutageSetsBinning}
%\mathcal{O}_{s}&\triangleq&\{(h,\gamma):R_c\geq I(U;V)
%\text{ or }R_c\leq I(X;W|Y)\},\nonumber
%\end{IEEEeqnarray}
%where $I(X;W|Y)=\frac{1}{2}\log\left(1+(2^{2(R_s+R_c+\epsilon)}-1)/(\gamma+1)\right)$ and  $I(U;V)=\frac{1}{2}\log(1+h)$.

Since $R_s$ and $R_c$ are fixed for all channel and side information states, they are chosen to minimize the expected distortion. Thus, we have
\begin{IEEEeqnarray}{rCl}\label{eq:OptEDBin}
ED_{s}^*\triangleq\min_{R_c,R_s} ED_{s}(R_s,R_c).
\end{IEEEeqnarray}
When the side information has a continuous quasiconcave gain
distribution, we can have a closed-form expression for the optimal
source coding rate $R_s$, as given in the next lemma.
\begin{lemma}\label{lem:OptLay}
For a given $R_c$, if $p_{\Gamma}(\gamma)$ is continuous and quasiconcave, $ED_{s}(R_s,R_c)$ is minimized
by setting $R_s=\frac{1}{2}\log(1+(1+\bar{\gamma} )(2^{2R_c}-1))-R_c+\epsilon$ where
$\bar{\gamma} $ is the solution to (\ref{eq:OptGamma}).
\end{lemma}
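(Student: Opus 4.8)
The plan is to hold $R_c$ fixed and reduce the minimization over $R_s$ to the single-layer source-coding optimization already carried out in Section~\ref{sec:SourceCoding}, and then invert the resulting rate relation. First I would split $ED_s(R_s,R_c)$ according to whether the channel is in outage. Since $H$ and $\Gamma$ are independent and the channel part of the outage condition in $\mathcal{O}_s$, namely $\frac{1}{2}\log(1+H)\le R_c$ (equivalently $H\le 2^{2R_c}-1$), involves neither $\Gamma$ nor $R_s$, I can write
\[ED_s(R_s,R_c)=\Pr[H\le 2^{2R_c}-1]\,\mathrm{E}_\Gamma\!\big[(1+\Gamma)^{-1}\big]+\Pr[H> 2^{2R_c}-1]\,\Phi(R_s),\]
where $\Phi(R_s)$ is the expected distortion conditioned on the channel not being in outage; indeed, when the channel is in outage only the side information is used, yielding $D_d(0,\Gamma)=(1+\Gamma)^{-1}$, which does not depend on $R_s$. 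Since the first term and both probabilities are fixed once $R_c$ is fixed, minimizing $ED_s(R_s,R_c)$ over $R_s$ is the same as minimizing $\Phi(R_s)$.

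Next I would identify $\Phi(R_s)$ with a source-coding expected distortion. Conditioned on the channel not being in outage, the bin index is delivered reliably, so the scheme is operationally a single-layer Wyner--Ziv/Heegard--Berger source code over a noiseless rate-$R_c$ link. Such a code targets the side-information state $\bar\gamma$ at which the bin can just be resolved, i.e.\ $R_c=I(S;W\mid T_{\bar\gamma})$; solving this Gaussian identity for the test-channel variance and using $1+\sigma_Q^{-2}=2^{2(R_s+R_c-\epsilon)}$ gives the one-to-one relation $2^{2(R_s+R_c-\epsilon)}=1+(1+\bar\gamma)(2^{2R_c}-1)$, which is increasing in $R_s$, so optimizing over $R_s$ is equivalent to optimizing over the target state $\bar\gamma$. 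Under this correspondence $\Phi(R_s)$ matches the integrand expression in (\ref{eq:SingleLayerCode}) at link rate $R=R_c$ term by term: for $\gamma\ge\bar\gamma$ one has $D_d(R_s+R_c,\gamma)=\big((\bar\gamma+1)2^{2R_c}+\gamma-\bar\gamma\big)^{-1}$, while for $\gamma<\bar\gamma$ the decoder is in side-information outage and achieves $D_d(0,\gamma)=(1+\gamma)^{-1}$. The step I expect to require the most care is this identification: making precise that, conditioned on no channel outage, the scheme is exactly the single-layer source code of \cite{ng2007minimum} over a rate-$R_c$ link, that the target state $\bar\gamma$ appearing in $\mathcal{O}_s$ is the very one appearing in (\ref{eq:SingleLayerCode}), and that $R_s\mapsto\bar\gamma$ is an increasing bijection onto the relevant range, so that the two optimizations genuinely coincide.

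Finally, since $p_\Gamma$ is continuous and quasiconcave, the result of \cite{ng2007minimum} recalled in Section~\ref{sec:SourceCoding} states that this source-coding distortion is minimized by the single target $\bar\gamma=\gamma_l(\alpha^*)$, where $\alpha^*$ solves (\ref{eq:OptGamma}) evaluated at $R=R_c$. Inverting the rate relation from the previous paragraph then gives $R_s=\frac{1}{2}\log\!\big(1+(1+\bar\gamma)(2^{2R_c}-1)\big)-R_c+\epsilon$, which is the claimed minimizer.
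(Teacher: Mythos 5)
Your proof is correct and takes essentially the same route as the paper's: fix $R_c$, reduce the minimization over $R_s$ to the single-layer source-coding problem of Section~\ref{sec:SourceCoding} at link rate $R_c$, invoke the quasiconcavity result of \cite{ng2007minimum} to identify the optimal target state $\bar{\gamma}$, and invert the relation $R_c=I(S;W|T=\sqrt{\bar{\gamma}}S+Z)$ to recover $R_s$. The paper compresses this into one sentence, whereas you make explicit the two supporting facts it leaves implicit --- that the channel-outage contribution is independent of $R_s$ and that $R_s\mapsto\bar{\gamma}$ is an increasing bijection --- but the argument is the same.
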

\begin{proof}
Once $R_c$ is fixed, it
follows from the results in Section \ref{sec:SourceCoding} that
$ED_{s}(R_s,R_c)$ is minimized by compressing the source to a
single layer targeted for side information state $\bar{\gamma} $,
i.e.,
$R_c=I(S;W|T\!=\!\bar{\gamma}S+Z)\!=\!\frac{1}{2}\log\left(1+(2^{2(R_s+R_c-\epsilon)}-1)/(1+\bar{\gamma})\right)$,
from where $R_s$ is obtained.
\end{proof}

%We can reduce the complexity of SSCC by having only a
%single codeword in each bin, that is, by letting $R_s=0$. This way, we
%get rid of the outage event corresponding to a poor side information gain
%realization. However, to achieve the same quantization noise, we
%need to transmit at a higher rate over the channel, which increases
%the channel outage probability. Without binning, the minimum expected distortion is found as $ED^*_{nb}\triangleq \min_{R_c}ED_{s}(0,R_c)$.

When the side information fading distribution is such that $\bar{\gamma} =0$, then, from Lemma \ref{lem:OptLay}, the optimal source coding rate is $R_s=0$, i.e., the minimum expected distortion is achieved by ignoring the decoder side information in the encoding process.
\begin{corollary}
 If $\bar{\gamma} =0$, the optimal SSCC does not utilize binning, that is, $R_s^*=0$.
\end{corollary}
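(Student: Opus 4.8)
The plan is to obtain this corollary directly from Lemma~\ref{lem:OptLay} by specializing to $\bar{\gamma}=0$. First I would note that the joint minimization in (\ref{eq:OptEDBin}) factors as $ED_s^*=\min_{R_c}\min_{R_s}ED_s(R_s,R_c)$, so it suffices to control the inner minimization over $R_s$ for an arbitrary fixed $R_c$ and then check that the conclusion survives the outer minimization.

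Next, since $p_{\Gamma}(\gamma)$ is continuous and quasiconcave, Lemma~\ref{lem:OptLay} applies and the inner minimizer is $R_s=\frac{1}{2}\log\bigl(1+(1+\bar{\gamma})(2^{2R_c}-1)\bigr)-R_c+\epsilon$, with $\bar{\gamma}$ the solution of (\ref{eq:OptGamma}). Substituting $\bar{\gamma}=0$, the argument of the logarithm collapses to $1+(2^{2R_c}-1)=2^{2R_c}$, so $R_s=\frac{1}{2}\log 2^{2R_c}-R_c+\epsilon=R_c-R_c+\epsilon=\epsilon$. Hence, for \emph{every} $R_c$ the optimal source coding rate equals $\epsilon$; letting $\epsilon\downarrow 0$ (consistent with the way $\epsilon$ is used throughout the SSCC construction) gives $R_s^{*}=0$. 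Because this value of $R_s$ is optimal uniformly in $R_c$, it remains optimal after the outer minimization over $R_c$, so the order of optimization is immaterial.

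Finally, I would interpret $R_s^{*}=0$ in terms of the scheme: the quantization codebook is no longer partitioned into bins (there are $2^{nR_c}$ codewords and $2^{nR_c}$ bins), i.e., no Wyner--Ziv binning is performed and the decoder side information is not exploited during encoding, which is precisely the claim of the corollary and matches the remark immediately preceding it. There is essentially no obstacle here; the only point requiring a line of care is the uniformity-in-$R_c$ observation, which guarantees that minimizing over $R_c$ cannot reintroduce a strictly positive optimal $R_s$.
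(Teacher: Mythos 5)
Your proposal is correct and follows essentially the same route as the paper: the corollary is obtained by substituting $\bar{\gamma}=0$ into the optimal-rate formula of Lemma~\ref{lem:OptLay}, which collapses to $R_s=\epsilon\rightarrow0$ for every $R_c$, so the outer minimization over $R_c$ cannot reintroduce a positive $R_s$. Your added remark on uniformity in $R_c$ is a welcome (if minor) tightening of what the paper leaves implicit.
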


Note that we have  considered only a single layer source coding scheme since for continuous quasiconcave $p_{\Gamma}(\gamma)$, the optimal source code uses a single source code layer. However, in the case of a discrete side information gain distribution, the optimal source code employs multiple source layers, one layer targeting each side information state \cite{ng2007minimum}. 
%For a channel code at rate $R_c$, the achievable expected distortion can be obtained similarly to the scheme described in this section, using $ED_F(R_c)$ and $ED_C(R_c)$ in (\ref{eq:GaussEDDisc}), for finite and countable number of side information states, respectively.
%%%%%%%%%%%%%%%%%%%%%%%%%%%%%%%%%%%%%%%%%%%%%%%%%%%%%%%%%%%%%%%%%%%%%%%%%%%%%%%%%%%%%%%%%%%%%%%%%%%%%%%%%%%%%%%%%%%%%%%%%%%%%%%%%%%%%%%%%%
%%%%%%%%%%%%%%%%%%%%%%%%%%%%%%%%%%%%%%%%%%%%%%%%%%%%%%%%%%%%%%%%%%%%%%%%%%%%%%%%%%%%%%%%%%%%%%%%%%%%%%%%%%%%%%%%%%%%%%%%%%%%%%%%%%%%%%%%%%
\subsection{Joint Decoding Scheme (JDS)}\label{ssec:Joint}
%%%%%%%%%%%%%%%%%%%%%%%%%%%%%%%%%%%%%%%%%%%%%%%%%%%%%%%%%%%%%%%%%%%%%%%%%%%%%%%%%%%%%%%%%%%%%%%%%%%%%%%%%%%%%%%%%%%%%%%%%%%%%%%%%%%%%%%%%%
Here, we consider a source-channel coding scheme that does not
involve any explicit binning at the encoder and uses joint decoding
to reduce the outage probability. This coding scheme is introduced in
\cite{tuncel2006slepian} in the context of broadcasting a common
source to multiple receivers with different side information
qualities, and it is shown to be optimal in the case of lossless
broadcasting over a static channel. The success of the decoding
process depends on the joint quality of the channel and
the side information states.

At the encoder, a codebook of $2^{nR_j}$ length-$n$ quantization
codewords $W^n(i)$, $i=1,...,2^{nR_j}$, are generated through a `test
channel' $W=S+Q$, where $Q\sim\mathcal{N}(0,\sigma_Q^2)$ is
independent of $S$. The quantization noise variance is chosen such
that $R_j=I(S;W)+\epsilon$, for an arbitrarily small $\epsilon>0$. Then, an independent Gaussian codebook of size
$2^{nR_j}$ is generated with length-$n$ codewords $X^n(i)$ with
$X\sim \mathcal{N}(0,1)$. Given a source outcome $S^n$, the
transmitter finds the quantization codeword $W^n(i)$ jointly typical
with the source outcome and transmits the corresponding channel
codeword $X^n(i)$ over the channel. At reception, the decoder looks for an index $i$ for which both
$(x^n(i),Y^n)$ and $(T^n,w^n(i))$ are jointly typical. Then the
outage event is given by
\begin{IEEEeqnarray}{rCl}\label{eq:OutageSets}
\mathcal{O}_j&\triangleq&\{(h, \gamma):I(S;W|T)\geq
I(X;Y)\},
\end{IEEEeqnarray}
where $I(S;W|T)=\frac{1}{2}\log\left(1+(2^{2(R_j-\epsilon)}-1)/(\gamma+1)\right)$ and  $I(X;Y)=\frac{1}{2}\log(1+h)$.

If decoding is successful, 
$S^n$ is estimated using both the quantization codeword and the side information sequence, while if an outage occurs, $S^n$ is
reconstructed using only the side information sequence. The expected distortion for the JDS scheme is found as
\begin{IEEEeqnarray}{rCl}\label{eq:Dj}
ED_{j}(R_{j})&=&\mathrm{E}_{\mathcal{O}^c_j}[D_d(R_j,\Gamma)]+\mathrm{E}_{\mathcal{O}_j}[D_d(0,\Gamma)].
\end{IEEEeqnarray}
Similarly to (\ref{eq:OptEDBin}), the expected distortion can be optimized over $R_j$ to obtain the minimum expected distortion achieved by JDS, that is, $ED^*_j\triangleq \min_{R_j}ED_j(R_j)$.

In SSCC, the quantization codeword is successfully decoded only if both the channel and source codewords are successfully decoded. On the other hand,
JDS decodes the quantized codeword exploiting the joint quality of
 the channel and side information. Hence, a bad channel realization can be
compensated with a sufficiently good side information realization, or vice versa, reducing the outage probability. Indeed, the minimum expected distortion of JDS is always lower than that of SSCC, as stated in the next lemma.

\begin{lemma} \label{lemm:EDsbgeqEDx}
For any given $p_{{H}}(h)$ and $p_{\Gamma}(\gamma)$, JDS outperforms SSCC at any SNR, i.e., $ED_{s}^*\geq ED^*_j$.
\end{lemma}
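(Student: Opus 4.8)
\medskip

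The plan is to prove the stronger statement that for \emph{every} choice of SSCC rates $(R_s,R_c)$ there is a JDS scheme with $ED_j(R_j)\le ED_s(R_s,R_c)$; minimizing the right-hand side over $(R_s,R_c)$ then yields $ED_j^*\le ED_s^*$. Given an SSCC scheme, the JDS scheme I would use keeps the \emph{same} quantization codebook, i.e., the same test channel $W=S+Q$ with the same $\sigma_Q^2$. With this choice the two schemes share all source-code quantities: the same $I(S;W)$ and, for every side-information gain $\gamma$, the same conditional information $I(S;W|T\!=\!\gamma)=:q(\gamma)$. Using the rate conventions $R_s+R_c=I(S;W)+\epsilon$ for SSCC and $R_j=I(S;W)+\epsilon$ for JDS, this forces $R_j=R_s+R_c$, so that when the quantization codeword is recovered both schemes achieve the \emph{same} distortion $D_d(R_s+R_c,\gamma)$, and when it is not both achieve $D_d(0,\gamma)$.

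The key step is the set inclusion $\mathcal{O}_j\subseteq\mathcal{O}_s$ for this matched pair. With $\mathcal{C}(h)=\frac{1}{2}\log(1+h)=I(X;Y)$ we have $\mathcal{O}_s=\{(h,\gamma):R_c\ge\mathcal{C}(h)\text{ or }R_c\le q(\gamma)\}$ and $\mathcal{O}_j=\{(h,\gamma):q(\gamma)\ge\mathcal{C}(h)\}$. If $(h,\gamma)\notin\mathcal{O}_s$, then both $R_c<\mathcal{C}(h)$ and $q(\gamma)<R_c$, hence $q(\gamma)<\mathcal{C}(h)$, i.e., $(h,\gamma)\notin\mathcal{O}_j$; taking the contrapositive gives the inclusion. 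Intuitively, SSCC succeeds only when the fixed rate $R_c$ simultaneously fits below the channel capacity and above the binning rate $q(\gamma)$, whereas JDS needs only $q(\gamma)\le\mathcal{C}(h)$; in particular JDS also succeeds on realizations where a weak channel is offset by strong side information (or vice versa) and no admissible slot for $R_c$ exists.

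Given the inclusion, I would compare the per-realization distortions on the partition $\mathcal{O}_s^c$, $\mathcal{O}_s\cap\mathcal{O}_j^c$, $\mathcal{O}_j$: on $\mathcal{O}_s^c\subseteq\mathcal{O}_j^c$ both schemes attain $D_d(R_s+R_c,\gamma)$; on $\mathcal{O}_j$ (which lies in $\mathcal{O}_s$) both attain $D_d(0,\gamma)$; and on $\mathcal{O}_s\cap\mathcal{O}_j^c$ the SSCC distortion is $D_d(0,\gamma)$ while JDS attains $D_d(R_s+R_c,\gamma)\le D_d(0,\gamma)$, since $D_d(\cdot,\gamma)$ is nonincreasing. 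Hence the JDS distortion is pointwise no larger, and taking expectation over $({H},\Gamma)$ gives $ED_j(R_s+R_c)\le ED_s(R_s,R_c)$. Therefore $ED_j^*=\min_{R_j}ED_j(R_j)\le ED_j(R_s+R_c)\le ED_s(R_s,R_c)$, and minimizing over $(R_s,R_c)$ gives $ED_j^*\le ED_s^*$.

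I expect no serious obstacle: the argument reduces to a pointwise comparison plus the monotonicity of $D_d(\cdot,\gamma)$. The only points that need care are (i) aligning the $\epsilon$-margins of the two rate constraints, which I sidestep by literally reusing the SSCC quantization codebook in the JDS construction, and (ii) checking that this construction is an admissible instance of JDS, which it is, since any test-channel variance is allowed there with $R_j:=I(S;W)+\epsilon$.
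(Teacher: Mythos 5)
Your proposal is correct and follows essentially the same route as the paper: set $R_j=R_s+R_c$ with the same test channel, establish the outage-set inclusion $\mathcal{O}_j\subseteq\mathcal{O}_s$, and conclude by comparing distortions on the partition induced by the two outage events. Your contrapositive derivation of the inclusion and the explicit use of the monotonicity of $D_d(\cdot,\gamma)$ on $\mathcal{O}_s\cap\mathcal{O}_j^c$ are slightly more carefully spelled out than in the paper, but the argument is the same.
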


\begin{proof}
Consider the SSCC scheme as in Section \ref{ssec:SeparateBinning} with rates $R_c$ and $R_s$. We will show that the JDS scheme with rate $R_j=R_s+R_c$ achieves a lower expected distortion, i.e., 
$ED_{s}(R_c,R_s)\geq ED_j(R_c+R_s)$. If both schemes are in outage, or if the quantization codeword is decoded successfully in both, they achieve the same distortion. Thus, to prove our claim, it will suffice to show that $\mathcal{O}_{s}\supseteq \mathcal{O}_j$.

 Let $(h,\gamma)$ be such that $R_c\geq I(U;V)=\frac{1}{2}\log(1+h)$, i.e., SSCC is in outage. Note that for  given $(h,\gamma)$, $R_s$ and $R_c$, $I(U;V)$
and $I(W;X|Y)$ have the same values for both schemes. However,
 if $I(W;X|Y)< I(U;V)$, JDS is able to decode the quantization codeword
successfully while SSCC would still be in outage. This condition is satisfied whenever $\frac{1}{2}\log\left(1+\frac{2^{2(R_j-\epsilon)}-1}{\gamma+1}\right)< \frac{1}{2}\log
(1+h)$, or equivalently $\gamma> \frac{2^{2(R_j-\epsilon)}-1}{h}-1$. If this condition does not hold, both schemes are in outage and have the same performance. Conversely, if JDS is in outage, i.e., $I(W;X|Y)\geq I(U;V)$, then SSCC is also in outage since either $R_c\geq I(U;V)$ or $R_c\leq I(U;V)\leq I(W;X|Y)$ holds. Therefore, we have $\mathcal{O}_{s}\supseteq\mathcal{O}_j$, which implies $ED_{s}(R_c,R_b)\geq ED_j(R_c+R_b)$ and $ED_{s}^*\geq ED^*_j$. This completes the proof.
\end{proof}

\subsection{Superposed Hybrid Digital-Analog Transmission (S-HDA)}\label{ssec:S-HDA}
Next, we consider hybrid digital-analog transmission, in which the channel input is generated by a symbol-by-symbol mapping of the observed source sequence and its compression codeword, that is, an analog and a digital signal, respectively. A general HDA scheme is studied in \cite{Minero2015:Hybrid} in the absence of side information for static channels. The necessary conditions on the achievable distortion are derived based on auxiliary random variables, by accounting for the correlation in the indexes of source and channel codebooks. This general hybrid scheme is detailed next.  

Fix a conditional distribution $p(w|s)$, an encoding function $x(s,w)$ and a reconstruction function $\hat{s}(w,y,t)$. At the encoder, generate a codebook of $2^{nR_h}$ length-$n$ codewords $W^n(m)$, $m=1,...,2^{nR_h}$ with i.i.d. components following $p(w)$. The transmitter finds $W^n(m)$ jointly typical with $S^n$ and maps $(S^n,W^n(m))$ symbol-by-symbol to the channel input sequence $X^n$ with the encoding function $x(s,w)$, i.e., $x_i=x(s_i,w_i(m))$, $i=1,...,n$. Upon receiving $Y^n$, the decoder looks for the codeword $W^n$ that is jointly typical with $Y^n$ and the side information $T^n$, and reconstructs $\hat{S}^n$ by mapping symbol-by-symbol the decoded codeword $W^n(\hat{m})$, the analog channel output $Y^n$ and the side information $T^n$  using the reconstruction function $\hat{s}(w,y,t)$. In our setup, the side information, i.e., $T^n=\gamma_c S^n+Z^n$, can be modeled as a channel output correponding to input $S^n$. Then, it follows from \cite{Minero2015:Hybrid} that a distortion $D_h$ is achievable if
\begin{IEEEeqnarray}{rCl}\label{eq:HybDecCond}
I(S;W)<I(W;YT).
\end{IEEEeqnarray} 
holds for some conditional distribution $p(x|s)$, an encoding function $x(s,w)$ and a reconstruction function $\hat{s}(w,y,t)$ such that $\mathrm{E}[(S-\hat{S})^2]\leq D_h$.

In general, it is complicated to characterize the optimal $W$ and channel mappings $x(s,w)$ minimizing the distortion. Here, we propose a particular construction for the time-varying setup, which we denote by superposed hybrid digital-analog transmission (S-HDA), in which source sequence is quantized, and the quantization error is superposed on the source sequence. The power is allocated among the two layers. The uncoded component causes an interference correlated with the source sequence, and acts as side information at the decoder. On the contrary, if an outage occurs and the quantization codeword is not successfully decoded, the analog component provides additional robustness since the channel now contains a noisy uncoded version of the source sequence useful for the reconstruction.
We consider $W^n$, generated using a test channel $W\triangleq \eta S+Q$, where $Q\sim \mathcal{N}(0,1)$ is independent of $S$ and a channel input mapping $x(s,w)$ such that  
\begin{align}\label{eq:HDAChaninput}
X^n=\sqrt{P_d}(W^n-\eta S^n)+\sqrt{P_{a}}S^n,
\end{align}
where $P_d=1-P_a$ is the power allocated to the digital channel input and $P_a\in[0,1]$ is the power allocated to the uncoded layer. We set 
$R_h=I(S;W)+\epsilon$, i.e.,  $\eta^2=P_d(2^{2R_h-\epsilon}-1)$.

An outage will be declared whenever condition (\ref{eq:HybDecCond}) does not hold due to the randomness of the
channel and side information. Hence, the outage
event is defined by
\begin{IEEEeqnarray}{rCl}\label{eq:OutS-HDA}
\mathcal{O}_h\triangleq \{(h,\gamma):I(W;S)\geq I(W;Y,T)\},
\end{IEEEeqnarray}
and is given by
\begin{IEEEeqnarray}{rCl}\label{eq:Outagehybrid}
\mathcal{O}_h\!\triangleq\! \{(h,\gamma)\!:\!P_d h(1+P_d\gamma)\leq
P_d(h(\sqrt{P_a}-\eta)^2)+\eta^2\}.\end{IEEEeqnarray}

If $W^n$ is
successfully decoded, each $S_i$ is reconstructed using an MMSE
estimator with all the information available at the decoder,
$\hat{S}_i=\mathrm{E}[S_i|W_i,Y_i,T_i]$. The corresponding achievable distortion is given by
% $D_h(P_d,\eta)\triangleq (1+\mathbf{c}^H\mathbf{C}_{\tilde{N}}^{-1}\mathbf{c})^{-1}$,
%where $\mathbf{c}\triangleq
%[\beta\sqrt{P_a}+\kappa,\sqrt{P_a},\gamma]^T$ and
%\begin{IEEEeqnarray}{rCl}
%\mathbf{C}_{\tilde{N}}=\left(\begin{matrix}P_d&P_d\sqrt{h}&0\\P_d\sqrt{h}&h P_d+1&0\\0&0&1\end{matrix}\right).
%\end{IEEEeqnarray}
%We have
\begin{IEEEeqnarray}{rCl}\label{eq:DistS-HDA2}
D_h(P_d,\eta)=\frac{P_d}{\eta^2+P_d
\left(1+\gamma+h\left(\sqrt{P_a}-\eta\right)^2\right)}.
\end{IEEEeqnarray}
If an outage occurs, the
receiver estimates $X^n$ from $T^n$ and $Y^n$ with an MMSE estimator,
$\hat{X}_i=\mathrm{E}[X_i|V_i,Y_i]$. The achieved distortion is found to be
\begin{IEEEeqnarray}{rCl}
D_{h}^{out}(P_d,\eta)\triangleq\left(1+\frac{h P_a}{1+h P_d}+\gamma\right)^{-1}.
 \end{IEEEeqnarray}
Finally, the expected distortion for S-HDA is given by
\begin{IEEEeqnarray}{rCl}\label{eq:DistS-HDA}
ED_{shda}(P_d,\eta)\triangleq \mathrm{E}_{\mathcal{O}^c_h}[D_h(P_d,\eta)]+\mathrm{E}_{\mathcal{O}_h}[D_h^{out}(P_d,\eta)].
\end{IEEEeqnarray}
Optimizing over $P_d$ and $\eta$, we obtain $ED^*_{shda}\triangleq \min_{P_d,\eta}ED_{shda}(P_d,\eta)$. Note that uncoded transmission can be recovered from $ED_{shda}(P_d,\eta)$ with $P_d=0$. The hybrid digital analog (HDA) scheme of \cite{wilson2010joint} can be recovered by letting $P_a=0$. We define the minimum expected distortion achievable with HDA as $ED^*_{hda}\triangleq \min_{\eta}ED_{shda}(1,\eta)$.

\begin{remark}
We note that JDS can also be derived from the general hybrid scheme by letting $W=(W',X')$, where $W'$ is the quantization codeword and $X'$ is the channel input in Section \ref{ssec:Joint}, and using the mapping $x(s,w)=x'$, i.e., $X'$ is used as the channel input. Note that while both can be derived from the general hybrid scheme, JDS and S-HDA are operationally different and neither of them is a special  case of the other.
\end{remark}

%\section{Performance Comparison}\label{sec:PerformanceComparison}
%In this section we compare the lower and upper bounds on
%the expected distortion proposed in Section \ref{sec:UpperAndLowerBounds}. First, we show that, for any arbitrarily distributed enhancement channel, when the pdf of the base channel gain belongs to a specific class of continuous quasiconcave functions, surprisingly, uncoded transmission achieves the optimal expected distortion. As an example of such probability distributions, we study the case in which the enhancement and base channels are Rayleigh distributed and provide numerical results for the proposed achievable schemes. Then, for the class of continuous quasiconcave pdfs, we study the expected distortion in the high SNR regime in terms of the distortion exponent \cite{gunduz2008joint}. For this class of pdfs, we show that S-HDA scheme achieves the optimal distortion exponent. Then, we study the case in which $\sqrt{\Gamma}$ is Nakagami distributed while $\sqrt{H}$ is Rayleigh. We obtain closed-form expressions for the distortion exponent achievable by the proposed transmission schemes. We show that in certain regimes JDS also achieves the optimal distortion exponent. We optimize numerically the performance of the proposed schemes, showing that uncoded trasnmission is not optimal in general, and JDS achieves lower expected distortion than S-HDA in the low SNR regime.

\section{Optimality of Uncoded Transmission}\label{sec:OptimalityUncoded}

In addition to separate source and channel coding, uncoded transmission is well known to achieve the minimum distortion in point-to-point static Gaussian channels  \cite{Goblick:IT:65}, \cite{Gastpar:IT:03}.
In the presence of channel fading, separate source and channel coding becomes suboptimal while uncoded transmission still achieves the optimal performance, due to its robustness to channel variations.
  Scenarios in which uncoded transmission achieves the optimal performance have received a lot of attention in the literature, such as the transmission of noisy observations of a Gaussian source over Gaussian multiple access channels (MACs)\cite{Gaspart2008Uncoded} and the transmission of correlated Gaussian sources over a Gaussian MAC, in which case the uncoded transmission is shown to be optimal below a certain SNR threshold \cite{lapidoth2010sending}.

However, even in a point-to-point Gaussian channel, in the presence of static side information at the decoder, uncoded transmission becomes suboptimal. In this case, concatenating a Wyner-Ziv source code with a capacity achieving channel code \cite{Shamai:IT:98}, or joint source-channel coding through the  HDA scheme of \cite{wilson2010joint} is required to achieve the optimal distortion.
%However, when fading is present, uncoded transmission achieves the %optimal performance for any given fading state, while SSCC becomes %suboptimal. 
We show below that, in the block fading scenario studied here, when the side information state $\Gamma$ follows a continuous and quasiconcave distribution for which $\bar{\gamma} =0$ is the solution to equation (\ref{eq:OptGamma}), uncoded transmission meets the lower bound $ED^*_{pi}$ in (\ref{eq:GenlowerBoundSISO}) for any  channel gain distribution $p_{H}(h)$. Hence, despite the presence of correlated side information, uncoded transmission achieves the optimal performance, while both the separate source and channel coding and HDA schemes are suboptimal. The optimality of uncoded transmission follows since, when $\bar{\gamma}=0$, the side information is ignored in the encoding even for the partially informed lower bound. 

\begin{theorem}\label{th:OptUnc}
Let $p_{{H}}(h)$ be an arbitrary pdf while $p_{\Gamma}(\gamma)$
is a continuous and quasiconcave function satisfying equation (\ref{eq:OptGamma}) for $\bar{\gamma} =0$. Then, the minimum expected distortion $ED^*$ is achieved by uncoded transmission.
\end{theorem}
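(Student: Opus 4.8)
The plan is to show that uncoded transmission achieves the partially informed encoder lower bound $ED^*_{pi}$ of Lemma~\ref{lem:GenlowerBoundSISO}, since $ED^*_{pi}\le ED^*\le ED_u$ always holds, so matching the two outer quantities pins down $ED^*$. First I would invoke the hypothesis that $p_\Gamma(\gamma)$ is continuous and quasiconcave with $\bar{\gamma}=0$ solving~(\ref{eq:OptGamma}). Plugging $\bar{\gamma}=0$ into the single-layer expression~(\ref{eq:SingleLayerCode}) collapses the first integral to zero and leaves
\begin{IEEEeqnarray}{rCl}
ED^*_Q(R)=\int_0^\infty\frac{p_\Gamma(\gamma)}{2^{2R}+\gamma}\,d\gamma = \mathrm{E}_\Gamma\!\left[\frac{1}{2^{2R}+\Gamma}\right].\nonumber
\end{IEEEeqnarray}
Since the side information distribution is one of the cases for which $ED^*_s(R)$ is defined to equal $ED^*_Q(R)$, substituting the per-realization channel capacity $\mathcal{C}(h)=\frac12\log(1+h)$, so that $2^{2\mathcal{C}(h)}=1+h$, gives
\begin{IEEEeqnarray}{rCl}
ED^*_{pi}=\mathrm{E}_H\!\left[ED^*_s(\mathcal{C}(H))\right]=\mathrm{E}_H\!\left[\mathrm{E}_\Gamma\!\left[\frac{1}{1+H+\Gamma}\right]\right]=\mathrm{E}_{H,\Gamma}\!\left[\frac{1}{1+H+\Gamma}\right].\nonumber
\end{IEEEeqnarray}

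Next I would compare this with the uncoded achievable distortion. From Section~\ref{sec:UncTr}, the per-realization MMSE distortion is $D_u(h,\gamma)=(1+h+\gamma)^{-1}$, and taking expectation over the states,
\begin{IEEEeqnarray}{rCl}
ED_u=\mathrm{E}_{H,\Gamma}\!\left[\frac{1}{1+H+\Gamma}\right].\nonumber
\end{IEEEeqnarray}
This is exactly the expression obtained for $ED^*_{pi}$, so $ED_u=ED^*_{pi}$. Combined with the sandwich $ED^*_{pi}\le ED^*\le ED_u$, this forces $ED^*=ED_u$, proving that uncoded transmission is optimal. I would also note, for completeness, that the independence of $H$ and $\Gamma$ (assumed in the system model) is what lets the nested expectations collapse into a single joint expectation, but even without independence the argument goes through provided $ED^*_{pi}$ is interpreted as the iterated expectation that Lemma~\ref{lem:GenlowerBoundSISO} actually states.

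The only real subtlety—and the step I expect to require the most care—is justifying that the partially informed lower bound genuinely evaluates to $\mathrm{E}_{H,\Gamma}[(1+H+\Gamma)^{-1}]$ when $\bar{\gamma}=0$. This hinges on two facts from Section~\ref{sec:SourceCoding}: (i) that for a continuous quasiconcave $p_\Gamma$ the optimal source code is single-layer with target $\bar\gamma$, and (ii) that the value of $\bar\gamma$ minimizing~(\ref{eq:SingleLayerCode}) is the left endpoint of the super-level set selected by~(\ref{eq:OptGamma}), which by hypothesis is $0$ here. One should check that $\bar\gamma=0$ solving~(\ref{eq:OptGamma}) for the reference rate $R$ used in defining $\bar\gamma$ is in fact consistent for every per-realization rate $\mathcal{C}(h)$ appearing in $ED^*_{pi}$—i.e. that the optimal single-layer target remains $0$ for all channel realizations. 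This is where the class of distributions in the theorem statement (monotonically decreasing densities, Rayleigh, etc.) matters: for such densities the super-level sets are intervals anchored at $\gamma=0$ regardless of $R$, so $\bar\gamma=0$ uniformly. Once that uniformity is in hand, the chain of equalities above is immediate and the theorem follows.
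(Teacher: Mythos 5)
Your proposal is correct and follows essentially the same route as the paper: substitute $\bar{\gamma}=0$ into (\ref{eq:SingleLayerCode}), use $2^{2\mathcal{C}(h)}=1+h$ to identify $ED^*_{pi}$ with $\mathrm{E}_{H,\Gamma}[(1+H+\Gamma)^{-1}]=ED_u$, and conclude via the sandwich $ED^*_{pi}\leq ED^*\leq ED_u$. Your remark that one must check $\bar{\gamma}=0$ solves (\ref{eq:OptGamma}) uniformly over all rates $\mathcal{C}(h)$ is a valid point of care that the paper's own proof passes over silently (it is handled for the monotonically decreasing class by Proposition \ref{prop:MonotonicallyDecresaing}), but this does not change the substance of the argument.
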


\begin{proof}
For any pdf satisfying (\ref{eq:OptGamma}) with $\bar{\gamma} =0$, the partially informed encoder lower bound is given by
\begin{IEEEeqnarray}{rCl}
ED_{pi}^*&=&\left.\mathrm{E}_{{H}}\left[ED^*_Q\left(\frac{1}{2}\log(1+{H})\right)\right]\right|_{\bar{\gamma} =0}\nonumber\\
&\stackrel{(a)}{=}&\int_{h}\int_{0}^{\infty}\frac{p_{{H}}(h)p_{\Gamma}(\gamma )}{2^{\log(1+h)}+\gamma }d\gamma dh\nonumber\\
&=&\iint_{h,\gamma }\frac{p_{{H}}(h)p_{\Gamma}(\gamma )}{1+h+\gamma }d\gamma dh\nonumber\\
&=&ED_u,\nonumber
\end{IEEEeqnarray}
where $(a)$ is obtained by substituting $\bar{\gamma} =0$ in (\ref{eq:SingleLayerCode}). This completes the proof.
\end{proof}

The class of continuous quasiconcave functions for which any non-empty super-level set of $f_{\Gamma}(\gamma )$ begins at $\gamma =0$ satisfies $\bar{\gamma} =0$. It is not hard to see that the class of continuous monotonically decreasing functions in $\gamma \geq0$ satisfies this condition.
\begin{proposition}\label{prop:MonotonicallyDecresaing}
Let $p_{\Gamma}(\gamma )$ be a continuous monotonically decreasing function for $\gamma >0$. Then, (\ref{eq:OptGamma}) holds for $\bar{\gamma} =0$; and hence, uncoded transmission achieves the optimal performance.
\end{proposition}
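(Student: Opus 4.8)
The plan is to show that a continuous monotonically decreasing pdf $p_{\Gamma}(\gamma)$ on $\gamma>0$ satisfies equation~(\ref{eq:OptGamma}) with $\bar{\gamma}=0$, after which Theorem~\ref{th:OptUnc} immediately gives the conclusion. Recall that $\bar{\gamma}=\gamma_l(\alpha^*)$ is the left endpoint of the super-level set $\{\gamma : p_{\Gamma}(\gamma)\geq\alpha^*\}$ induced by the $\alpha^*$ solving~(\ref{eq:OptGamma}). The key structural observation is that for a monotonically decreasing $p_{\Gamma}$, \emph{every} non-empty super-level set $\{\gamma\geq 0 : p_{\Gamma}(\gamma)\geq\alpha\}$ is an interval of the form $[0,\gamma_r(\alpha)]$: if $p_{\Gamma}(\gamma)\geq\alpha$ for some $\gamma$, then by monotonicity $p_{\Gamma}(\gamma')\geq p_{\Gamma}(\gamma)\geq\alpha$ for every $\gamma'\in[0,\gamma]$. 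Hence $\gamma_l(\alpha)=0$ for all admissible $\alpha\in[0,\max p_{\Gamma}]$.

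The second step is to verify that the choice $\alpha^*$ with $\gamma_l(\alpha^*)=0$ is consistent with~(\ref{eq:OptGamma}), i.e.\ that substituting $\gamma_l(\alpha^*)=0$ actually makes the left-hand side of~(\ref{eq:OptGamma}) vanish for an appropriate $\alpha^*$. Setting $\gamma_l(\alpha^*)=0$ in~(\ref{eq:OptGamma}) reduces the equation to
\begin{IEEEeqnarray}{rCl}
\int_{0}^{\infty}\frac{p_{\Gamma}(\gamma)-\alpha^*}{(2^{2R}+\gamma)^2}\,d\gamma=0,\nonumber
\end{IEEEeqnarray}
which is a scalar equation in the single unknown $\alpha^*$. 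I would argue existence of a root by a standard intermediate-value argument: the map $\alpha\mapsto\int_0^\infty (p_{\Gamma}(\gamma)-\alpha)(2^{2R}+\gamma)^{-2}d\gamma$ is continuous and strictly decreasing in $\alpha$ (the coefficient $(2^{2R}+\gamma)^{-2}$ is strictly positive and integrable), it is strictly positive at $\alpha=0$ (since $p_{\Gamma}\geq 0$ and is not identically zero), and it becomes negative for $\alpha$ large enough (e.g.\ $\alpha=\max p_{\Gamma}$, unless $p_{\Gamma}$ is constant, a degenerate case handled separately or excluded by "strictly decreasing"). Hence a unique $\alpha^*$ in $(0,\max p_{\Gamma}]$ exists, and for that $\alpha^*$ the super-level set is $[0,\gamma_r(\alpha^*)]$, so $\bar{\gamma}=\gamma_l(\alpha^*)=0$, as claimed. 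The final invocation of Theorem~\ref{th:OptUnc} then closes the argument.

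One subtlety worth flagging: the characterization in Section~\ref{sec:SourceCoding} presents $\bar{\gamma}$ as defined \emph{via} the $\alpha^*$ that solves~(\ref{eq:OptGamma}), so strictly speaking one must check that the $\alpha^*$ produced by the general theory is the same as the one found above---but since $\gamma_l(\alpha)=0$ identically for a decreasing pdf, equation~(\ref{eq:OptGamma}) collapses to the displayed scalar equation \emph{regardless} of which $\alpha^*$ is selected, so there is no ambiguity: any valid $\alpha^*$ yields $\bar{\gamma}=0$.

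The main obstacle I anticipate is purely technical rather than conceptual: ensuring the integrals in the IVT argument are finite and that differentiation/continuity in $\alpha$ is justified. Since $\int_0^\infty (2^{2R}+\gamma)^{-2}d\gamma = 2^{-2R}<\infty$ and $p_{\Gamma}$ is a bounded (by $\max p_{\Gamma} = p_{\Gamma}(0^+)$, finite for a continuous decreasing pdf on $(0,\infty)$) integrable function, dominated convergence handles continuity in $\alpha$ and the monotonicity in $\alpha$ is immediate from the sign of the kernel. So the argument is short; the only care needed is the degenerate constant-pdf edge case, which does not arise under strict monotonic decrease. Everything else follows by quoting Theorem~\ref{th:OptUnc}.
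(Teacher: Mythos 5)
Your proof is correct and rests on the same key observation as the paper's: for a monotonically decreasing pdf every non-empty super-level set has left endpoint $\gamma=0$, so $\bar{\gamma}=\gamma_l(\alpha^*)=0$ for whichever $\alpha^*$ solves (\ref{eq:OptGamma}); the paper stops there, while you additionally verify existence of such an $\alpha^*$ via an intermediate-value argument. The only small inaccuracy is the claim that $p_{\Gamma}(0^+)$ is necessarily finite for a continuous decreasing pdf on $(0,\infty)$ --- it need not be (e.g.\ the gamma pdf with shape parameter $L<1$ used elsewhere in the paper) --- but this does not affect your argument, since the kernel $(2^{2R}+\gamma)^{-2}$ is bounded and integrable and $p_{\Gamma}$ integrates to one.
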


\begin{proof}
 By definition $\bar{\gamma} $ is given by the left endpoint of the super-level set induced by $\alpha^*$. For any monotonically decreasing function $p_{\Gamma}(\gamma )$, the left endpoint of the super-level set $\{\gamma: p_{\Gamma}(\gamma )\geq\alpha\}$ corresponds to $\gamma =0$, and as a consequence, we have $\bar{\gamma} =0$ for any value of $\alpha^*$.
\end{proof}

\section{Finite SNR Results}\label{sec:FiniteSNRResults}

In the previous section we have seen the optimality of uncoded transmission when the side information fading state follows a continuous quasiconcave pdf for which $\bar{\gamma} =0$. The exponential distribution, and the more general family of gamma distributions with shape parameter $L\leq1$ are continuous monotonically decreasing distributions, and hence, the uncoded transmission is optimal when the side information gain $\Gamma$ follows one of these distributions. Gamma distributed fading gains appear, for example, when the channel state follows a Nakagami distribution. The gamma distribution with shape parameter $L$ and scale parameter $\theta$, $\Gamma\sim\Upsilon(L,\theta)$, is given as
\begin{IEEEeqnarray}{rCl}\label{eq:Gammapdf}
p_{\Gamma}(\gamma)=\frac{1}{\theta^L}\frac{1}{\Psi(L)}\gamma^{L-1}e^{-\frac{\gamma}{\theta}},\; \text{for }\gamma\!\geq0,\text{and }L,\theta>0,
\end{IEEEeqnarray}
where $\Psi(z)\triangleq \int_{0}^{\infty}t^{z-1}e^{-t}dt$ is the gamma function. The variance of $\Gamma$ is $\sigma^2_{\Gamma}=L\theta^2$ and its mean is $\mathrm{E}[\Gamma]=L\theta$. When $L\leq 1$, it is easy to check that $p_{\Gamma}(\gamma)$ is continuous monotonically decreasing, while it is continuous quasiconcave for $L>1$. Note that when $L=1$, the gamma distribution reduces to the exponential distribution.

Parameter $L$ models the \emph{side information diversity} since a time-varying side information sequence $Y^m$, with state distribution $p_{\Gamma}(\gamma)$, provides the equivalent information (in the sense of sufficient statistics) provided by $L$  independent side information sequences each with i.i.d. Rayleigh block-fading gains. We note that despite the term ``diversity", the side information diversity comes from uncoded noisy versions of the source sequence; hence, the gain it provides is limited compared to the channel diversity which can be better exploited through coding.

To illustrate the performance of the achievable schemes and compare them with the lower bounds, we consider Nakagami fading channel and side information distributions.
We consider normalized
channel and side information gains ${H}_c=\sqrt{\rho} {H}_{c0}$
and $\Gamma_c=\sqrt{\rho} \Gamma_{c0}$, such that
\begin{IEEEeqnarray}{rCl}
Y^n=\sqrt{\rho}{H}_{c0}X^n+N^n,\quad
T^n=\sqrt{\rho}\Gamma_{c0}S^n+Z^n,\nonumber
\end{IEEEeqnarray}
where ${H}_{c0}$ and $\Gamma_{c0}$ satisfy $\mathrm{E}[{H}_{c0}^2]=\mathrm{E}[\Gamma_{c0}^2]=1$. Basically, ${H}_{c0}$ and $\Gamma_{c0}$ capture the randomness in the channels while $\rho$ is the average SNR. We define the associated instantaneous gains $H_{0}\triangleq {H}_{c0}^2$ and $\Gamma_{0}\triangleq \Gamma_{c0}^2$.
%%----------------------------------------
\begin{figure}[t!]
\centering
\includegraphics[width=0.7\textwidth]{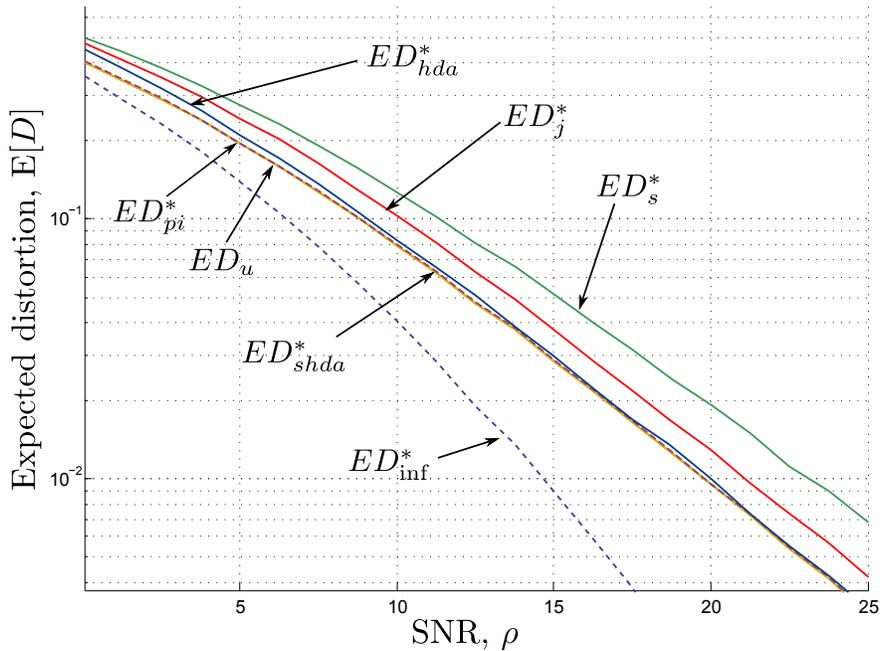}
 \caption{Upper and lower bounds on the expected distortion versus the channel SNR $(\rho)$ for Rayleigh fading channel and side information gain distributions, i.e., $L_s=L_c=1$,  with $\rho=\mathrm{E}[H^2_c]=\mathrm{E}[\Gamma^2_c]$.} \label{fig:Distortion}
\end{figure}
%%%----------------------------------------
We assume that the channel gain $H_0$ has a gamma distribution with scale parameter $L_c>0$ and $\theta_c= L_c^{-1}$, i.e., $H_{0}\sim \Upsilon(L_c,L_c^{-1})$, and similarly, the side information gain follows a gamma distribution with $L_s>0$ and $\theta_s= L_s^{-1}$, i.e., $\Gamma_{0}\sim \Upsilon(L_s,L_s^{-1})$. We have fixed the value of $\theta_c$ and $\theta_s$, such that $\mathrm{E}[H_{c0}^2]=\mathrm{E}[H_{0}]=1$ and $\mathrm{E}[\Gamma_{c0}^2]=\mathrm{E}[\Gamma_{0}]=1$, and both channels have the same average SNR $\rho$ for any $L_c$ and $L_s$. Note that the variance of $\Gamma$ is $\sigma^2_{\Gamma}=L_s\theta^2=1/L_s$. Thus, the side information gain $\Gamma$ becomes more deterministic as $L_s$ increases, and similarly, for $L_c$ and $H$.

First we consider the case with $L_s=L_c=1$, i.e., both the channel and the side information gains are Rayleigh distributed. In Figure \ref{fig:Distortion} we plot the expected distortion with respect to the channel SNR. As shown in Theorem \ref{th:OptUnc}, uncoded transmission achieves the partially informed encoder lower bound
$ED^*_{pi}$.  The minimum expected distortion is given by
\begin{IEEEeqnarray}{rCl}\label{eq:RayleighDistortion}
\!\!ED^*\!\!=\!ED_u\!=\!\!\int_{h_{0}}\!\frac{1}{\rho}e^{\frac{1+\rho h_{0}}{\rho}}\!E_{1}\left(\frac{1+\rho h_{0}}{\rho}\right)p_{{H}_0}(h_{0})d h_{0}.
\end{IEEEeqnarray}
 We see from the figure that the informed encoder lower bound is significantly loose, especially at high SNR. This gap between the two lower bounds also illustrates the potential performance improvement that will be achieved by increasing the feedback resources. If both channel and side information states can be fed back to the encoder, instead of only CSI feedback, a significant improvement can be achieved. In relation to this observation, a problem that requires further research is the allocation of feedback resources between channel and side information states when a limited feedback channel is available from the decoder to the encoder.

S-HDA ($ED^*_{shda}$) also achieves the optimal performance by allocating all available power to the analog component, reducing it to uncoded transmission. Note that while the HDA scheme of \cite{wilson2010joint} cannot reach $ED^*$ in the low SNR regime, its performance gets very close to $ED^*$ at high SNR values.

The expected distortion achievable by SSCC is minimized without any binning, since we have $\bar{\gamma} =0$ for Rayleigh fading side information. Hence,  $R^*_s=0$ from Lemma \ref{lem:OptLay}. It is interesting to observe that for Rayleigh fading side information states, the uncertainty in the side
information renders it useless for the encoder, and the side information is ignored to avoid outages in source decoding, and it is used only in the estimation step. As will be seen next, this is not the case when the side information fading has a different distribution.

 We also observe in Fig. \ref{fig:Distortion} that JDS ($ED_{j}^*$) outperforms SSCC by exploiting the joint quality of the channel and  side information, as claimed by Lemma \ref{lemm:EDsbgeqEDx}. We also see that JDS cannot achieve the optimal performance in this setting.

Observations above, including the optimality of uncoded transmission, hold for any $L_c$ value as long as $L_s\leq 1$. This follows from Proposition \ref{prop:MonotonicallyDecresaing} since $p_{\Gamma}(\gamma)$ is monotonically decreasing if $L_s\leq 1$. However, this optimality does not hold in general. Next, it will be shown that  uncoded transmission is suboptimal for a wide variety of channel distributions, while S-HDA performs very close to the partially informed encoder lower bound.

%%----------------------------------------
\begin{figure}[t!]
\centering
\includegraphics[width=0.7\textwidth]{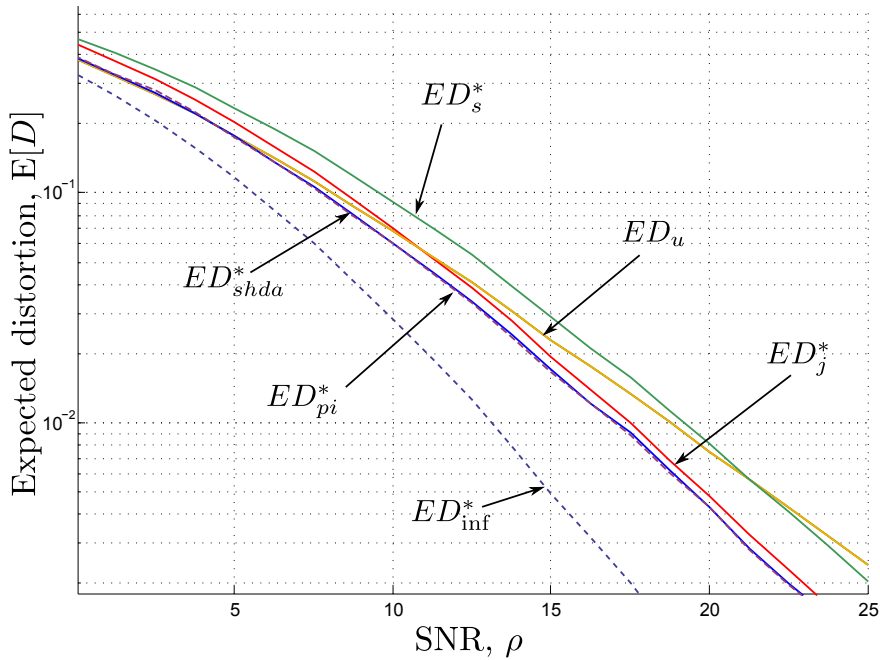}
 \caption{Lower and upper bounds on the expected
distortion versus the channel SNR for $L_s=2$ and $L_c=1$ with
$\rho=\mathrm{E}[H_c^2]=\mathrm{E}[\Gamma_c^2]$.} \label{fig:Diversity2}
\end{figure}
%%%----------------------------------------
We consider the case with $L_s=2$ and $L_c=1$ in Fig. \ref{fig:Diversity2}. We can see
that S-HDA achieves the lowest expected distortion among the proposed schemes and performs very close to the lower bound at all SNR values, while uncoded transmission is suboptimal. Although  the performance of uncoded transmission is very close to $ED^*_{pi}$ in the low SNR regime, the gap between the two increases with SNR. In addition, both SSCC and JDS surpass the performance of uncoded transmission as the SNR increases.

Finally, in Fig \ref{fig:Diversity30}, we consider the scenario with $L_c=0.5$ and $L_s=1.5$. Contrary to the previous scenarios, in this setup JDS outperforms S-HDA for SNR values greater than $\text{SNR}\backsimeq37\text{dB}$. As the SNR increases, JDS performs close to the partially informed lower bound, while S-HDA performance is further from the lower bound. Similarly to the previous scenarios, we observe that uncoded transmission performs close to the lower bound for low SNR values.

Additional numerical simulations indicate that, as the side information diversity, $L_s$,  increases, the gap at any SNR between the informed encoder lower bound and the partially informed encoder lower bound reduces.
The two bounds converge since for the studied setup $\sigma^2_{\Gamma_{c0}}=L_s^{-1}$, and as $L_s$ increases, the variance decreases, and therefore, the level of
uncertainty in the side information gain state diminishes.  In fact, the two bounds can be shown to converge at any SNR value and for any arbitrary side information gain distribution whose variance decreases with some parameter, namely $L_s$, as given in the next lemma.

\begin{lemma}\label{lem:ConvergentBounds}
Let ${H}$ be arbitrarily distributed and have a finite mean, i.e., $\mathrm{E}_{{H}}[{H}]<\infty$. Let $\left(\Gamma_{L}\right)_{L\geq0}$ be a sequence of side information gain random variables such that, for every $L$, $\Gamma_{L}$ follows an arbitrary distribution with variance $\sigma^2_{L}$, where $\sigma^2_{L}\rightarrow0$ as  $L\rightarrow\infty$. Then, as  $L\rightarrow\infty$, the partially informed encoder lower bound converges to the informed encoder lower bound, i.e., the following limit holds:
\begin{IEEEeqnarray}{rCl}
\lim_{L\rightarrow \infty}(ED_{\mathrm{inf}}-ED^*_{\mathrm{pi}})=0.
\end{IEEEeqnarray}
\end{lemma}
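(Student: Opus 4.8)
The plan is to squeeze both bounds toward the value they would take if the side information gain were the constant $\mu_L \triangleq \mathrm{E}[\Gamma_L]$. Put $C_L \triangleq \mathrm{E}_{H}[(1+H)^{-1}(1+\mu_L)^{-1}] = (1+\mu_L)^{-1}\,\mathrm{E}_{H}[(1+H)^{-1}]$. I will prove (i) $ED_{\mathrm{inf}} \ge C_L - \sigma_L$; (ii) $ED^*_{\mathrm{pi}} \le C_L + \epsilon_L$ for some $\epsilon_L \to 0$; and (iii) the elementary inequality $ED^*_{\mathrm{pi}} \ge ED_{\mathrm{inf}}$, which holds because the fully informed encoder may always discard its knowledge of $\Gamma_L$ and emulate the partially informed one. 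Combining the three yields $0 \le ED^*_{\mathrm{pi}} - ED_{\mathrm{inf}} \le \sigma_L + \epsilon_L$, and as $\sigma_L \to 0$ the claimed limit follows. Note that every error term is controlled by $\sigma_L$ alone, so it is irrelevant whether $\mu_L$ itself converges.

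For (i): since $\Gamma_L \ge 0$, we have pointwise $|(1+\Gamma_L)^{-1} - (1+\mu_L)^{-1}| = |\Gamma_L - \mu_L|\,[(1+\Gamma_L)(1+\mu_L)]^{-1} \le |\Gamma_L - \mu_L|$. Expanding $ED_{\mathrm{inf}} = \mathrm{E}_{H,\Gamma_L}[(1+H)^{-1}(1+\Gamma_L)^{-1}]$ around $\mu_L$, bounding $(1+H)^{-1} \le 1$, and using Jensen's inequality $\mathrm{E}|\Gamma_L - \mu_L| \le \sigma_L$, we obtain $ED_{\mathrm{inf}} \ge C_L - \mathrm{E}_{\Gamma_L}|\Gamma_L - \mu_L| \ge C_L - \sigma_L$; no independence of $H$ and $\Gamma_L$ is needed.

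For (ii), I would upper bound the source-coding quantity $ED^*_s(R)$ by the distortion achieved by a single-layer Wyner-Ziv code of the kind that yields \eqref{eq:SingleLayerCode}, but with the target side information state chosen by me as $\gamma^\dagger_L \triangleq (\mu_L - \delta_L)^+$, where $\delta_L > 0$ is a parameter to be optimized. Such a code attains distortion $(1+\gamma)^{-1}$ when the realized gain $\gamma$ is below $\gamma^\dagger_L$ (the quantization codeword cannot be recovered and only the side information is used) and distortion $[(1+\gamma^\dagger_L)2^{2R} + \gamma - \gamma^\dagger_L]^{-1} \le [(1+\gamma^\dagger_L)2^{2R}]^{-1}$ otherwise. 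Specializing to $R = \mathcal{C}(H) = \tfrac{1}{2}\log(1+H)$, so that $2^{2R} = 1 + H$, Chebyshev's inequality gives $\Pr[\Gamma_L < \gamma^\dagger_L] \le \sigma_L^2/\delta_L^2$, while an elementary verification in the two cases $\mu_L \ge \delta_L$ and $\mu_L < \delta_L$ shows $(1+\gamma^\dagger_L)^{-1} \le (1+\mu_L)^{-1} + \delta_L$. Hence $ED^*_s(\mathcal{C}(h)) \le \sigma_L^2/\delta_L^2 + (1+h)^{-1}[(1+\mu_L)^{-1} + \delta_L]$, and averaging over $H$ with $\mathrm{E}[(1+H)^{-1}] \le 1$ gives $ED^*_{\mathrm{pi}} \le \sigma_L^2/\delta_L^2 + C_L + \delta_L$. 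Choosing $\delta_L = \sigma_L^{2/3}$ (the case $\sigma_L = 0$ being trivial, as then $\Gamma_L \equiv \mu_L$ and the two bounds coincide) makes $\epsilon_L = \sigma_L^2/\delta_L^2 + \delta_L = 2\sigma_L^{2/3} \to 0$, which finishes the argument.

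I expect the only genuinely delicate point to be the calibration of the target state $\gamma^\dagger_L$. Pushing it up to $\mu_L$ would make source-decoding outages order-one likely, and in outage the distortion $(1+\gamma)^{-1}$ does not improve with the channel; pushing it down to $0$ would make the single-layer scheme wasteful by a factor that grows with $\mu_L$, relative to the informed-encoder distortion $[(1+H)(1+\mu_L)]^{-1}$. The offset $\delta_L$ interpolates between these regimes, and the exponent $2/3$ merely balances the Chebyshev tail $\sigma_L^2/\delta_L^2$ against the bias $\delta_L$; any schedule with $\delta_L \to 0$ and $\sigma_L^2/\delta_L^2 \to 0$ would do equally well.
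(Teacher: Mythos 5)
Your proof is correct and follows essentially the same route as the paper's: both upper-bound $ED^*_{\mathrm{pi}}$ by the distortion of a single-layer code targeting a side information state $\delta_L$ below $\mu_L=\mathrm{E}[\Gamma_L]$, control the undershoot probability with Chebyshev's inequality, bound the residual bias linearly in $\delta_L$, and invoke $ED^*_{\mathrm{pi}}\geq ED_{\mathrm{inf}}$ for the other direction. The only substantive differences are refinements on your side: pivoting through $C_L$ with $(1+H)^{-1}\leq 1$ lets you dispense with the hypothesis $\mathrm{E}_{H}[H]<\infty$, the positive part $(\mu_L-\delta_L)^+$ explicitly covers the case $\mu_L<\delta_L$, and the schedule $\delta_L=\sigma_L^{2/3}$ correctly balances the second-moment Chebyshev tail $\sigma_L^2/\delta_L^2$ against the linear bias, whereas the paper's choice $\delta_L=\sigma_L$ only closes if the tail is bounded via the first moment, $\mathrm{E}|\Gamma_L-\mu_L|/\delta_L\leq\sigma_L/\delta_L$ (as written, its step $(e)$ reports $\sigma_L^2/\delta_L$, which is not the standard Chebyshev form).
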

\begin{proof}
See Appendix  \ref{app:ConvergentBounds}.
\end{proof}
Although the side information available at the decoder becomes more deterministic with increasing $L_s$, the channel is still block-fading. Only S-HDA performs close to the informed encoder lower bound, i.e., the optimal performance when the current channel and side information states are known. On the contrary, the rest of the studied schemes cannot fully exploit the determinism in the side information fading gain for $L_c\geq1$, while it seems that for $L_c<1$ JDS is the scheme achieving the lowest expected distortion.

%%----------------------------------------
\begin{figure}[t!]
\centering
\includegraphics[width=0.7\textwidth]{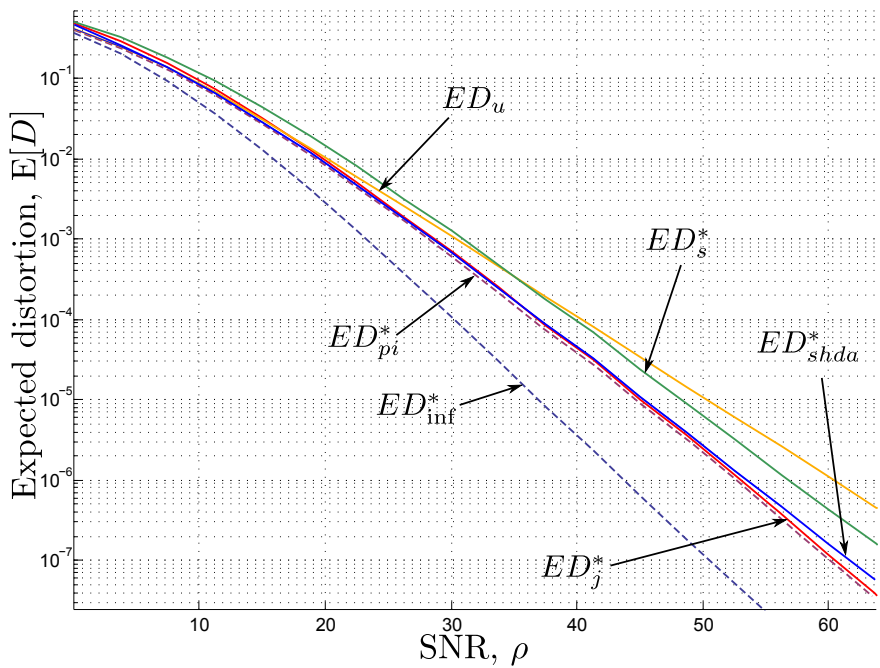}
 \caption{Lower and upper bounds on the expected
distortion versus the channel SNR for $L_s=1.5$ and $L_c=0.5$ with
$\rho=\mathrm{E}[H_c^2]=\mathrm{E}[\Gamma_c^2]$.}\label{fig:Diversity30}
\end{figure}
%%%----------------------------------------

%%%%%%%%%%%%%%%%%%%%%%%%%%%%%%%%%%%%%%%%%%%%%%%%%%%%%%%%%%%%%%%%%%%%%%%%%%%%%%%%%%%%%%%%%
\section{High SNR Analysis}\label{sec:HighSnrAnalysis}
%%%%%%%%%%%%%%%%%%%%%%%%%%%%%%%%%%%%%%%%%%%%%%%%%%%%%%%%%%%%%%%%%%%%%%%%%%%%%%%%%%%%%%%%%
In the previous section we have seen the optimality of uncoded transmission in certain settings in which the proposed digital schemes are suboptimal. On the other hand, our numerical results have shown that the S-HDA scheme has a good performance for a wide variety of channel distributions while the optimality of uncoded transmission is very sensitive to the distribution of the side information. We have also observed that JDS outperforms S-HDA in certain regimes.
 Although we have characterized the optimal expected distortion in closed-form for the Rayleigh fading scenario in (\ref{eq:RayleighDistortion}), a closed-form expression of the optimal expected distortion for general channel and side information distributions is elusive. Instead, we focus on the high SNR regime, and study the exponential decay rate of the expected distortion with increasing SNR, defined as the \emph{distortion exponent}, and denoted by $\Delta$ \cite{Laneman2005SourceChannelDiv}. We have,
\begin{IEEEeqnarray}{rCl}
\Delta\triangleq -\lim_{\rho\rightarrow\infty}\frac{\log \mathrm{E}[D]}{\log \rho}.
\end{IEEEeqnarray}

In this section, we study the distortion exponent for the model considered in Section \ref{sec:FiniteSNRResults}, i.e., a Nakagami fading channel and side information gains, i.e.,  ${H}_{0}\sim\Upsilon(L_c,L_c^{-1})$ and $\Gamma_{0}\sim\Upsilon(L_s,L_s^{-1})$.
 We are interested in characterizing the maximum distortion exponent over all encoder and decoder pairs, denoted by $\Delta^*(L_s,L_c)$.

 We first provide an upper bound on the distortion exponent by studying the partially informed encoder lower bound on the expected distortion in (\ref{eq:GenlowerBoundSISO}). In determining the high SNR behavior of the partially informed encoder lower bound, it is challenging to characterize the optimal SNR exponent for the target side information state $\bar{\gamma}$ in (\ref{eq:OptGamma}) for different channel states. Hence, we further bound the expected distortion by considering the ergodic channel capacity as the channel rate.

\begin{lemma}\label{lem:DistExpPartially}
 The optimal distortion exponent is upper bounded by the exponent of the partially informed encoder lower bound calculated at the ergodic channel capacity, given by \begin{IEEEeqnarray}{rCl}
\Delta_{pe}(L_s,L_c)=1+\left(1-\frac{1}{L_s}\right)^+.
\end{IEEEeqnarray}
\end{lemma}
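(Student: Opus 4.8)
The plan is to lower-bound $ED^*$ by a quantity whose distortion exponent is computable in closed form: the partially informed encoder bound with the fading channel replaced by a static channel of the \emph{ergodic} capacity. By Lemma~\ref{lem:GenlowerBoundSISO}, $ED^*\ge ED^*_{pi}=\mathrm{E}_{{H}}[ED^*_s(\mathcal{C}({H}))]$ with $\mathcal{C}(h)=\tfrac{1}{2}\log(1+h)$. The map $R\mapsto ED^*_s(R)$ is non-increasing and convex, convexity being the usual time-sharing fact: coding a $\lambda$-fraction of a block at rate $R_1$ and the rest at rate $R_2$ attains rate $\lambda R_1+(1-\lambda)R_2$ and expected distortion $\lambda\,ED^*_s(R_1)+(1-\lambda)\,ED^*_s(R_2)$. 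Hence, by Jensen's inequality, $ED^*\ge ED^*_{pi}\ge ED^*_s(\bar{C})$ with $\bar{C}\triangleq\mathrm{E}_{{H}}[\mathcal{C}({H})]=\mathrm{E}_{H_0}[\tfrac{1}{2}\log(1+\rho H_0)]$, so it suffices to show $ED^*_s(\bar{C})\doteq\rho^{-(1+(1-1/L_s)^+)}$. (One could equally bypass convexity: on the event $\{H_0\le1\}$, which has a positive, $\rho$-independent probability, $\mathcal{C}({H})\le\tfrac{1}{2}\log(1+\rho)$, so $ED^*_{pi}\stackrel{.}{\ge} ED^*_s\!\left(\tfrac{1}{2}\log(1+\rho)\right)$, which has the same exponent, since below only the relation $2^{2R}\doteq\rho$ is used.)

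Because $\log(1+\rho h_0)/\log\rho\to1$ for every fixed $h_0>0$ while $\log(1+\rho h_0)\le\log(1+\rho)+\log(1+h_0)$ is dominated by an integrable function, dominated convergence yields $\bar{C}=\tfrac{1}{2}\log\rho+o(\log\rho)$, i.e.\ $2^{2\bar{C}}\doteq\rho$, for every $L_c>0$; this is why $\Delta_{pe}$ is independent of $L_c$. If $L_s\le1$, then $p_\Gamma$ is monotonically decreasing, so $\bar\gamma=0$ solves~(\ref{eq:OptGamma}) by Proposition~\ref{prop:MonotonicallyDecresaing}, and~(\ref{eq:SingleLayerCode}) collapses to $ED^*_s(\bar{C})=\int_0^\infty\frac{p_\Gamma(\gamma)}{2^{2\bar{C}}+\gamma}\,d\gamma$. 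With the change of variable $\gamma=\rho\gamma_0$, so that $p_\Gamma(\gamma)\,d\gamma=p_{\Gamma_0}(\gamma_0)\,d\gamma_0$ with $\Gamma_0\sim\Upsilon(L_s,L_s^{-1})$, and using $2^{2\bar{C}}\doteq\rho$, the integral is $\doteq\rho^{-1}\mathrm{E}_{\Gamma_0}[(1+\Gamma_0)^{-1}]\doteq\rho^{-1}$, hence the exponent equals $1=1+(1-1/L_s)^+$.

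For $L_s>1$, $p_\Gamma$ is continuous and quasiconcave, so $ED^*_s(\bar{C})$ is the value of the minimization in~(\ref{eq:SingleLayerCode}) over the target state $\bar\gamma$, and its exponent is the supremum over $\bar\gamma$ of the exponents of the bracketed expression. Writing $\bar\gamma\doteq\rho^{1-\beta}$ and using again $\gamma=\rho\gamma_0$ with $p_{\Gamma_0}(\gamma_0)\doteq\gamma_0^{L_s-1}$ near the origin: splitting the first integral at $\gamma_0=\rho^{-1}$ and noting that for $L_s>1$ its upper part dominates gives $\int_0^{\bar\gamma}\frac{p_\Gamma(\gamma)}{1+\gamma}\,d\gamma\doteq\rho^{-(1+\beta(L_s-1))}$; since $(\bar\gamma+1)2^{2\bar{C}}\doteq\rho^{\,2-\beta}$ and the bulk of $\Gamma_0$ lies in $(\rho^{-\beta},\rho^{1-\beta})$, the second integral is $\doteq\rho^{-(2-\beta)}$. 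Hence the bracketed expression is $\doteq\rho^{-\min\{1+\beta(L_s-1),\,2-\beta\}}$ and $ED^*_s(\bar{C})\doteq\rho^{-e^*}$ with $e^*=\sup_{\beta}\min\{1+\beta(L_s-1),\,2-\beta\}$. Choices $\beta\le0$ or $\beta\ge1$ (that is, $\bar\gamma$ growing faster than $\rho$, or $\bar\gamma$ bounded) give exponent $1$, so the supremum is attained at the crossing of the two affine functions of $\beta$, namely $\beta=1/L_s\in(0,1)$, where $e^*=2-1/L_s=1+(1-1/L_s)^+$. Combining the two cases, $ED^*\stackrel{.}{\ge}\rho^{-(1+(1-1/L_s)^+)}$, i.e.\ $\Delta^*(L_s,L_c)\le 1+(1-1/L_s)^+=\Delta_{pe}(L_s,L_c)$.

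The delicate step is the $L_s>1$ analysis: one must track the exponential order of both integrals in~(\ref{eq:SingleLayerCode}) jointly in $\rho$ and in the scaling exponent of the target state $\bar\gamma$, the key observation being that it is the near-origin behavior $p_{\Gamma_0}(\gamma_0)\sim c\,\gamma_0^{L_s-1}$, not the bulk of the distribution, that controls the first integral; the ensuing min--max over $\beta$ is then elementary. The remaining points --- convexity of $ED^*_s$ for the Jensen step, and the insensitivity of the exponent of $ED^*_s(R)$ to whether $\tfrac{1}{2}\log(1+\rho)$, $\bar{C}$, or $\tfrac{1}{2}\log\rho$ is used as the rate --- are routine.
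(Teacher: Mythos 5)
Your proof is correct and follows essentially the same route as the paper's: Jensen's inequality via convexity of $ED^*_s(\cdot)$ to reduce to the ergodic capacity, replacement of $\bar{C}$ by $\tfrac{1}{2}\log\rho$ on the exponential scale, a parameterization of the target state $\bar\gamma$ by its SNR exponent, and a max--min over the exponents of the two integral terms, whose crossing at $\beta=1/L_s$ gives $2-1/L_s$. The only cosmetic differences are that the paper handles the ergodic capacity with a second application of Jensen's inequality ($\mathrm{E}[\mathcal{C}(H)]\le\tfrac{1}{2}\log(1+\rho)$) rather than dominated convergence, and phrases the exponent computation via a change of variables and Varadhan's lemma rather than direct integral estimates; the resulting optimization is identical (the paper's $\kappa$ is your $1-\beta$).
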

\begin{proof}
See Appendix \ref{app:PartiallyInformedGeneral}.
\end{proof}

%%----------------------------------------
\begin{figure}[t!]
\centering
\includegraphics[width=0.6\textwidth]{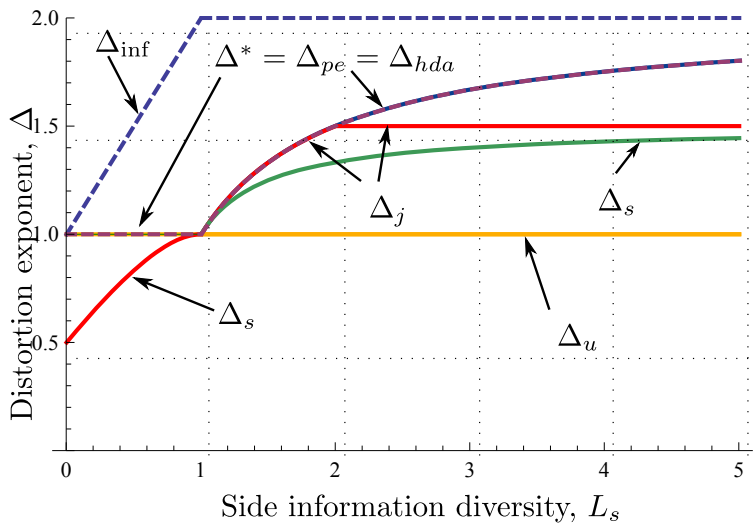}
 \caption{Distortion exponent upper and lower bounds for Nakagami fading channel and side information with $L_c=1$, as a function of $L_s$.}
\label{fig:DistExp}
\end{figure}
%%%----------------------------------------

%%----------------------------------------
\begin{figure}[t!]
\centering
\includegraphics[width=0.6\textwidth]{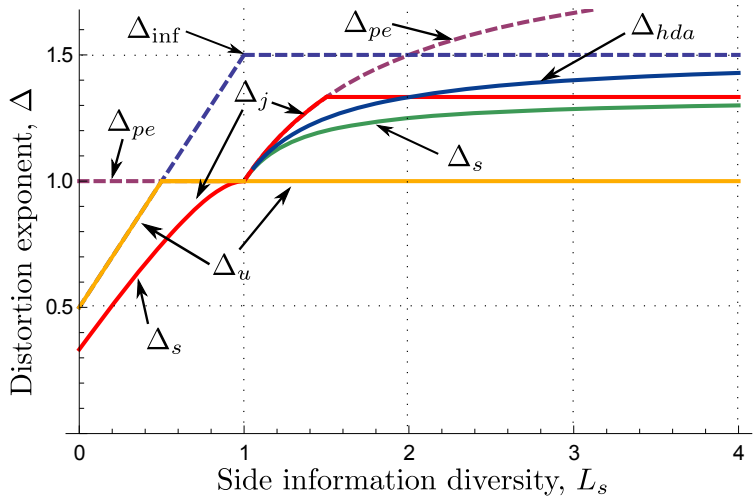}
 \caption{Distortion exponent upper and lower bounds for Nakagami fading channel and side information with $L_c=0.5$, as a function of $L_s$.}
\label{fig:DistExpL205}
\end{figure}
%%%----------------------------------------

We will see that $\Delta_{pe}(L_s,L_c)$ is tight only for $L_c\geq1$, and the ergodic channel relaxation is loose for $L_c<1$. In order to tighten the bound in these regimes, we consider the distortion exponent of the informed encoder upper bound proposed in Section \ref{sec:UpperAndLowerBounds}, which can be proven similarly to Lemma \ref{lem:DistExpPartially}.

\begin{lemma}
The distortion exponent is upper bounded by the exponent of the informed encoder lower bound, given by
\begin{IEEEeqnarray}{rCl}
\Delta_{\mathrm{inf}}(L_s,L_c)=\min\{L_c,1\}+\min\{L_s,1\}.
\end{IEEEeqnarray}
\end{lemma}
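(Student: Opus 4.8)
Since the informed encoder can only help, $ED^*_{\mathrm{inf}}\le ED^*$, and therefore the exponent of $ED^*_{\mathrm{inf}}$ upper bounds the distortion exponent: $\Delta^*\le\Delta_{\mathrm{inf}}(L_s,L_c)\triangleq-\lim_{\rho\to\infty}\log ED^*_{\mathrm{inf}}/\log\rho$. Thus it suffices to compute the exponential order of $ED^*_{\mathrm{inf}}$ for the Nakagami model. The key simplification over Lemma~\ref{lem:DistExpPartially} is that the informed-encoder distortion factorizes: with $H=\rho H_0$, $\Gamma=\rho\Gamma_0$, and $H_0\sim\Upsilon(L_c,L_c^{-1})$ independent of $\Gamma_0\sim\Upsilon(L_s,L_s^{-1})$,
\[
ED^*_{\mathrm{inf}}=\mathrm{E}_{H,\Gamma}\big[(1+H)^{-1}(1+\Gamma)^{-1}\big]=\mathrm{E}_{H_0}\big[(1+\rho H_0)^{-1}\big]\,\mathrm{E}_{\Gamma_0}\big[(1+\rho\Gamma_0)^{-1}\big].
\]
Hence the exponent is additive, and it is enough to show $\mathrm{E}_{H_0}[(1+\rho H_0)^{-1}]\doteq\rho^{-\min\{L_c,1\}}$; the side-information factor then contributes $\rho^{-\min\{L_s,1\}}$ by the identical argument with $L_c$ replaced by $L_s$.

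To evaluate $g(\rho)\triangleq\int_0^\infty\frac{p_{H_0}(h_0)}{1+\rho h_0}\,dh_0$, I would split the integral at $h_0=1/\rho$. On $[0,1/\rho]$ the factor $(1+\rho h_0)^{-1}$ lies between $\tfrac12$ and $1$, so this part has the same exponential order as $\Pr[H_0\le 1/\rho]$, which is $\doteq\rho^{-L_c}$ because $p_{H_0}(h_0)\sim c\,h_0^{L_c-1}$ as $h_0\to 0$. On $[1/\rho,\infty)$ I would use $(1+\rho h_0)^{-1}\le(\rho h_0)^{-1}$ to obtain a contribution $\stackrel{.}{\leq}\rho^{-1}\int_{1/\rho}^{\infty}h_0^{L_c-2}e^{-L_c h_0}\,dh_0$ (up to the Gamma normalizing constant); the remaining integral is $\doteq 1$ for $L_c\ge 1$ (it converges for $L_c>1$, and equals $E_1(1/\rho)\sim\log\rho$ for $L_c=1$, both subexponential) and $\doteq\rho^{\,1-L_c}$ for $L_c<1$ (the lower endpoint dominates, since then $L_c-2<-1$), so this tail part is $\stackrel{.}{\leq}\rho^{-\min\{L_c,1\}}$. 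A matching lower bound of order $\rho^{-\min\{L_c,1\}}$ is supplied by the $[0,1/\rho]$ piece when $L_c<1$, and by $\int_{1/\rho}^{c}\frac{p_{H_0}(h_0)}{1+\rho h_0}\,dh_0\stackrel{.}{\geq}\rho^{-1}$ for a fixed constant $c$ when $L_c\ge 1$. Combining the two pieces gives $g(\rho)\doteq\rho^{-\min\{L_c,1\}}$, and multiplying the two factors yields $ED^*_{\mathrm{inf}}\doteq\rho^{-(\min\{L_c,1\}+\min\{L_s,1\})}$, i.e.\ $\Delta_{\mathrm{inf}}(L_s,L_c)=\min\{L_c,1\}+\min\{L_s,1\}$.

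The computation is essentially routine; the only point needing care — not a genuine obstacle — is the exponential-order bookkeeping: that $\doteq$ passes through the finite product (exponents add), that a finite sum or a bounded-range integral of exponential-order terms takes the minimizing exponent, and that the subexponential factors ($\log\rho$, the Gamma normalization, and $e^{-L_c h_0}$ over the relevant range) may be dropped. Equivalently, one may substitute $H_0=\rho^{-V}$, under which the density of $V$ has exponential order $\rho^{-L_c v}$ on $v\ge 0$, $(1+\rho H_0)^{-1}\doteq\rho^{-(1-V)^+}$, and a Laplace-type integral gives the exponent $\min_{v\ge 0}\{(1-v)^++L_c v\}=\min\{L_c,1\}$; I would present whichever version is cleaner. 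This mirrors the proof of Lemma~\ref{lem:DistExpPartially}, but is shorter because the informed-encoder distortion decouples across the two fading states, so no ergodic-capacity relaxation is needed.
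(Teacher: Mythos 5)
Your proof is correct and follows essentially the route the paper intends: the paper omits an explicit proof, stating only that the lemma ``can be proven similarly to Lemma~\ref{lem:DistExpPartially}'', i.e.\ via the change of variables $H_0=\rho^{-A}$, $\Gamma_0=\rho^{-B}$ and a Varadhan/Laplace computation of the exponent, which is exactly your second formulation $\min_{v\ge 0}\{(1-v)^+ + L v\}=\min\{L,1\}$ applied to each factor. Your observation that $ED^*_{\mathrm{inf}}$ factorizes over the independent fading states (so the exponents simply add and no ergodic-capacity relaxation or optimization over a target state $\bar{\gamma}$ is needed) is the correct simplification, and both your elementary integral-splitting argument and the change-of-variables version check out.
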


%\begin{proof}
%See Appendix \ref{app:DIstExpInf}.
%\end{proof}

While for $L_c\geq1$, $\Delta_{pe}(L_s,L_c)$ is always tighter than $\Delta_{\mathrm{inf}}(L_s,L_c)$, for $L_c<1$ we have $\Delta_{pe}(L_s,L_c)\geq \Delta_{\mathrm{inf}}(L_s,L_c)$ if $L_s\geq\frac{1}{1-L_c}$. In the next proposition, we combine the two upper bounds into a single upper bound on the distortion exponent.
\begin{theorem}
For a Nakagami fading channel with ${H}_{0}\sim\Upsilon(L_c,L_c^{-1})$, and a Nakagami fading side information with $\Gamma_{0}\sim \Upsilon(L_s,L_s^{-1})$, the optimal distortion exponent is upper bounded by
\begin{IEEEeqnarray}{rCl}
\min\{\Delta_{pe}(L_s,L_c),\Delta_{\mathrm{inf}}(L_s,L_c)\}=\begin{cases}
\min\{1,L_s+L_c\}&\text{if }L_s\leq1,\\
2-\frac{1}{L_s}& \text{if }1<L_s\leq \frac{1}{(1-L_c)^+},\\
1+L_c& \text{if }L_s> \frac{1}{(1-L_c)^+}.
\end{cases}
\end{IEEEeqnarray}
\end{theorem}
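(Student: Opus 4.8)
The plan is to obtain the displayed bound directly from the two preceding lemmas and then reduce the resulting minimum to closed form by an elementary case split on $L_s$ and $L_c$. Lemma~\ref{lem:DistExpPartially} gives $\Delta^*(L_s,L_c)\leq\Delta_{pe}(L_s,L_c)$, and the lemma immediately following it gives $\Delta^*(L_s,L_c)\leq\Delta_{\mathrm{inf}}(L_s,L_c)$, so $\Delta^*(L_s,L_c)\leq\min\{\Delta_{pe}(L_s,L_c),\Delta_{\mathrm{inf}}(L_s,L_c)\}$ is immediate; all the remaining content of the theorem is the explicit evaluation of this minimum.

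First I would substitute $\Delta_{pe}(L_s,L_c)=1+(1-1/L_s)^+$ and $\Delta_{\mathrm{inf}}(L_s,L_c)=\min\{L_c,1\}+\min\{L_s,1\}$ and separate the regime $L_s\leq1$ from $L_s>1$. For $L_s\leq1$ the positive part vanishes, so $\Delta_{pe}=1$ and $\Delta_{\mathrm{inf}}=\min\{L_c,1\}+L_s$; one then checks that $\min\{1,\min\{L_c,1\}+L_s\}=\min\{1,L_c+L_s\}$, because whenever the inner clipping of $L_c$ is active (i.e.\ $L_c\geq1$) the sum $L_c+L_s$ already exceeds $1$ and the outer minimum is $1$ regardless. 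For $L_s>1$ we have $\Delta_{pe}=2-1/L_s$ and $\Delta_{\mathrm{inf}}=\min\{L_c,1\}+1$, and here I would split further on $L_c\geq1$ versus $L_c<1$. If $L_c\geq1$, then $\Delta_{\mathrm{inf}}=2>2-1/L_s$ for every finite $L_s>1$, so the minimum equals $\Delta_{pe}=2-1/L_s$; this is consistent with the stated formula since $(1-L_c)^+=0$ makes the threshold $1/(1-L_c)^+$ equal to $+\infty$, placing every $L_s>1$ in the middle branch.

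The only point that needs genuine care — and the one I expect to be the main bookkeeping obstacle — is the sub-case $L_s>1$, $L_c<1$, where the answer changes across a finite threshold. There, comparing $2-1/L_s$ with $L_c+1$ is equivalent to comparing $1-1/L_s$ with $L_c$, and since $1-L_c>0$ this is in turn equivalent to comparing $L_s$ with $1/(1-L_c)=1/(1-L_c)^+$. Hence $\min\{\Delta_{pe},\Delta_{\mathrm{inf}}\}=2-1/L_s$ when $1<L_s\leq 1/(1-L_c)^+$ and $\min\{\Delta_{pe},\Delta_{\mathrm{inf}}\}=1+L_c$ when $L_s>1/(1-L_c)^+$. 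Collecting the three regions $L_s\leq1$, $1<L_s\leq1/(1-L_c)^+$, and $L_s>1/(1-L_c)^+$ yields exactly the claimed piecewise expression; as a consistency check one verifies that the pieces agree at the boundaries $L_s=1$ (both giving $1$) and $L_s=1/(1-L_c)$ (both giving $1+L_c$), which also confirms that the convention $1/0=+\infty$ is the right one to use in the threshold.
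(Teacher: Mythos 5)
Your proposal is correct and follows essentially the same route as the paper: the theorem is obtained by taking the minimum of the two lemma bounds $\Delta_{pe}(L_s,L_c)=1+(1-1/L_s)^+$ and $\Delta_{\mathrm{inf}}(L_s,L_c)=\min\{L_c,1\}+\min\{L_s,1\}$ and evaluating it by the same case split on $L_s\lessgtr 1$ and $L_c\lessgtr 1$, with the crossover at $L_s=1/(1-L_c)$ for $L_c<1$ being exactly the comparison the paper records just before stating the result. Your boundary consistency checks and the observation that $\min\{1,\min\{L_c,1\}+L_s\}=\min\{1,L_c+L_s\}$ for $L_s\leq 1$ are correct.
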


In Fig. \ref{fig:DistExp} and Fig. \ref{fig:DistExpL205} we plot the distortion exponent upper and lower bounds with respect to the parameter $L_s$ of the Nakagami distribution for $L_c=1$ and $L_c=0.5$, respectively.
Note that for $L_c\geq1$, as $L_s$ increases, the distortion exponent upper bound $\Delta_{pe}(L_s,L_c)$ converges to the informed encoder upper bound. This observation is parallel to  Lemma \ref{lem:ConvergentBounds}. However, this is not the case if $L_c<1$. While Lemma \ref{lem:ConvergentBounds} applies to any channel distribution, the partially informed bound with ergodic channel relaxation is loose in this regime.

Next, we consider the distortion exponent achievable by the transmission schemes proposed in Section \ref{sec:UpperAndLowerBounds}. The proofs can be found in Appendix \ref{app:DiestExpDerivations}.  We note that the distortion exponent achievable by uncoded tranmission is provided without proof and can be derived similarly to the proofs in Appendix \ref{app:DiestExpDerivations}.

\begin{lemma}
The distortion exponent achieved by uncoded transmission is given by
\begin{IEEEeqnarray}{rCl}
\Delta_u(L_s,L_c)=\min\{L_s+L_c,1\}.
\end{IEEEeqnarray}
\end{lemma}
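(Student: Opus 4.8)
The plan is to analyze the uncoded scheme directly using the expression $ED_u=\mathrm{E}_{H,\Gamma}[(1+H+\Gamma)^{-1}]$ with $H=\rho H_0$ and $\Gamma=\rho\Gamma_0$, where $H_0\sim\Upsilon(L_c,L_c^{-1})$ and $\Gamma_0\sim\Upsilon(L_s,L_s^{-1})$, and extract the exponential decay rate as $\rho\to\infty$. The key observation is that $(1+\rho H_0+\rho\Gamma_0)^{-1}\doteq\min\{1,(\rho H_0)^{-1},(\rho\Gamma_0)^{-1}\}$, so the distortion is dominated by the event that \emph{both} $H_0$ and $\Gamma_0$ are small, i.e., of order $\rho^{-1}$. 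I would split the expectation into the outage-like region $\{H_0+\Gamma_0\lesssim\rho^{-1}\}$, where the integrand is $\Theta(1)$, and its complement, where the integrand is $\doteq(\rho(H_0+\Gamma_0))^{-1}$.

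First I would bound the contribution of the small-gain region. Near the origin the gamma densities behave like $p_{H_0}(h)\doteq h^{L_c-1}$ and $p_{\Gamma_0}(\gamma)\doteq\gamma^{L_s-1}$ (up to constants absorbed by $\doteq$), so the probability that $H_0\lesssim\rho^{-1}$ and $\Gamma_0\lesssim\rho^{-1}$ scales as $\rho^{-L_c}\cdot\rho^{-L_s}=\rho^{-(L_s+L_c)}$; on this region the integrand is $O(1)$, giving a contribution $\doteq\rho^{-(L_s+L_c)}$. Second, for the complementary region I would use the change of variables $h=\rho h_0$, $\gamma=\rho\gamma_0$ and the standard Laplace/Varadhan-type estimate: the integral $\int\int \frac{\rho^{-(L_s+L_c)}(h_0\rho)^{L_c-1}(\gamma_0\rho)^{L_s-1}}{1+\rho(h_0+\gamma_0)}\,d(\rho h_0)\,d(\rho\gamma_0)$ is governed by the competition between the density exponents (which reward small gains) and the $1/(1+\cdot)$ factor. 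Carrying this through, the dominant scaling is $\rho^{-\min\{L_s+L_c,\,1\}}$: when $L_s+L_c<1$ the density is too concentrated near the origin and the $\rho^{-(L_s+L_c)}$ term from the small-gain region wins, whereas when $L_s+L_c\geq 1$ the effective decay saturates at $\rho^{-1}$, exactly as in the point-to-point case where a single Gaussian source over an AWGN link gives distortion exponent $1$.

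Combining the two regions gives $ED_u\doteq\rho^{-\min\{L_s+L_c,1\}}$, hence $\Delta_u(L_s,L_c)=\min\{L_s+L_c,1\}$. The main obstacle is making the Laplace-type estimate on the complementary region rigorous in all parameter ranges simultaneously — in particular, handling the boundary case $L_s+L_c=1$ (where a logarithmic factor may appear but is exponentially negligible), and carefully justifying that the tails $H_0\to\infty$ or $\Gamma_0\to\infty$, which decay like $e^{-L_c h_0}$ and $e^{-L_s\gamma_0}$, contribute negligibly at every SNR. These are routine once one sets up the dyadic decomposition of the $(h_0,\gamma_0)$ plane and uses the explicit gamma tails, so I would relegate the bookkeeping to the appendix, mirroring the structure of the distortion-exponent derivations in Appendix~\ref{app:DiestExpDerivations}.
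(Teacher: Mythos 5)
Your proposal is correct, and it reaches the right answer by a more elementary route than the paper's. The paper states this lemma without a dedicated proof and points to the generic machinery of Appendices~\ref{app:PartiallyInformedGeneral} and~\ref{app:DiestExpDerivations}: substitute $H_0=\rho^{-A}$, $\Gamma_0=\rho^{-B}$, use the SNR exponents $S_A(\alpha)=L_c\alpha$, $S_B(\beta)=L_s\beta$, and invoke Varadhan's lemma to reduce the computation to the optimization $\inf_{\alpha,\beta\geq 0}\bigl[\max\{(1-\alpha)^+,(1-\beta)^+\}+L_c\alpha+L_s\beta\bigr]$, whose value is $\min\{1,L_s+L_c\}$ (attained at $\alpha=\beta=0$ when $L_s+L_c\geq 1$ and at $\alpha=\beta=1$ otherwise). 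You instead stay in the original $(h_0,\gamma_0)$ variables, split on $\{H_0+\Gamma_0\lesssim\rho^{-1}\}$ versus its complement, and use the near-origin power-law behavior of the gamma densities plus the exponential tails. The two arguments rest on the same fact --- your statement that $p_{H_0}(h)\doteq h^{L_c-1}$ near the origin is exactly what makes $S_A(\alpha)=L_c\alpha$ in the paper's formulation --- so the underlying Laplace principle is identical; what differs is the packaging. Your version is self-contained and avoids Varadhan's lemma, at the cost of case-by-case integral estimates (including the convolution density $u^{L_c+L_s-1}$ of $H_0+\Gamma_0$ near zero and the boundary case $L_s+L_c=1$, which you correctly flag as producing only a logarithmic factor). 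The paper's version buys uniformity: the same change of variables and the same lemma dispatch uncoded transmission, SSCC, JDS, and S-HDA alike, turning each into a small linear program over the exponents. One minor point to tighten in a written-up version: when bounding the complement region you should note explicitly that the events where only one of $H_0,\Gamma_0$ is of order $\rho^{-1}$ contribute at most $\rho^{-(1+\min\{L_c,L_s\})}$, which is dominated by $\rho^{-1}$ and hence harmless; your single-variable reduction to $U=H_0+\Gamma_0$ handles this implicitly but it is worth a sentence.
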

As expected from Theorem \ref{th:OptUnc}, uncoded transmission achieves the optimal distortion exponent for $L_s\leq1$, and it is suboptimal for $L_s>1$.

\begin{lemma}
The distortion exponent achievable by SSCC is given by
\begin{IEEEeqnarray}{rCl}
\Delta_{s}(L_s,L_c)=\begin{cases}
1-\frac{(1-L_s)^2}{L_c+1-L_s}&\text{if }L_s\leq 1,\\
\frac{L_s(2 L_c+1) -L_c-1}{L_s(L_c+1)-1}&\text{if }L_s>1.
\end{cases}
\end{IEEEeqnarray}
%If binning is not used, the achievable distortion exponent is given by
%\begin{IEEEeqnarray}{rCl}
%\Delta_{nb}(L_s,L_c)=\begin{cases}
%1-\frac{(1-L_s)^2}{L_c+1-L_s}&\text{if }L_s\leq 1,\\
%1&\text{if } L_s>1.
%\end{cases}
%\end{IEEEeqnarray}
\end{lemma}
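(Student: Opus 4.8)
The plan is to carry out a high-SNR outage analysis of $ED_{s}(R_{s},R_{c})$ and then optimize over the rate pair. I would parametrize the rates by their multiplexing gains, writing $2^{2R_{c}}\doteq\rho^{r_{c}}$ and $2^{2(R_{s}+R_{c})}\doteq\rho^{r}$ with $r_{c}\in[0,1]$ and $r_{s}\triangleq r-r_{c}\in[0,1]$ (it is standard that rate sequences growing linearly in $\tfrac12\log\rho$ are exponent-optimal, that only the coefficient of $\log\rho$ matters for $\doteq$, and that $r_{c}>1$ forces permanent channel outage while $r_{s}>1$ forces permanent source-decoding failure). Under the normalization of Section~\ref{sec:FiniteSNRResults}, with $H_{0}\sim\Upsilon(L_{c},L_{c}^{-1})$ and $\Gamma_{0}\sim\Upsilon(L_{s},L_{s}^{-1})$, the essential small-ball estimates are $\Pr[H_{0}\le\epsilon]\doteq\epsilon^{L_{c}}$ and $\Pr[\Gamma_{0}\le\epsilon]\doteq\epsilon^{L_{s}}$. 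From the definition of $\mathcal{O}_{s}$ together with $I(X;Y)=\tfrac12\log(1+\rho H_{0})$ and $I(S;W|T)=\tfrac12\log\!\big(1+(2^{2(R_{s}+R_{c}+\epsilon)}-1)/(1+\rho\Gamma_{0})\big)$, the channel-decoding failure reduces, up to subexponential factors, to the event $\{H_{0}\le\rho^{r_{c}-1}\}$, of probability $\doteq\rho^{-L_{c}(1-r_{c})}$, and the source-decoding failure reduces to $\{\Gamma_{0}\le\rho^{r_{s}-1}\}$, of probability $\doteq\rho^{-L_{s}(1-r_{s})}$.

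Next I would bound the two conditional distortion contributions in $ED_{s}(R_{s},R_{c})=\mathrm{E}_{\mathcal{O}_{s}^{c}}[D_{d}(R_{s}+R_{c},\Gamma)]+\mathrm{E}_{\mathcal{O}_{s}}[D_{d}(0,\Gamma)]$ via Laplace-type estimates of gamma integrals. On $\mathcal{O}_{s}^{c}$ the distortion is $(\Gamma+\rho^{r})^{-1}=(\rho\Gamma_{0}+\rho^{r})^{-1}$, and $\mathrm{E}_{\mathcal{O}_{s}^{c}}[(\rho\Gamma_{0}+\rho^{r})^{-1}]\doteq\rho^{-r}$ whenever $r\ge1$, while for $L_{s}\le1$ in the no-binning range $r\le1$ it equals $\rho^{-(L_{s}+r(1-L_{s}))}$. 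On $\mathcal{O}_{s}$ the distortion is $(1+\rho\Gamma_{0})^{-1}$; splitting $\mathcal{O}_{s}$ into its channel-outage part (independent of $\Gamma_{0}$, by independence of $H_{0}$ and $\Gamma_{0}$) and its source-outage part, and using $\mathrm{E}[(1+\rho\Gamma_{0})^{-1}]\doteq\rho^{-\min\{L_{s},1\}}$ and $\int_{0}^{\rho^{r_{s}}}u^{L_{s}-1}(1+u)^{-1}\,du\doteq\rho^{r_{s}(L_{s}-1)^{+}}$, the channel-outage part contributes $\rho^{-(L_{c}(1-r_{c})+\min\{L_{s},1\})}$ and the source-outage part contributes $\rho^{-(L_{s}-r_{s}(L_{s}-1)^{+})}$. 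Summing the three terms, $-\lim_{\rho\to\infty}\log ED_{s}(R_{s},R_{c})/\log\rho$ equals the minimum of these three exponents.

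It remains to optimize over $(r_{c},r_{s})$. For $L_{s}\le1$ the source-outage exponent equals $L_{s}$ independently of $r_{s}$, which would cap the distortion exponent at $L_{s}$; since taking $r_{s}=0$ (no binning, consistent with the corollary following Lemma~\ref{lem:OptLay}) removes this term, the optimum sets $R_{s}^{*}=0$ and maximizes $\min\{\,L_{s}+r(1-L_{s}),\;L_{c}(1-r)+L_{s}\,\}$ over $r\in[0,1]$; the first expression is increasing and the second decreasing in $r$, so they balance at $r^{*}=L_{c}/(L_{c}+1-L_{s})\in[0,1]$, giving $\Delta_{s}=L_{s}+r^{*}(1-L_{s})=1-(1-L_{s})^{2}/(L_{c}+1-L_{s})$. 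For $L_{s}>1$ binning is beneficial and all three exponents already exceed $1$, so one should take $r\ge1$ and balance $r=L_{c}(1-r_{c})+1=L_{s}-r_{s}(L_{s}-1)$; solving this linear system yields $r_{s}=(1+L_{c})(L_{s}-1)/(L_{s}(1+L_{c})-1)$ and $r_{c}=L_{s}L_{c}/(L_{s}(1+L_{c})-1)$, which one checks lie in $(0,1)$ with $r=r_{c}+r_{s}\ge1$, whence $\Delta_{s}=L_{s}-r_{s}(L_{s}-1)=(L_{s}(2L_{c}+1)-L_{c}-1)/(L_{s}(L_{c}+1)-1)$. Finally, applying the same parametrized analysis to an arbitrary rate sequence (only the leading $\log\rho$ behavior matters for $\doteq$) shows that no other choice of rates improves the exponent, so these values are exact.

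The main obstacle I anticipate is the collection of Laplace-type estimates for the conditional distortions — in particular pinning down $\mathrm{E}_{\mathcal{O}_{s}^{c}}[(\rho\Gamma_{0}+\rho^{r})^{-1}]$ and the source-outage term across the regimes $L_{s}\le1$, $L_{s}>1$ and $r\le1$, $r\ge1$, and correctly recognizing that the source-outage contribution is precisely what makes binning strictly suboptimal when $L_{s}\le1$ yet limits the achievable exponent when $L_{s}>1$. Once these exponents are in hand, the two-dimensional maximization is routine, modulo the feasibility check that the balancing point $(r_{c}^{*},r_{s}^{*})$ lies in $[0,1)^{2}$.
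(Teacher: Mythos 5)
Your proposal is correct and follows essentially the same route as the paper's proof: the same multiplexing-gain parametrization $R_c=\tfrac{r_c}{2}\log\rho$, $R_s=\tfrac{r_s}{2}\log\rho$, the same decomposition of $\mathcal{O}_s$ into a channel-outage and a source-outage part with gamma small-ball exponents $L_c(1-r_c)$ and $L_s(1-r_s)$, and the same balancing of the three exponent terms yielding $r_s^*=\frac{(L_c+1)(L_s-1)}{L_s(L_c+1)-1}$, $r_c^*=\frac{L_cL_s}{L_s(L_c+1)-1}$ for $L_s>1$ and $R_s^*=0$ with $r^*=L_c/(L_c+1-L_s)$ for $L_s\le1$. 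The only cosmetic difference is that you extract the integral exponents by direct Laplace-type estimates, whereas the paper performs the change of variables $H_0=\rho^{-A}$, $\Gamma_0=\rho^{-B}$ and invokes Varadhan's lemma; the resulting exponents and optimal rates coincide.
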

%From Lemma \ref{lem:OptLay}, we know that binning is suboptimal for $L_s\leq 1$ irrespective of the channel distribution, and both schemes achieve the same distortion exponent in this regime. 
Note that when $L_s=1$, SSCC achieves the optimal distortion exponent of $1$. 
%However, when $L_s>1$,  if binning is not used the scheme cannot exploit the side information state properly, and achieves the same distortion exponent as uncoded transmission. This proves that binning is required in this regime.

\begin{lemma}
The distortion exponent achievable by JDS is given by
\begin{IEEEeqnarray}{rCl}
\Delta_j(L_s,L_c)=\begin{cases}
1-\frac{(1-L_s)^2}{L_c+1-L_s}&\text{if } L_s\leq1,\\
2-\frac{1}{L_s}&\text{if }1<L_s\leq 1+L_c,\\
1+\frac{L_c}{L_c+1} & \text{if }L_s> L_c+1.
\end{cases}
\end{IEEEeqnarray}
\end{lemma}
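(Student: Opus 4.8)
The plan is to run the standard high-SNR outage analysis for a single-layer digital scheme. I parametrize the JDS quantization rate of Section~\ref{ssec:Joint} as $R_j=\tfrac{r}{2}\log_2\rho$ for a parameter $r\geq0$ to be optimized, so that $2^{2R_j}\doteq\rho^{r}$, and recall from \eqref{eq:Dj}, \eqref{eq:DigiDist} and \eqref{eq:OutageSets} that $ED_j(R_j)=\mathrm{E}_{\mathcal{O}^c_j}[(\Gamma+\rho^{r})^{-1}]+\mathrm{E}_{\mathcal{O}_j}[(1+\Gamma)^{-1}]$ with $\mathcal{O}_j=\{(h,\gamma):h(\gamma+1)\leq 2^{2R_j}-1\}$ (the $\epsilon$ in the rate is immaterial for the exponent). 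Using the normalization $H=\rho H_0$, $\Gamma=\rho\Gamma_0$ with $H_0\sim\Upsilon(L_c,L_c^{-1})$ and $\Gamma_0\sim\Upsilon(L_s,L_s^{-1})$, I perform the exponential change of variables $H_0=\rho^{-v}$, $\Gamma_0=\rho^{-w}$, so $H\doteq\rho^{1-v}$ and $\Gamma\doteq\rho^{1-w}$. Since the gamma density \eqref{eq:Gammapdf} behaves like $\gamma_0^{L-1}$ near the origin and decays exponentially at infinity, on $v>0$ one has $p_{H_0}(\rho^{-v})\,|dH_0|\doteq\rho^{-vL_c}\,dv$, and likewise $p_{\Gamma_0}(\rho^{-w})\,|d\Gamma_0|\doteq\rho^{-wL_s}\,dw$ for $w>0$; the mass on $H_0>1$ or $\Gamma_0>1$ is doubly-exponentially small and is discarded.

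Next I rewrite the two expectations as integrals over the nonnegative $(v,w)$-quadrant. In these coordinates the outage condition reads $v\geq v_o(w;r)\triangleq\big(1+(1-w)^+-r\big)^+$; the decoded distortion satisfies $(\Gamma+\rho^{r})^{-1}\doteq\rho^{-\max\{(1-w)^+,\,r\}}$; and the outage distortion satisfies $(1+\Gamma)^{-1}\doteq\rho^{-(1-w)^+}$. Applying Laplace's principle to each integral — the inner minimization over $v$ is attained at $v=v_o(w;r)$ in the outage region and at $v=0$ in the decoded region, since the integrand's exponent is affine and increasing in $v$ — I obtain $ED_j(R_j)\doteq\rho^{-E_j(r)}$ with
\begin{equation}\label{eq:EjDef}
E_j(r)=\min\!\left\{\ \inf_{w\geq0}\big[v_o(w;r)L_c+wL_s+(1-w)^+\big],\ \ \inf_{w\geq0:\,v_o(w;r)>0}\big[wL_s+\max\{(1-w)^+,\,r\}\big]\right\}\!.
\end{equation}
Because $ED^*_j\doteq\min_{R_j}ED_j(R_j)$, the achievable distortion exponent is $\Delta_j(L_s,L_c)=\max_{r\geq0}E_j(r)$, and it suffices to optimize $r$ over $[0,2]$ (for $r\geq2$ the decoded region is empty and $E_j(r)$ reduces to the pure-side-information value).

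What remains, and this is the bulk of the work, is to solve the finite-dimensional piecewise-linear program \eqref{eq:EjDef}. For fixed $r$ I split the $w$-axis at the breakpoints $w=1$ (where $(1-w)^+$ changes slope), $w=2-r$ (where $v_o(w;r)$ stops being active) and $w=1-r$ (where $\max\{(1-w)^+,r\}$ switches between $1-w$ and $r$); on each resulting subinterval both inner objectives are affine in $w$, so each infimum is attained at an endpoint, and $E_j(r)$ becomes an explicit piecewise-linear function of $r$. The coefficient of $w$ in the outage objective is $L_s-1-L_c$ on $w<2-r$ and $L_s-1$ on $w\geq1$; the sign changes of these two quantities are exactly what produces the breakpoints $L_s=1$ and $L_s=1+L_c$ in the final answer. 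Finally I maximize $E_j(r)$ over $r$: the outage branch of \eqref{eq:EjDef} is nonincreasing in $r$ and the decoded branch is nondecreasing, so the optimum $r^{*}$ is the balance point of the two. Carrying this out gives $r^{*}=L_c/(1+L_c-L_s)\le 1$ with $E_j(r^{*})=\big(L_c+L_s-L_s^2\big)/(1+L_c-L_s)=1-(1-L_s)^2/(L_c+1-L_s)$ when $L_s\leq1$; $r^{*}=2-1/L_s$ with $E_j(r^{*})=r^{*}=2-1/L_s$ when $1<L_s\leq1+L_c$; and $r^{*}=2-1/(1+L_c)$ with $E_j(r^{*})=r^{*}=1+L_c/(L_c+1)$ when $L_s>1+L_c$ (in which regime the worst-case $w$ is pinned at $0$). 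These three values agree at the junctions $L_s=1$ and $L_s=1+L_c$, giving the stated formula.

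The main obstacle I anticipate is purely organizational: tracking which corner of the feasible region is binding as $r$ and $L_s$ sweep across the breakpoints, and verifying that each candidate maximizer stays feasible (for instance that $r^{*}=L_c/(1+L_c-L_s)\leq1$ holds \emph{iff} $L_s\leq1$, and that $r^{*}=2-1/L_s\in[1,2)$ for $L_s>1$). The analytic steps — the change of variables, the gamma tail estimates, and the Laplace-principle passage from the integrals to the $\min/\max$ of exponents in \eqref{eq:EjDef} — are routine and identical in spirit to the derivation of Lemma~\ref{lem:DistExpPartially}; only the book-keeping in the case split is delicate.
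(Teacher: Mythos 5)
Your proposal is correct and follows essentially the same route as the paper's Appendix on the JDS exponent: the same exponential change of variables $H_0=\rho^{-v}$, $\Gamma_0=\rho^{-w}$, the same high-SNR outage region, the same two-term Varadhan/Laplace exponent (your $E_j(r)$ is exactly the paper's $\min\{\Delta_{j1}(r_j),\Delta_{j2}(r_j)\}$), and the same balancing of the two branches yielding $r^*=L_c/(1+L_c-L_s)$, $r^*=2-1/L_s$, and $r^*=1+L_c/(L_c+1)$ in the three regimes. Your explicit monotonicity argument (outage branch nonincreasing, decoded branch nondecreasing in $r$) is a clean justification of the balance-point step that the paper performs by direct equating, and all your candidate optimizers and values match the stated lemma.
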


JDS achieves the same distortion exponent as SSCC for $L_s\leq 1$. However, interestingly, for $1\leq L_s\leq 1+L_c$, JDS achieves the optimal distortion exponent and then saturates for $L_s>1+L_c$. Observe that, as $L_s$ increases, the achievable distortion exponent with SSCC converges to the performance of JDS.

\begin{lemma}\label{lem:S-HDAexponent}
The distortion exponent achievable by S-HDA and HDA is given by
\begin{IEEEeqnarray}{rCl}\label{eq:S-HDAexponent}
\Delta_{shda}(L_s,L_c)=\min\{1,L_s+L_c\}+\frac{\min\{1,L_c\}(L_s-1)^+}{L_s-1+\min\{1,L_c\}}.
\end{IEEEeqnarray}
\end{lemma}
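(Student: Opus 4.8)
The strategy is to analyze the expected distortion $ED_{shda}(P_d,\eta)$ in \eqref{eq:DistS-HDA} in the high-SNR regime, splitting it into the contribution from the no-outage event $\mathcal{O}_h^c$ and the outage event $\mathcal{O}_h$, and then optimizing the SNR exponents of the power split $P_d$ and of $\eta^2$. First I would parametrize the design variables on the SNR scale: write $P_d \doteq \rho^{-b}$ for some $b\in[0,\infty]$ (so $b=\infty$ recovers uncoded, $b=0$ keeps $P_d$ of order one), and since $\eta^2 = P_d(2^{2R_h-\epsilon}-1)$ I would similarly set the quantization rate so that $2^{2R_h}\doteq\rho^{r}$, giving $\eta^2\doteq\rho^{r-b}$. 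With $H_0\doteq\rho^{-\alpha}$ (so the channel gain is $h=\rho H_0\doteq\rho^{1-\alpha}$) and $\Gamma_0\doteq\rho^{-\beta}$ (so $\gamma\doteq\rho^{1-\beta}$), the Nakagami assumption gives the standard exponential-order densities $p_{H_0}(\rho^{-\alpha})\doteq\rho^{-L_c\alpha}$ for $\alpha\geq 0$ and $p_{\Gamma_0}(\rho^{-\beta})\doteq\rho^{-L_s\beta}$ for $\beta\geq0$, so that probabilities of exponential-order events are computed by a Laplace/Varadhan-type argument: $\Pr[\text{event}]\doteq\rho^{-\inf(L_c\alpha+L_s\beta)}$ over the $(\alpha,\beta)$ region defining the event.

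Next I would translate the outage region \eqref{eq:Outagehybrid}, $P_d h(1+P_d\gamma)\le P_d h(\sqrt{P_a}-\eta)^2+\eta^2$, into an inequality among the exponents $\alpha,\beta,b,r$; since $P_a\to 1$ and $\eta\to 0$ at high SNR the term $h(\sqrt{P_a}-\eta)^2$ is of order $h\doteq\rho^{1-\alpha}$, and the outage condition becomes (to exponential order) essentially $\rho^{1-\alpha}\rho^{1-\beta-b}\stackrel{.}{\le}\rho^{r-b}$, i.e. the decodability threshold $1-\alpha+(1-\beta)\stackrel{.}{\ge} r$ fails. On $\mathcal{O}_h^c$ the achievable distortion \eqref{eq:DistS-HDA2} behaves like $D_h\doteq P_d/(\eta^2(1+h))\doteq\rho^{-b}/(\rho^{r-b}\rho^{(1-\alpha)^+})\doteq\rho^{-r-(1-\alpha)^+}$, while on $\mathcal{O}_h$ the distortion $D_h^{out}\doteq(1+hP_a/(1+hP_d)+\gamma)^{-1}$ behaves like $\rho^{-\max\{(1-\alpha-b)^+,\,(1-\beta)^+\}}$ roughly — here one must be careful that when $h P_d = \rho^{1-\alpha-b}\gg1$ the analog term saturates at $P_a/P_d\doteq\rho^{b}$, so $D_h^{out}\doteq\rho^{-\min\{1-\alpha,\,b\}^+ }$ combined with $\rho^{-(1-\beta)^+}$, whichever exponent is larger. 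Then $ED_{shda}\doteq\rho^{-\Delta_{shda}}$ with
\begin{IEEEeqnarray}{rCl}
\Delta_{shda}=\max_{b\ge0,\,r\ge0}\min\Bigl\{\; r+1 \;-\!\!\!\inf_{(\alpha,\beta)\in\mathcal{O}_h^c}\!\!\!\bigl[(1-\alpha)^+ - L_c\alpha - L_s\beta\bigr]^-,\;\; \inf_{(\alpha,\beta)\in\mathcal{O}_h}\bigl[\text{(outage distortion exponent)}+L_c\alpha+L_s\beta\bigr]\Bigr\},\nonumber
\end{IEEEeqnarray}
and the bulk of the work is evaluating these two infima as piecewise-linear functions of $(b,r)$ and then maximizing over $(b,r)\ge0$. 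I would carry out the $(\alpha,\beta)$ infima first (linear programs over a polyhedron), obtaining a piecewise expression, and then the outer max; the optimal $r$ will trade off decoding reliability (small $r$ shrinks $\mathcal{O}_h$) against quantization accuracy (large $r$ improves $D_h$), and the optimal $b$ balances analog robustness in outage against the digital SNR. Finally I would verify that the optimum yields exactly $\min\{1,L_s+L_c\}+\frac{\min\{1,L_c\}(L_s-1)^+}{L_s-1+\min\{1,L_c\}}$, checking the boundary cases $L_s\le1$ (where the second term vanishes and we recover $\Delta_u$, consistent with choosing $b=\infty$), $L_c\ge1$ versus $L_c<1$, and confirming that the HDA special case $P_a=0$ (i.e. forcing $b=0$) attains the same exponent.

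The main obstacle I anticipate is the careful bookkeeping of the outage-regime distortion $D_h^{out}$: unlike the clean digital-decoding case, the analog path's contribution saturates when $hP_d\gg1$, so the exponent of $D_h^{out}$ is not simply $\max$ of individual exponents but involves the $\min\{1-\alpha,b\}$ truncation, and getting this region decomposition right (together with the degenerate cases $P_a\to1$, $\eta\to0$ that were used to simplify $\mathcal{O}_h$) is where an error would most easily creep in. A secondary subtlety is justifying the exponential-order manipulations — that the integrals concentrate at the boundary of the relevant region and that the $\doteq$ relations are preserved under the finite sums, products, and the minimization over a compact parameter set — which is routine but must be stated; I would lean on the $\doteq$ calculus set up in the Notation paragraph and the standard SNR-exponent machinery (as in \cite{gunduz2008joint}, \cite{Caire2007hybrid}, \cite{Laneman2005SourceChannelDiv}) rather than re-deriving it.
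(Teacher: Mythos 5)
Your overall route is the same as the paper's: pass to SNR exponents $h\doteq\rho^{1-\alpha}$, $\gamma\doteq\rho^{1-\beta}$ with densities $\rho^{-L_c\alpha}$, $\rho^{-L_s\beta}$, split the expected distortion into outage and no-outage integrals, apply Varadhan's lemma to each, and maximize the resulting piecewise-linear exponents over the design parameters. The main structural difference is that the paper does \emph{not} optimize over the power split: it fixes $P_d=1$, $P_a=0$ at the outset (pure HDA) and only optimizes over the single exponent $r_h$ of $\eta^2$, allowing $r_h\in\mathds{R}$ (negative $r_h$, i.e.\ $\eta\to 0$, is what recovers the $\min\{L_s+L_c,1\}$ branch for $L_s\leq1$; your parametrization reaches the same point via $b>r$). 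This choice makes the outage-case distortion exactly $(1+\gamma)^{-1}\doteq\rho^{-(1-\beta)^+}$, so the $\min\{1-\alpha,b\}$ saturation issue you (rightly) worry about for $D_h^{out}$ simply never arises. Since HDA is a special case of S-HDA, computing the HDA exponent suffices for the achievability claim; keeping $b$ general buys you nothing here and creates the bookkeeping hazards you anticipate.

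There is, however, a concrete error in your sketch that would propagate to a wrong answer if carried through literally: the no-outage asymptotic $D_h\doteq P_d/(\eta^2(1+h))\doteq\rho^{-r-(1-\alpha)^+}$ drops the $P_d(1+\gamma)$ term from the denominator of (\ref{eq:DistS-HDA2}). The correct exponential order (for $P_d=1$, $P_a=0$) is $D_h\doteq\rho^{-\max\{(1-\beta)^+,\,(1-\alpha)^++r_h\}}$, and the max is essential: the paper's $\Delta_{h2}(r_h)=\inf_{\mathcal{A}_h^c}\max\{(1-\beta)^+,(1-\alpha)^++r_h\}+L_c\alpha+L_s\beta$ evaluates to $1+\min\{1,L_c\}r_h$, whereas your version (which keeps only the second argument of the max) would give $\min\{1,L_c\}+ \min\{1,L_c\}r_h$ at $\beta^*=0$ --- strictly smaller for $L_c<1$, and this corrupts the balancing condition $\Delta_{h1}(r_h^*)=\Delta_{h2}(r_h^*)$ that determines $r_h^*=(L_s-1)/(L_s-1+\min\{1,L_c\})$ and hence the second term of (\ref{eq:S-HDAexponent}). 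Relatedly, your displayed ``master formula'' for the no-outage contribution ($r+1$ minus an infimum of a bracketed negative part) does not have the correct form $\inf[\,\text{distortion exponent}+L_c\alpha+L_s\beta\,]$ over the no-outage region; once the distortion asymptotic is fixed, the rest of your plan (two linear programs in $(\alpha,\beta)$, then an outer max equalizing the two exponents) matches the paper's execution.
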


Lemma \ref{lem:S-HDAexponent} reveals that the robustness provided by the uncoded layer in S-HDA is not required in the high SNR regime to achieve the optimal distortion exponent, and allocating all the available power to the HDA layer of the S-HDA scheme is sufficient. However, we remark that, in terms of the expected distortion in the low SNR regime pure HDA is not sufficient to achieve a performance close to the lower bound, and the uncoded layer improves the performance in general, as observed in the previous section.

HDA achieves the optimal distortion exponent for $L_c\geq1$ while the rest of the proposed schemes are suboptimal. However, when $L_c<1$, JDS outperforms HDA for $1\leq L_s\leq2$. Nevertheless, as $L_s$ increases, HDA converges to the distortion exponent of the informed encoder lower bound, despite the uncertainty in the channel state.

We can see that in the limit $L_s\rightarrow \infty$, with $0<L_c\leq 1$, we have
\begin{IEEEeqnarray}{rCl}
\Delta^*(\infty,L_c)=\Delta_{\mathrm{inf}}(\infty,L_c)=\Delta_{hda}(\infty,L_c)=1+L_c,\nonumber
\end{IEEEeqnarray}
whereas
\begin{IEEEeqnarray}{rCl}
\Delta_{s}(\infty,L_c)=\Delta_{j}(\infty,L_c)=1+\frac{L_c}{L_c+1}<1+L_c.\nonumber
\end{IEEEeqnarray}
This result suggests that, as the side information fading state becomes more deterministic, the performance of HDA converges to the informed encoder lower bound, while the rest of the schemes perform significantly worse than HDA.

Combining the achievable distortion exponents of the JDS and HDA schemes, we can characterize the optimal distortion exponent $\Delta^*(L_s,L_c)$ in certain regimes, as given next. See Figure \ref{fig:TheoFig} for an illustration of the schemes achieving the optimal distortion exponent.

\begin{theorem}\label{th:OptimalDistortionExp}
Consider a Nakagami fading channel with ${H}_{0}\sim\Upsilon(L_c,L_c^{-1})$ and a Nakagami fading side information with $\Gamma_{0}\sim \Upsilon(L_s,L_s^{-1})$. If $L_c\geq1$, the optimal distortion exponent is achieved by  the HDA scheme, and is given by
\begin{IEEEeqnarray}{rCl}
\Delta^*(L_s,L_c)=1+\left(1-\frac{1}{L_s}\right)^+.
\end{IEEEeqnarray}

If $L_c< 1$, and $L_s\leq 1+L_c$, the optimal distortion exponent is given by
\begin{IEEEeqnarray}{rCl}
\Delta^*(L_s,L_c)=\min\{1,L_s+L_c\}+\left(1-\frac{1}{L_s}\right)^+,
\end{IEEEeqnarray}
and is achieved by uncoded transmission and HDA when $L_s\leq 1$, and by JDS when $1\leq L_s\leq L_c+1$.
\end{theorem}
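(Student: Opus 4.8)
The plan is to combine the upper bound on $\Delta^*(L_s,L_c)$ from $\min\{\Delta_{pe},\Delta_{\mathrm{inf}}\}$ with the achievable exponents $\Delta_{shda}$, $\Delta_j$ and $\Delta_u$, and check that in each of the stated parameter regions one of the achievable exponents meets the upper bound. First I would treat the case $L_c\geq 1$. Here $\Delta_{\mathrm{inf}}(L_s,L_c)=1+\min\{L_s,1\}$ and $\Delta_{pe}(L_s,L_c)=1+(1-1/L_s)^+$, and since $L_c\geq 1$ one has $\frac{1}{(1-L_c)^+}=+\infty$, so the combined upper bound from the preceding theorem is exactly $2-\frac{1}{L_s}=1+(1-\frac1{L_s})^+$ (reading $2-1/L_s$ as $1$ when $L_s\leq 1$). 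Then I would invoke Lemma \ref{lem:S-HDAexponent}: when $L_c\geq 1$ we have $\min\{1,L_c\}=1$, so $\Delta_{shda}(L_s,L_c)=\min\{1,L_s+1\}+\frac{(L_s-1)^+}{L_s-1+1}=1+\frac{(L_s-1)^+}{L_s}=1+(1-\frac1{L_s})^+$, which coincides with the upper bound. Since S-HDA with $P_a=0$ is the HDA scheme and Lemma \ref{lem:S-HDAexponent} states the same exponent for HDA, the optimal exponent is achieved by HDA, proving the first claim.

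Next I would handle $L_c<1$ together with $L_s\leq 1+L_c$, splitting into the sub-cases $L_s\leq 1$ and $1\leq L_s\leq 1+L_c$. For $L_s\leq 1$: the upper bound from the preceding theorem is $\min\{1,L_s+L_c\}$, which equals $\min\{1,L_s+L_c\}+(1-\frac1{L_s})^+$ since $(1-\frac1{L_s})^+=0$ for $L_s\leq 1$. By the uncoded lemma, $\Delta_u(L_s,L_c)=\min\{L_s+L_c,1\}$ meets this, and since $L_c<1$ gives $\min\{1,L_c\}=L_c$, Lemma \ref{lem:S-HDAexponent} yields $\Delta_{shda}=\min\{1,L_s+L_c\}+0$ as well, so both uncoded and HDA are optimal here. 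For $1\leq L_s\leq 1+L_c$: now $L_s>1$ so $\min\{1,L_s+L_c\}=1$ and the target exponent is $1+(1-\frac1{L_s})^+=2-\frac1{L_s}$. I need to confirm this is $\leq$ the upper bound: in this region $L_s\leq 1+L_c\leq\frac1{(1-L_c)^+}$ precisely when $(1-L_c)(1+L_c)\leq 1$, i.e. $1-L_c^2\leq 1$, which always holds, so the relevant branch of the combined upper bound is indeed $2-\frac1{L_s}$. Then the JDS lemma gives, for $1<L_s\leq 1+L_c$, exactly $\Delta_j(L_s,L_c)=2-\frac1{L_s}$, matching the bound; so JDS is optimal. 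The endpoint $L_s=1$ is consistent since both formulas give $\Delta=1$ there.

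The only genuinely delicate point is bookkeeping at the boundaries of the regions and making sure the piecewise formulas for the upper bound (from the combined-upper-bound theorem) and for $\Delta_j$, $\Delta_{shda}$, $\Delta_u$ are compared on the correct branch; in particular one must verify the inequality $1+L_c\leq \frac{1}{(1-L_c)^+}$ for $0<L_c<1$ so that the middle branch $2-1/L_s$ of the upper bound is the operative one throughout $1\le L_s\le 1+L_c$, and verify that at $L_s=1+L_c$ the saturated value $1+\frac{L_c}{L_c+1}$ of $\Delta_j$ agrees with $2-\frac{1}{1+L_c}$. All of these are one-line algebraic checks, so I expect no real obstacle beyond careful case analysis; the substantive content has already been packaged into the earlier lemmas on $\Delta_{pe}$, $\Delta_{\mathrm{inf}}$, $\Delta_j$, $\Delta_u$ and $\Delta_{shda}$.
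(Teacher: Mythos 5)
Your proposal is correct and follows essentially the same route as the paper: Theorem \ref{th:OptimalDistortionExp} is obtained there exactly by matching the combined upper bound $\min\{\Delta_{pe},\Delta_{\mathrm{inf}}\}$ against the achievable exponents $\Delta_{u}$, $\Delta_{j}$ and $\Delta_{shda}$ on each parameter branch, which is what you do. The boundary checks you flag (e.g.\ $1+L_c\leq \frac{1}{(1-L_c)^+}$ for $L_c<1$, and agreement of $\Delta_j$ at $L_s=1+L_c$) all go through as you expect.
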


%%----------------------------------------
\begin{figure}[t!]
\centering
\includegraphics[width=0.6\textwidth]{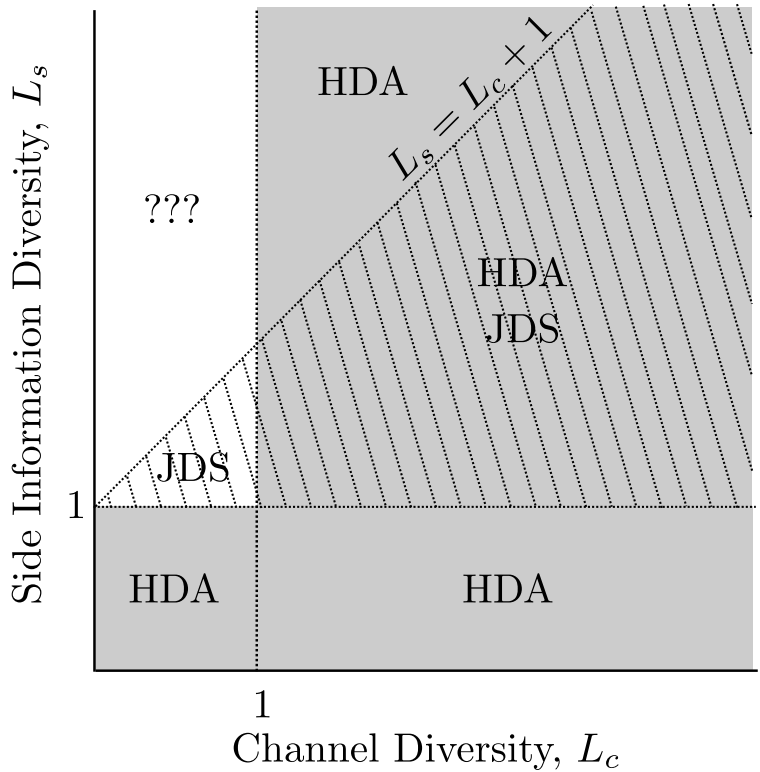}
 \caption{Illustration of the results in Theorem \ref{th:OptimalDistortionExp}. The schemes achieving the optimal distortion are included in each regime.}
\label{fig:TheoFig}
\end{figure}
%%%----------------------------------------

These analytical results are in line with the numerical analysis carried out in Section \ref{sec:FiniteSNRResults}. For $L_s=L_c=1$, all the schemes achieve the optimal distortion exponent $\Delta^*(1,1)=1$, which is far from the informed encoder upper bound given by $\Delta_{\mathrm{inf}}(1,1)=2$, as observed in Fig. \ref{fig:Distortion}. For $L_s=2$ and $L_c=1$, plotted in Fig. \ref{fig:Diversity2}, the optimal distortion exponent is given by $\Delta^*(2,1)=3/2$, which is achieved by HDA, while uncoded transmission is suboptimal since $\Delta_u(2,1)=1$. In this case JDS also achieves the optimal distortion exponent, while SSCC achieves a lower distortion exponent of $\Delta_{s}(2,1)=4/3$.
%As observed in the numerical analysis, if no binning is used, SSCC achieves the same distortion exponent as the uncoded transmission, and the one achieved by using only the side information sequence, i.e.,  $\Delta_u(2,1)=\Delta_{nb}(2,1)=\Delta_{no}(2,1)=1$. 
Although a similar behavior is observed for higher values of $L_s$, JDS does not achieve the optimal distortion exponent in general.
% For the case of $L_s=10$ and $L_c=1$ plotted in Fig. \ref{fig:Diversity10}, we have $\Delta^*(10,1)=19/10$, while $\Delta_j(L_s,1)=3/2$ for  $L_s\geq2$.
However, when $L_c=0.5$ and $L_s=1.5$, plotted in Fig. \ref{fig:Diversity30}, JDS achieves the optimal distortion exponent of $\Delta^*(1.5,0.5)=4/3$, while HDA achieves a smaller distortion exponent given by $\Delta_{shda}(1.5,0.5)=5/4$. 
%In this setup the performance of SSCC is improved if binning is used since  $\Delta_{s}(1.5,0.5)=6/5$, while if  binning is not used  we have $\Delta_{nb}(1.5,0.5)=1$, which coincides with the distortion exponent of uncoded transmission.

\vspace{-5mm}

\section{Conclusions}\label{sec:Conclusions}

We have studied the joint source-channel coding problem of transmitting a Gaussian source over a delay-limited block-fading channel when block-fading side information is available at the decoder. We have assumed that  the
receiver has full knowledge of the channel and side information states while the transmitter is aware only of their distributions. In the case of a static channel, we have shown the optimality of separate source and channel coding when the side information gain follows a discrete or a continuous quasiconcave distribution.

When both the channel and side information states are block-fading, the optimal performance is not known in general. We have proposed achievable schemes based on uncoded transmission, separate source and channel coding, joint decoding and hybrid digital-analog transmission. We have also derived a lower bound on the expected distortion by providing the encoder with the actual channel state. We call this the partially informed encoder lower bound, since the side information state remains unknown to the encoder. We have shown that this lower bound is tight for a certain class of continuous quasiconcave side information fading distributions, and the optimal performance is achieved by uncoded transmission.  This, to the best of our knowledge, constitutes the first communication scenario in which the uncoded transmission is optimal thanks to the existence of fading, while it would be suboptimal in the static case. We have also proved that joint decoding outperforms separate source and channel coding since the success of decoding at the receiver depends on the joint quality of the channel and side information states, rather than being limited by each of them separately. We have also shown  numerically that the proposed superposed hybrid digital-analog transmission performs very close to the lower bound for a wide range of channel and side-information distributions (in particular, we have considered Gamma distributed channel and side information gains with different shape parameters). However, it has also been observed that no unique transmission scheme outperforms others at all cases.

In the high SNR regime, we have obtained closed-form expressions for the distortion exponent, i.e., the optimal exponential decay rate of the expected distortion in the high SNR regime, of the proposed upper and lower bounds for Nakagami distributed channel and side information states. Aligned with the numerical results in the finite SNR regime, we have shown that the superposed hybrid digital-analog transmission outperforms other schemes in most cases and achieves the optimal distortion exponent for certain values of channel and side information diversity, and joint decoding achieves the optimal distortion exponent for some values of side information diversity when the channel diversity is less than one, in which case hybrid digital-analog transmission is suboptimal.

%%%%%%%%%%%%%%%%%%%%%%%%%%%%%%%%%%%%%%%%%%%%%%%%%%%%%%%%%%%%%%%%%%%%%%%%%%%%%%%%%%%%%%%%%%%%%%%%%%%%%%%%%%%%%%%%%%%%%%%%%%%%%%%%%%%%%%%%%%%%%%%%%%%%
%%%%%%%%%%%%%%%%%%%%%%%%%%%%%%%%%%%%%%%%%%%%%%%%%%%%%%%%%%%%%%%%%%%%%%%%%%%%%%%%%%%%%%%%%%%%%%%%%%%%%%%%%%%%%%%%%%%%%%%%%%%%%%%%%%%%%%%%%%%%%%%%%%%%
%%%%%%%%%%%%%%%%%%%%%%%%%%%%%%%%%%%%%%%%%%%%%%%%%%%%%%%%%%%%%%%%%%%%%%%%%%%%%%%%%%%%%%%%%%%%%%%%%%%%%%%%%%%%%%%%%%%%%%%%%%%%%%%%%%%%%%%%%%%%%%%%%%%%
%APPENDICES
%%%%%%%%%%%%%%%%%%%%%%%%%%%%%%%%%%%%%%%%%%%%%%%%%%%%%%%%%%%%%%%%%%%%%%%%%%%%%%%%%%%%%%%%%%%%%%%%%%%%%%%%%%%%%%%%%%%%%%%%%%%%%%%%%%%%%%%%%%%%%%%%%%%%
%%%%%%%%%%%%%%%%%%%%%%%%%%%%%%%%%%%%%%%%%%%%%%%%%%%%%%%%%%%%%%%%%%%%%%%%%%%%%%%%%%%%%%%%%%%%%%%%%%%%%%%%%%%%%%%%%%%%%%%%%%%%%%%%%%%%%%%%%%%%%%%%%%%%
%%%%%%%%%%%%%%%%%%%%%%%%%%%%%%%%%%%%%%%%%%%%%%%%%%%%%%%%%%%%%%%%%%%%%%%%%%%%%%%%%%%%%%%%%%%%%%%%%%%%%%%%%%%%%%%%%%%%%%%%%%%%%%%%%%%%%%%%%%%%%%%%%%%%

\appendices

\section{Proof of Theorem \ref{th:separation}}\label{app:separation}

%\subsection{Separation for Discrete Distributions}

The theorem is first proven when $\Gamma$ has a discrete distribution.
For $\Gamma$ with two states optimality of separation can be obtained as a special case of the model studied in \cite{Steinberg2006HierarchicalJointSourceChannel}. This result can be extended to $M$ receivers (or states). The converse follows by combining the
converses in \cite{Steinberg2006HierarchicalJointSourceChannel} and
\cite[Sec.VII]{Berger1985SideInfMayBeAbsent} for $M$ side
information states, i.e., $Y_{i,1}^n$, $i=1,...,M$, and the application of standard arguments. We obtain the single letter
condition,
\begin{IEEEeqnarray}{rCl}\label{eq:Conv}
\mathrm{C}\triangleq \max_{p(x)}I(X;Y)\geq R_{\mathrm{HB}}(\mathbf{D}),
\end{IEEEeqnarray}
where $R_{\mathrm{HB}}(\mathbf{D})$ is 
Heegard-Berger rate-distortion function for $M$ side information states \cite{Berger1985SideInfMayBeAbsent} and $\mathbf{D}=[D_1,...,D_M]$ with $D_i$ defined as the achievable distortion at the receiver $i$. We note that $ R_{\mathrm{HB}}(\mathbf{D})$
does not depend on the number of receivers but only on the sum
of the mutual information terms, each one corresponding to a receiver with side information $Y_i$, as discussed in
\cite{ng2007minimum}. Hence, the converse applies for countably
many receivers as well.
The achievability follows from the concatenation of the optimal  Heegard-Berger source code \cite[Sec.VII]{Berger1985SideInfMayBeAbsent}, followed by an optimal channel code at rate $R=\mathrm{C}$.

\vspace{-4mm}
\subsection{Separation for Continuous Quasiconcave Distributions}

To prove the optimality of separation when $p_{\Gamma}(\gamma )$ is a continuous quasiconcave distribution, we construct a lower bound on the expected distortion $ED^*_{sta}$ by discretizing the continuum of side information states, and let a genie exchange the current side information realization with the best side information in each discretization interval. Separation is optimal for the genie aided system, since it has a discrete number of side information states. In the limit of finer discretizations, the genie aided system can be shown to be achievable, similarly to\cite{ng2007minimum}, with a separates source and channel coding scheme.

First, we divide the side information state $\gamma $ into some partition
$\mathbf{s}$ given by $[s_0,s_1), [s_1,s_2),...$,
such that $s_0=0<s_1<...<s_i<\cdots$ and $\gamma  \in [s_{i-1},s_i)$ if
$s_{i-1}\leq \gamma <s_i$ for some $i=1,2,...$. The length of the
partition $[s_{i-1},s_i)$ is defined by $\Delta s_i$, i.e., $\Delta
s_i\triangleq s_i-s_{i-1}$. Let us define $\bar{\gamma} >0$ as the
super-level set $\bar{\gamma} $ satisfying (\ref{eq:OptGamma}). The
partition is chosen such that for some index $j$, we have
$s_{j}=\bar{\gamma} $. A fading realization belongs to the
interval $[s_{i-1}, s_i)$ with probability
$p_i=\int_{s_{i-1}}^{s_{i}}p_{\Gamma}(\gamma )d\gamma $.

We assume that when $\gamma $ belongs to the interval
$[s_{i-1},s_{i})$, a genie substitutes the current side information
sequence $Y=\gamma_c X+N$ with a sequence with gain $s_i$, i.e., $\tilde{Y}\triangleq \sqrt{s_i}X+N$.
Note that this receiver has a better performance as noise can be
added to $\tilde{Y}$ to recover a sequence equivalent to the original side information
sequence if required. Hence, the expected distortion for a given
partition $\mathbf{s}$, denoted by $ED^*_{gen}(\mathbf{s},\mathrm{C})$, is a
lower bound on the expected distortion of the continuous fading
setup. The genie aided system now consists of a countable number of
receivers. Due to the optimality of separation under countable number of side information states, $ED^*_{gen}(\mathbf{s},\mathrm{C})$ is given by the concatenation of a Heegard-Berger source encoder with side information states $s_1,s_2,...$, and a capacity achieving channel code, i.e., \mbox{
$ED^*_{gen}(\mathbf{s},\mathrm{C})=ED^*_{\mathbf{s}}(C)$, where }
\vspace{-2mm}
\begin{IEEEeqnarray}{rCl}
ED^*_{\mathbf{s}}(R)\triangleq \min_{\mathbf{D}:R_{\mathrm{HB}}(\mathbf{D})\leq R} \mathbf{p}^T\mathbf{D},
\end{IEEEeqnarray}
where $\mathbf{p}=[p_1,p_2,...]$ and $\mathbf{D}=[D_1,D_2,...]$ depend on the partition $\mathbf{s}$. This optimization problem is  studied in detail in \cite{ng2007minimum}.

% $ED^*_{gen}(\mathbf{s})$ is
%achieved by the concatenation of a Heegard-Berger source encoder
%with side information states $s_1,s_2,...$ and a capacity achieving
%channel code. Then, for a given partition $\mathbf{s}$, we have,
%\begin{IEEEeqnarray}{rCl}
%ED^*_{gen}(\mathbf{s})=ED_{C}^*(\mathrm{C}),
%\end{IEEEeqnarray}
%where $ED_C^*(\cdot)$ is defined in ($\ref{eq:GaussEDCount}$).
%
Next, we consider an upper bound on $ED_{sta}^*$. With the channel state $h_c$ known at the encoder, we concatenate of a single layer source encoder for side information state $\bar{\gamma} $,  with a channel code at a rate arbitrarily close to the capacity $\mathrm{C}$. This scheme achieves an expected distortion of $ED_{Q}^*(\mathrm{C})$. Then,
\vspace{-5mm}
\begin{IEEEeqnarray}{rCl}\label{eq:InequalityCont}
ED^*_{gen}(\mathbf{s},\mathrm{C})\leq ED^*_{sta}\leq ED_{Q}^*(\mathrm{C}).
\end{IEEEeqnarray}

As the partition gets finer in the sense that $\max_i \Delta s_i\rightarrow 0$, it is shown in \cite{ng2007minimum} (see\cite[Proposition
4]{ng2007minimum} and \cite[Proposition 5]{ng2007minimum}) that 
$\lim _{\max_i \Delta s_i\rightarrow
 0}ED^*_{\mathbf{s}}(R)=ED_{Q}^*(R)$. Therefore, $\lim _{\max_i \Delta s_i\rightarrow
 0}ED^*_{gen}(\mathbf{s},\mathrm{C})=ED_{Q}^*(\mathrm{C})$, and from inequality
 (\ref{eq:InequalityCont}) in the limit of finer
 partitions, $ ED^*_{sta}=ED_{Q}^*(\mathrm{C})$. This completes the proof.

\section{Proof of Lemma \ref{lem:ConvergentBounds}}\label{app:ConvergentBounds}

In order to show the convergence of $ED_{pi}^*$ to $ED_{\text{inf}}$, first, we construct an upper bound on $ED_{pi}^*$ and we show that this bound converges to $ED_{\text{inf}}$ for large enough $L$.

The lower bound $ED_{pi}^*$ is achieved by the concatenation of a capacity achieving channel code with a single-layer source code targeting the side information state $\bar{\gamma} $, the solution to (\ref{eq:OptGamma}), for each realization of ${H}$. Instead, we consider that, for a given $L$ the source coding is done targeting the state
$\bar{\gamma}_{L}\triangleq \mu-\delta,$
where $\mu\triangleq \mathrm{E}[\Gamma_{L}]$ is the mean of $\Gamma_{L}$ and $\delta\triangleq\sqrt{\sigma^2_{L}}$. The expected distortion achieved by this scheme is an \mbox{ upper bound on $ED_{pi}^*$ and is found, similarly to $ED^*_{pi}$,  be given by}
\begin{IEEEeqnarray}{rCl}
ED_{lay}&\triangleq& \mathrm{E}_{{H}}\left[ED_{Q}\left(\frac{1}{2}\log(1+{H})\right)\right]\nonumber\\
&=&\int_{0}^{\bar{\gamma}_L}\frac{p_L(\gamma )}{1+\gamma }d\gamma +\int_{h}\int_{\bar{\gamma}_L}^{\infty}\frac{p_L(\gamma )p_{{H}}(h)}{(1+h)(1+\bar{\gamma}_L)+\gamma -\bar{\gamma}_L}d\gamma dh\nonumber,
\end{IEEEeqnarray}
where $ED_Q(R)$ is given as in (\ref{eq:SingleLayerCode}) for $\bar{\gamma}$ substituted by $\bar{\gamma}_L$ and $p_{L}(\gamma )$ is the pdf of $\Gamma_{L}$.

Then, we have the following bound
\begin{IEEEeqnarray}{rCl}
&ED_{pi^*}&-ED_{\text{inf}}\leq ED_{lay}-ED_{\text{inf}}\nonumber\\
&=&\int_{0}^{\bar{\gamma}_L}\frac{p_L(\gamma )}{1+\gamma }d\gamma +\int_{h}\int_{\bar{\gamma}_L}^{\infty}\frac{p_L(\gamma )p_{{H}}(h)}{(1+h)(1+\bar{\gamma}_L)+\gamma -\bar{\gamma}_L}d\gamma dh\nonumber
-\int_{h}\int_{\gamma }\frac{p_{{H}}(h)p_{L}(\gamma )}{(1+h)(1+\gamma )}d\gamma dh\nonumber\\
&\stackrel{(a)}{\leq}& \int_{0}^{\bar{\gamma}_L}p_L(\gamma )dh+\int_{h}\int_{\bar{\gamma}_L}^{\infty}\frac{p_L(\gamma )p_{{H}}(h)}{(1+h)(1+\bar{\gamma}_L)+\gamma -\bar{\gamma}_L}d\gamma dh\nonumber
-\int_{h}\int_{\mu-\delta}^{\mu+\delta}\frac{p_{{H}}(h)p_{L}(\gamma )}{(1+h)(1+\gamma )}d\gamma dh\nonumber\\
&=& \text{Pr}[\Gamma_{L}<\bar{\gamma}_L]+\int_{h}\int_{\bar{\gamma}_L}^{\mu+\delta}\frac{p_L(\gamma )p_{{H}}(h)}{(1+h)(1+\bar{\gamma}_L)+\gamma -\bar{\gamma}_L}d\gamma dh
\nonumber\\
&&+\int_{h}\int_{\mu+\delta}^{\infty}\frac{p_L(\gamma )p_{{H}}(h)}{(1+h)(1+\bar{\gamma}_L)+\gamma -\bar{\gamma}_L}d\gamma dh
\nonumber-\int_{h}\int_{\mu-\delta}^{\mu+\delta}\frac{p_{{H}}(h)p_{L}(\gamma )}{(1+h)(1+\gamma )}d\gamma dh\nonumber\\
&\stackrel{(b)}{\leq}& \text{Pr}[\Gamma_{L}<\bar{\gamma}_L]+\int_{h}\int_{\mu-\delta}^{\mu+\delta}\frac{p_L(\gamma )p_{{H}}(h)}{(1+h)(1+\bar{\gamma}_L)+\gamma -\bar{\gamma}_L}d\gamma dh
\nonumber\\
&&+\text{Pr}[\Gamma_{L}\geq\mu+\delta]-\int_{h}\int_{\mu-\delta}^{\mu+\delta}\frac{p_{{H}}(h)p_{L}(\gamma )}{(1+h)(1+\gamma )}d\gamma dh\nonumber\\
&\stackrel{(c)}{=}& \text{Pr}[|\Gamma_{L}-\mu|\leq \delta]
\nonumber+\int_{h}\int_{\mu-\delta}^{\mu+\delta}\frac{h(\gamma -\bar{\gamma}_L)p_L(\gamma )p_{{H}}(h)}{((1+h)(1+\bar{\gamma}_L)+\gamma -\bar{\gamma}_L)(1+h)(1+\gamma )}d\gamma dh\nonumber\\
&\stackrel{(d)}{\leq}&\text{Pr}[|\Gamma_{L}-\mu|\leq \delta]+ \mathrm{E}_{{H}}[{H}]\cdot 2\delta\nonumber\\
&\stackrel{(e)}{\leq}&\frac{\sigma^2_{L}}{\delta}+\mathrm{E}_{{H}}[{H}]\cdot 2\delta\nonumber
\end{IEEEeqnarray}
where $(a)$ follows since $\frac{1}{(1+\gamma )}\leq 1$ for  the first integral, and because we are reducing the integration region in the third one, $(b)$ follows due to  \begin{IEEEeqnarray}{rCl}
&\int_{h}&\int_{\mu+\delta}^{\infty}\frac{p_L(\gamma )p_{{H}}(h)}{(1+h)(1+\bar{\gamma}_L)+\gamma -\bar{\gamma}_L}d\gamma  dh\nonumber\\
&&\leq\int_{h}\int_{\mu+\delta}^{\infty}p_L(\gamma )p_{{H}}(h)d\gamma dh\nonumber\\
&&=\text{Pr}[\Gamma_{L}\geq\mu+\delta]\nonumber.
\end{IEEEeqnarray}
Then $(c)$ follows since $\bar{\gamma}_L=\mu-\delta$, and subtracting the two integrals, $(d)$ follows from the following bound,
\begin{IEEEeqnarray}{rCl}
&\int_{h}&\int_{\mu-\delta}^{\mu+\delta}\frac{h(\gamma -\bar{\gamma}_L)p_L(\gamma )p_{{H}}(h)}{((1+h)(1+\bar{\gamma}_L)+\gamma -\bar{\gamma}_L)(1+h)(1+\gamma )}d\gamma dh\nonumber\\
&\leq&\int_{h}\int_{\mu-\delta}^{\mu+\delta}h(\gamma -\bar{\gamma}_L)p_L(\gamma )p_{{H}}(h)d\gamma dh\nonumber\\
&\stackrel{(f)}{\leq}&\mathrm{E}[{H}]\cdot (\mu+\delta-\bar{\gamma}_L)\int_{\mu-\delta}^{\mu+\delta}p_L(\gamma )d\gamma \nonumber\\
&\stackrel{(g)}{\leq}& \mathrm{E}[{H}]\cdot 2\delta\nonumber
\end{IEEEeqnarray}
where $(f)$ follows since $\gamma \leq \mu+\delta$ in the integration region; $(g)$ follows since $\bar{\gamma}_L=\mu-\delta$ and $\int_{\mu-\delta}^{\mu+\delta}p_L(\gamma )d\gamma \leq 1$. Finally, $(e)$ follows from Chebyshev's inequality.

By  the choice of $\delta=\sqrt{\sigma_{L}^2}$, we have
\begin{IEEEeqnarray}{rCl}
ED_{pi}^*-ED_{\text{inf}}&\leq&\frac{\sigma^2_{L}}{\delta}+\mathrm{E}[{H}]\cdot 2\delta\nonumber
=\sqrt{\sigma_{L}^2}+\mathrm{E}[{H}]\cdot 2\sqrt{\sigma_{L}^2},\nonumber
\end{IEEEeqnarray}
and the difference converges to $0$ from the assumption $\sigma^2_{L}\rightarrow 0$ for $L\rightarrow \infty$. This completes the proof.

\section{Converse}\label{app:PartiallyInformedGeneral}

\subsection{Partially Informed Encoder  Upper Bound}\label{app:PartiallyInformedGeneral}
 In Section \ref{ssec:PartiallyInformedBound} we have seen that for continuous quasiconcave pdfs, $ED_{pi}^*$ is obtained by averaging the expected distortion achievable by the concatenation of a single layer source code designed for the side information state $\bar{\gamma}(h)$ and an optimal channel code for the current channel state $h$. For each $h$, the optimal $\bar{\gamma}(h)$ is determined by solving (\ref{eq:OptGamma}) with $R=\mathcal{C}(h)=\frac{1}{2}\log(1+h)$. Note that $\bar{\gamma}(h)$ is a random variable dependant on the realization of the channel fading ${H}$.

An upper bound on the distortion exponent can be found by lower bounding $ED_{pi}^*$. First, we note that $ED^*_Q(R)$ in (\ref{eq:SingleLayerCode}) is a convex function of $R$. This follows from the time-sharing arguments and convexity of the Heegard-Berger rate-distortion function \cite{Berger1985SideInfMayBeAbsent}. Then, by Jensen's inequality, we have
\begin{IEEEeqnarray}{rCl}
ED^*_{pi}=\mathrm{E}_{{H}}[ED_Q^*(\mathcal{C}({H}))]\geq ED_Q^*(\mathrm{E}_{{H}}[\mathcal{C}({H})]),
\end{IEEEeqnarray}
where
\begin{IEEEeqnarray}{rCl}\label{eq:FirstBound}
ED_Q^*(\mathrm{E}_{{H}}[\mathcal{C}({H})])=\int_{0}^{\tilde{\gamma} }\frac{p_{\Gamma}(\gamma)}{1+\gamma}d\gamma+\int_{\tilde{\gamma} }^{\infty}\!\frac{p_{\Gamma}(\gamma)}{(\tilde{\gamma} +1)2^{2\mathrm{E}_{{H}}[\mathcal{C}({H})]}+\gamma-\tilde{\gamma} }d\gamma,
\end{IEEEeqnarray}
and $\tilde{\gamma}$ is the solution to (\ref{eq:OptGamma}) with $R=\mathrm{E}_{{H}}[\mathcal{C}({H})]$, that is, the ergodic capacity of the channel. Note that $\tilde{\gamma}$ depends only on the ergodic capacity of the channel and not on the current channel state realization, and therefore, is not a random variable, as opposed to $\bar{\gamma}(h)$.

Now, since $\mathcal{C}(h)$  is a concave function of $h$, applying Jensen's inequality again, we have
\begin{IEEEeqnarray}{rCl}\label{eq:SecondBound}
\mathrm{E}_{{H}}[\mathcal{C}({H})]=\mathrm{E}_{{H}}\left[\frac{1}{2}\log(1+{H})\right]\leq\frac{1}{2}\log(1+\mathrm{E}[{H}])=\frac{1}{2}\log(1+\rho),  \end{IEEEeqnarray}
 that is, the ergodic capacity of the channel is lower than the capacity of a  static channel with the same average SNR.

We define, for $\hat{\gamma}\geq0$,
\begin{IEEEeqnarray}{rCl}\label{eq:Epigammatilde}
ED_{pe}(\hat{\gamma})&\triangleq&\int_{0}^{\hat{\gamma}} \frac{p_{\Gamma}(\gamma)}{1+\gamma}d\gamma+\int_{\hat{\gamma}}^{\infty}\frac{p_{\Gamma}(\gamma)}{(\hat{\gamma} +1) (1+\rho)+\gamma-\hat{\gamma} }d\gamma.
\end{IEEEeqnarray}
Then, we have
\begin{IEEEeqnarray}{rCl}\label{eq:EDUP}
ED^*_{pi}&\stackrel{(a)}{\geq} &\int_{0}^{\tilde{\gamma} }\frac{p_{\Gamma}(\gamma)}{1+\gamma}d\gamma+\int_{\tilde{\gamma}  }^{\infty}\!\frac{p_{\Gamma}(\gamma)}{(\tilde{\gamma}  +1) (1+\rho)+\gamma-\tilde{\gamma}  }d\gamma\nonumber\\
&\stackrel{(b)}{\geq}&\min_{\hat{\gamma}\geq0.}ED_{pe}(\hat{\gamma})\triangleq ED^*_{pe},
\end{IEEEeqnarray}
where $(a)$ follows from  inequality (\ref{eq:SecondBound}), and $(b)$ follows from the definition in (\ref{eq:Epigammatilde}).

Now, we obtain the exponential behavior of $ED^*_{pe}$.  Consider a sequence of normalized gamma distributed random variables ${H}_0\sim\Upsilon(L,\theta)$ under the change of variables
$A=-\frac{\log{H}_{0}}{\log \rho}$. The pdf for $A$ is found as
\begin{IEEEeqnarray}{rCl}
p_{A}(\alpha)&=&\left|\frac{\partial {H}_{0}}{\partial \alpha}\right|p_{{H}_{0}}(h_{0})=\rho^{-\alpha} p_{{H}_{0}}(\rho^{-\alpha})\log \rho.
\end{IEEEeqnarray}

Then, $p_{A}(\alpha)$ is given by
\begin{IEEEeqnarray}{rCl} p_{A}(\alpha)&=&\rho^{-\alpha}\frac{1}{\theta^{L}}\frac{1}{\Psi(L)}\rho^{-\alpha(L-1)}e^{-\frac{\rho^{-\alpha}}{\theta}}\log
\rho\nonumber
=\frac{1}{\theta^{L}}\frac{1}{\Psi(L)}\rho^{-L\alpha}e^{-\frac{\rho^{-\alpha}}{\theta}}\log\rho,
\end{IEEEeqnarray}
and the exponential behavior is found as
\begin{IEEEeqnarray}{rCl}\label{eq:RateFunctionSa}
S_{A}(\alpha)=-\lim_{\rho\rightarrow\infty}\frac{\log p_{A}(a)}{\log \rho}=\begin{cases}
L\alpha&\text{if }\alpha \geq 0,\\
+\infty& \text{if }\alpha <0.
\end{cases}
\end{IEEEeqnarray}

 For the model considered in Section \ref{sec:FiniteSNRResults}, the SNR exponent for the Nakagami fading channel, ${H}_{0}\sim\Upsilon(L_c,L_c^{-1})$, is given by $S_{A}(\alpha)=L_c\alpha$ for $\alpha\geq0$, and for the Nakagami fading side information, $\Gamma_{0}\sim\Upsilon(L_s,L_s^{-1})$, we have $S_{B}(\beta)=L_s\beta$ for $\beta\geq0$.

Define $\kappa\triangleq \frac{\log\hat{\gamma}}{\log\rho}$, such that $\hat{\gamma}=\rho^{\kappa}$.
Applying the change of variables to (\ref{eq:Epigammatilde}),  in the high SNR regime, we have
\begin{IEEEeqnarray}{rCl}\label{eq:EDPI}
ED_{pe}(\rho^{\kappa})
&=&\int_{\mathcal{A}_{pe}^{c}}\frac{p_B(\beta)}{(\rho^{\kappa}+1)(1+\rho)+\rho^{1-\beta}-\rho^{\kappa}} d \beta+\int_{\mathcal{A}_{pe}}\frac{p_B(\beta)}{1+\rho^{1-\beta}} d\beta\\
&\doteq&\!\int_{\mathcal{A}_{pe}^{c}}\rho^{-(\kappa^++1)}p_{B}(\beta) d \beta+\int_{\mathcal{A}_{pe}}\rho^{-(1-\beta)^+}p_{B}(\beta) d\beta\nonumber\nonumber
%&\doteq&\int_{\mathcal{A}_{pe}}\rho^{-[(1-\beta)^++S_{B}(\beta)]} d\beta\nonumber+\!\int_{\mathcal{A}_{pe}^{c}}\rho^{-[\kappa^++1+S_{B}(\beta)]} d \beta\nonumber,
\end{IEEEeqnarray}
where we have defined
\begin{IEEEeqnarray}{rCl}
\mathcal{A}_{pe}&\triangleq&\{\beta:\hat{\gamma}\geq\rho^{1-\beta}\}\nonumber= \{\beta:\kappa \geq1-\beta\},
\end{IEEEeqnarray}
and we have used the fact that, in the high SNR asymptotic,
and for $\beta\in\mathcal{A}_{pe}^{c}$, we have
\begin{IEEEeqnarray}{rCl}
[(\rho^{\kappa}+1)(1+\rho)+\rho^{1-\beta}-\!\rho^{\kappa}]^{-1}\nonumber
&\stackrel{(a)}{\doteq}&[\rho^{\kappa^++1}+\rho^{1-\beta}-\rho^{\kappa}]^{-1}\\
&\stackrel{(b)}\doteq&\rho^{-\max\{\kappa^++1,1-\beta\}}\nonumber\\
&\stackrel{(c)}{=}&\rho^{-(\kappa^++1)},\nonumber
\end{IEEEeqnarray}
which $(a)$ and $(b)$ follows since $\rho^{x}+\rho^{y}\doteq\rho^{\max\{x,y\}}$ for $x, y\geq0$, and $(c)$ follows since  we have $1-\beta> \kappa$ for $\beta\in\mathcal{A}_{pe}^{c}$. 

In order to find the exponential behavior of the  $ED_{pe}(\rho^{\kappa})$, we study the exponent of each integral term in (\ref{eq:EDPI}). For the first term, we have
\begin{IEEEeqnarray}{rCl}
\Delta_{p1}(\kappa)&\triangleq&-\lim_{\rho\rightarrow\infty}\frac{1}{\log\rho}\log\int_{\mathcal{A}_{pe}}\rho^{-(1-\beta)^+}p_{B}(\beta) d\beta\nonumber\\
&\doteq&-\lim_{\epsilon\rightarrow 0}\epsilon\log\int_{\mathcal{A}_{pe}}\exp
\left(\frac{1}{\epsilon}(-[(1-\beta)^++S_{B}(\beta)])\right) d\beta\nonumber\\
&=&\inf_{\mathcal{A}^c_{pe}}\kappa^++1+S_{B}(\beta),\label{eq:DistExpoPi1}
\end{IEEEeqnarray}
where the last equality follows from Varadhan's Lemma \cite{Dembo:book}, similar to the proof of Theorem 4 in \cite{zheng2003diversity}. Similarly, for the second integral term in (\ref{eq:EDPI}), we have,
\begin{IEEEeqnarray}{rCl}\label{eq:DistExpoPi2}
\Delta_{p2}(\kappa)&\triangleq&\inf_{\mathcal{A}^c_{pe}}\kappa^++1+S_{B}(\beta).
\end{IEEEeqnarray}
We can lower bound (\ref{eq:EDUP}) as follows
\begin{IEEEeqnarray}{rCl}
ED_{pi}^*\geq\min_{\kappa\in\mathds{R}}\{ED_{pe}(\rho^{\kappa})\}\stackrel{.}{\geq}\min_{\kappa\in\mathds{R}}\{\rho^{-\Delta_{p1}(\kappa)}+\rho^{-\Delta_{p2}(\kappa)}\}\doteq\rho^{-\max_{\kappa\in\mathds{R}}\min\{\Delta_{p1}(\kappa),\Delta_{p2}(\kappa)\}}.
\end{IEEEeqnarray}
Then, the distortion exponent is upper bounded by
\begin{IEEEeqnarray}{rCl}\label{eq:DisteExpMinPartInfKappas}
-\lim_{\rho\rightarrow\infty}\frac{\log ED_{pi}^*}{\log\rho}\leq\max_{\kappa\in\mathds{R}}\min\{\Delta_{p1}(\kappa),\Delta_{p2}(\kappa)\}.
\end{IEEEeqnarray}

We solve the optimization problem in (\ref{eq:DisteExpMinPartInfKappas}) with $S_{B}(\beta)=L_s\beta$, and denote the optimal value by $\Delta_{pe}(L_s,L_c)$. We note that we can restrict the domain of $\beta$ in (\ref{eq:DistExpoPi1}) and (\ref{eq:DistExpoPi2}) to $\beta \geq 0$ without loss of optimality since $S_{B}(\beta)=+\infty$ for $\beta<0$.

First, we consider the case $\kappa<0$. In that case, $\Delta_{p1}(\kappa)$ is minimized by $\beta^*=1-\kappa$ and we have $\Delta_{p1}(\kappa)=L_s(1-\kappa)$. On the other hand, we have
\begin{IEEEeqnarray}{rCl}
\Delta_{p2}(\kappa)&=&\inf_{ \beta\geq 0}1+L_s\beta\nonumber\\
&&\text{s.t. }\beta< 1-\kappa,
\end{IEEEeqnarray}
which is minimized by $\beta^*=0$, and $\Delta_{p2}(\kappa)=1$. Then, from (\ref{eq:DisteExpMinPartInfKappas}), we have $\Delta_{pe}(L_s,L_c)=\max_{\kappa<0}\min\{L_s(1-\kappa),1\}$, which is maximized by $\kappa=-\infty$, and we have $\Delta_{pe}(L_s,L_c)=1$.

Next, we consider the case $\kappa\geq0$. Substituting $S_{B}(\beta)=L_s\beta$ in $\Delta_{p1}(\kappa)$ in (\ref{eq:DistExpoPi1}), we note that we can constrain our search to $0\leq \beta\leq 1$, since any $\beta>1$ can only increase the objective function. We have,
\begin{IEEEeqnarray}{rCl}
\Delta_{p1}(\kappa)&=&\inf_{0\leq\beta\leq 1}1+(L_s-1)\beta\nonumber\\
&&\text{s.t. } \beta\geq1-\kappa.
\end{IEEEeqnarray}
 Since for $L_s> 1$, $1+(L_s-1)\beta$ is increasing in $\beta$, the minimum is achieved by $\beta^*=(1-\kappa)^+$ and $\Delta_{p1}(\kappa)=1+(L_s-1)(1-\kappa)^+$. On the contrary, for $L_s\leq 1$, the objective function is decreasing in $\beta$, and is minimized at $\beta^*=1$, which yields $\Delta_{p1}(\kappa)=L_s$.

Similarly, for $\Delta_{p2}(\kappa)$ in (\ref{eq:DistExpoPi2}), we have
\begin{IEEEeqnarray}{rCl}
\Delta_{p2}(\kappa)&=&\inf_{ \beta\geq0}\kappa+1+L_s\beta\nonumber\\
&&\text{s.t. }\beta< 1-\kappa.
\end{IEEEeqnarray}
This problem is minimized by $\beta^*=0$, for which $\Delta_{p2}(\kappa)=1+\kappa$, for $0\leq \kappa<1$, and has no solution for $\kappa\geq1$, since there are no feasible $\beta$ in the optimization set.

Then, substituting in  (\ref{eq:DisteExpMinPartInfKappas}), for $L_s\leq1$, we have $\Delta_{pe}(L_s,L_c)=\max_{\kappa\geq0}\min\{L_s,1+\kappa\}=L_s$, and $\Delta_{pe}(L_s,L_c)=1$. For $L_s>1$, since $\Delta_{p1}(\kappa)$ is decreasing in $\kappa$  while $\Delta_{p2}(\kappa)$ is increasing in $\kappa$, the maximum $\Delta_{pe}(L_s,L_c)$ in (\ref{eq:DisteExpMinPartInfKappas}) is achieved when the two exponents are equal, i.e., $1+\kappa=1+(L_s-1)(1-\kappa)$, from which we find
\begin{IEEEeqnarray}{rCl}\label{eq:distexpkappapos}
\Delta_{pe}(L_s,L_c)=2-\frac{1}{L_s}, \quad \text{for } \kappa^*=\frac{L_s-1}{L_s}\in(0,1).
\end{IEEEeqnarray}

Now, we find the maximizing $\kappa$ for each $L_s$ regime to obtain $\Delta^*_{pe}(L_s,L_c)$. For $L_s\leq 1$, the distortion exponent is maximized by $\kappa=-\infty$ and $\Delta_{pe}(L_s,L_c)=1$, since $\Delta_{pe}(L_s,L_c)=L_s$ for any $\kappa\geq0$. On the contrary, for $L_s\geq1$, the distortion exponent is maximized as (\ref{eq:distexpkappapos}), while $\Delta_{pe}(L_s,L_c)=1$ if we consider $\kappa<0$.

Note that when $L_s\leq 1$, the side information gain distribution is monotonically decreasing. Then $\bar{\gamma}(h) =0$ for any $h$ from Proposition \ref{prop:MonotonicallyDecresaing}, and therefore, from Theorem \ref{th:OptUnc}, uncoded transmission achieves the minimum expected distortion, i.e., $ED_{pi}^*=ED_u$. The distortion exponent for uncoded transmission $\Delta_u(L_s,L_c)$ is calculated in Appendix \ref{app:DistExpUnc} as $\Delta_{u}(L_s,L_c)=\min\{1,L_s+L_c\}$. Comparing  $\Delta_{u}(L_s,L_c)$ with $\Delta_{pe}(L_s,L_c)$, we observe that the proposed lower bound on $ED_{pi}^*$ is in general not tight due to inequality (\ref{eq:SecondBound}).

%\subsection{Informed Encoder Upper Bound}\label{app:DIstExpInf}
%
%Expressing the informed encoder lower bound $ED_{\mathrm{inf}}$ in (\ref{eq:EDInformedBound}) in  terms of $\alpha$ and $\beta$, the distortion exponent is found by using Varadhan's Lemma as follows,
%\begin{IEEEeqnarray}{rCl}
%ED_{\text{inf}}&=&\iint_{h,\gamma }\frac{p_{{H}}(h)p_{\Gamma}(\gamma )}{(1+h)(1+\gamma )}dh d\gamma  \nonumber\\
%&=&\iint_{\alpha,\beta}\frac{p_{A}(\alpha)p_{B}(\beta)}{(1+\rho^{1-\alpha})(1+\rho^{1-\beta})}d\alpha d\beta\nonumber\\
%&\doteq&\int_{\mathds{R}^2}\frac{p_{A}(\alpha)p_{B}(\beta)}{\rho^{(1-\alpha)^++(1-\beta)^+}}d\alpha d\beta\nonumber\\
%&\doteq&\rho^{-\Delta_{\text{inf}}(L_s)}\nonumber,
%\end{IEEEeqnarray}
%where the distortion exponent is found as the solution to the following optimization problem,
%\begin{IEEEeqnarray}{rCl}
%\Delta_{\text{inf}}(L_s,L_c)\triangleq \inf_{\mathds{R}^2}(1-\alpha)^++(1-\beta)^++S_{A}(\alpha)+S_{B}(\beta).\label{eq:DistInfBound}
%\end{IEEEeqnarray}
%We note that we can reduce the optimization domain to $\alpha,\beta\geq0$ since $S_{A}(\alpha)=S_{B}(\beta)=+\infty$ for $\alpha,\beta<0$. Evaluating for $S_{A}(\alpha)=L_c\alpha$ and $S_{B}(\beta)=L_s\beta$, the minimum in (\ref{eq:DistInfBound}) is achieved by $\alpha^*=1$ if $L_c<1$ and $\alpha^*=0$ if $L_c\geq1$, and by $\beta^*=1$ if $L_s<1$, and $\beta^*=0$ for $L_s\geq 1$. Then, the minimum is found to be given by $\Delta_{\text{inf}}(L_s,L_c)=\min\{1,L_c\}+\min\{1,L_s\}$.

\section{Distortion Exponent Derivations}\label{app:DiestExpDerivations}

%
%\subsection{Uncoded transmission}\label{app:DistExpUnc}
%Similarly to the proof in Appendix \ref{app:PartiallyInformedGeneral}, applying  the change of variables ${H}_{0}=\rho^{-A}$ and $\Gamma_{0}=\rho^{-B}$, and Varadhan's lemma, we have
%\begin{IEEEeqnarray}{rCl}
%ED_u&=&\iint_{\alpha,\beta}\frac{p_{A}(\alpha)p_{B}(\beta)}{1+\rho^{1-\alpha}+\rho^{1-\beta}}d\alpha d\beta\nonumber\doteq\rho^{-\Delta_{u}(L_s)},
%\end{IEEEeqnarray}
%where the distortion exponent is found by substituting $S_{A}(\alpha)=L_c\alpha$ and $S_{B}(\beta)=L_s\beta$ as
%\begin{IEEEeqnarray}{rCl}
%\Delta_{u}(L,L_c)=\min_{0\leq \alpha, \beta\leq 1}\max\{1-\alpha,1-\beta\}+L_c\alpha+L_s\beta\nonumber.
%\end{IEEEeqnarray}
%Note that we can constraint to $0\leq\alpha,\beta\leq 1$ without loss in optimality since any $\alpha,\beta>1$ achieve a larger solution. Then, if $1-\alpha\geq 1-\beta$, the minimum is achieved by $\beta^*=\alpha^*$ and $\alpha^*=0$ if $L_s+L_c\geq1$, while $\alpha^*=1$ if $L_s+L_c<1$. Similarly occurs by symmetry when $1-\alpha<1-\beta$. Then, the minimum is found to be given by $\Delta_{u}(L_s,L_c)=\min\{L_s+L_c,1\}$.

\subsection{Separate Source and Channel Coding (SSCC)}
Here we find the distortion exponent of SSCC. Let us define the events
\begin{IEEEeqnarray}{rCl}
\mathcal{O}_1&\triangleq& \{(h,\gamma ):R_c\geq I(X;Y)\},\nonumber\\
\mathcal{O}_2&\triangleq& \{(h,\gamma ):R_c<I(X;Y),R_c\leq I(S;W|T)\}\nonumber.
\end{IEEEeqnarray}
Event $\mathcal{O}_1$ corresponds to an outage due to bad quality of the channel, and $\mathcal{O}_2$ corresponds to a correct decoding of the channel codeword while an outage occurs due to the bad quality of the side information. It is readily seen that $\mathcal{O}_{s}=\mathcal{O}_{1}\bigcup\mathcal{O}_{2}$. Consider the change of variables ${H}_{0}=\rho^{-A}$, $\Gamma_{0}=\rho^{-B}$, $R_s=\frac{r_s}{2}\log\rho$ and $R_c=\frac{r_c}{2} \log\rho$, for $r_s\geq0$ and $r_c>0$. Note that we consider $r_s=0$ to allow SSCC to transmit without binning. We have
\begin{IEEEeqnarray}{rCl}
ED_{s}(R_c,R_s)\nonumber
&=&\!\int_{\mathcal{O}_{s}^c}\frac{p_{{H}}(h)p_{\Gamma}(\gamma )}{2^{2(R_c+R_s-\epsilon)}+\gamma }dhd\gamma +\int_{\mathcal{O}_{s}}\frac{p_{{H}}(h)p_{\Gamma}(\gamma )}{1+\gamma }dhd\gamma \nonumber\\
&=&\!\int_{\mathcal{A}_{s}^c(\rho)}\frac{p_{A}(\alpha)p_{B}(\beta)}{\rho^{r_c+r_s}+\rho^{1-\beta}}d\alpha d\beta+\int_{\mathcal{A}_{s}(\rho)}\frac{p_{A}(\alpha)p_{B}(\beta)}{1+\rho^{1-\beta}}d\alpha d\beta\nonumber,
\end{IEEEeqnarray}
where we have defined $\mathcal{A}_{s}(\rho)\triangleq \mathcal{A}_1(\rho)\bigcup\mathcal{A}_2(\rho)$,  and $\mathcal{A}_1(\rho)$ characterizes $\mathcal{O}_1$ in terms of $\alpha$ and $\beta$, and is given by
\begin{IEEEeqnarray}{rCl}
\mathcal{A}_1(\rho)&\triangleq &\left\{(h,\gamma):R_c\geq\frac{1}{2}\log(1+h)\right\}\nonumber=\{(\alpha,\beta):\rho^{r_c}\geq1+\rho^{1-\alpha}\}\nonumber,
\end{IEEEeqnarray}
and similarly for  $\mathcal{O}_2$ we have
\begin{IEEEeqnarray}{rCl}
\mathcal{A}_2(\rho)&\triangleq& \left\{(h,\gamma ):R_c<\frac{1}{2}\log(1+h),\, R_c\leq\frac{1}{2}\log\left(1+\frac{2^{2(R_s+R_c-\epsilon)}-1}{1+\gamma }\right)\right\}\nonumber\\
&=&\left\{(\alpha,\beta):\rho^{r_c}<1+\rho^{1-\alpha}, \, \rho^{r_c}\leq1+\frac{2^{-2\epsilon}\rho^{r_s+r_c}}{1+\rho^{(1-\beta)}}\right\}\nonumber.
\end{IEEEeqnarray}

Using similar bounding techniques to the ones used in Appendix \ref{app:PartiallyInformedGeneral}, it is not hard to show that in the high SNR regime, we have
\begin{IEEEeqnarray}{rCl}
ED_{s}(R_c,R_s)
&\doteq&\int_{\mathcal{A}_1^c\cap\mathcal{A}_2^c}\quad\frac{p_{A}(\alpha)p_B(\beta)}{\rho^{\max\{r_c+r_s,1-\beta\}}}d\alpha d\beta+\int_{\mathcal{A}_1\cup\mathcal{A}_2}
\frac{p_{A}(\alpha)p_B(\beta)}{\rho^{(1-\beta)^+}}d\alpha d\beta\nonumber,
\end{IEEEeqnarray}

where the equivalent outage sets in the high SNR are
\begin{IEEEeqnarray}{rCl}
\mathcal{A}_{1}&\triangleq& \{(\alpha,\beta): r_{c}\geq(1-\alpha)^+\},\nonumber\\
\mathcal{A}_{2}&\triangleq&\{(\alpha,\beta): r_{c}<(1-\alpha)^+, r_c\leq({r_s+r_c}-{(1-\beta)^+})^+\}.\nonumber
\end{IEEEeqnarray}

Let $\mathbf{r}\triangleq [r_c,r_s]$. Applying Varadhan's lemma, the distortion exponent of each integral term  are found as
\begin{IEEEeqnarray}{rCl}
\Delta_{s1}(\mathbf{r})&=&\inf_{\mathds{R}^2}\max\{r_c+r_s,1-\beta\}+S_{A}(\alpha)+S_{B}(\beta)\nonumber\\
&&\text{s.t. }r_{c}<(1-\alpha)^+, \quad r_c>({r_s+r_c}-{(1-\beta)^+})^+,\nonumber
\end{IEEEeqnarray}
and
\begin{IEEEeqnarray}{rCl}
\Delta_{s2}(\mathbf{r})&=&\inf_{\mathds{R}^2}(1-\beta)^++S_{A}(\alpha)+S_{B}(\beta)\\
&&\text{s.t. }r_{c}\geq(1-\alpha)^+,\nonumber\\
&&\text{or }r_{c}<(1-\alpha)^+, \quad r_c\leq({r_s+r_c}-{(1-\beta)^+})^+.\nonumber
\end{IEEEeqnarray}
We can limit the optimization to $0\leq\alpha,\beta\leq1$ without loss of optimality.
First, we find the distortion exponent for $L_s\geq1$. We start with $\Delta_{s1}(\mathbf{r})$. If $r_c+r_s\geq1-\beta$, we have
\begin{IEEEeqnarray}{rCl}
\Delta_{s1}(\mathbf{r})&=&\inf_{\alpha,\beta\geq0}r_s+r_c+L_c\alpha+L_s\beta\\
&&\text{s.t. }\alpha< 1-r_c, \quad 1-(r_s+r_c)\leq\beta < 1-r_s.\nonumber
\end{IEEEeqnarray}
The minimum is achieved by $\beta^*=(1-(r_s+r_c))^+$ and $\alpha^*=0$ and we have $\Delta_{s1}(\mathbf{r})=r_s+r_c+L_s(1-(r_s+r_c))^+$ for $r_c< 1$, $r_s< 1$. If $1-\beta> r_c+r_s$,
\begin{IEEEeqnarray}{rCl}
\Delta_{s1}(\mathbf{r})&=&\inf_{\alpha,\beta\geq0}1+L_c\alpha+(L_s-1)\beta\\
&&\text{s.t. }\alpha< 1-r_c, \quad \beta < 1-(r_s+r_c).\nonumber
\end{IEEEeqnarray}
The minimum is achieved by $\alpha^*=\beta^*=0$, and is found to be $\Delta_1(\mathbf{r})=1$ for $r_c<1$ and $r_c+r_s<1$. Then, putting all together, the infimum is given by $\Delta_{s1}(\mathbf{r})=\max\{1,r_s+r_c\}$, for  $r_s<1$ and $r_c<1$.

For $\Delta_{s2}(\mathbf{r})$, we first consider the case with constraint $r_c\geq(1-\alpha)^+$. The minimum is easily seen to be given by $\alpha^*=(1-r_c)^+$ and $\beta^*=0$. Then $\Delta_{s2}(\mathbf{r})=1+L_c(1-r_c)^+$.
If $r_c\leq(1-\alpha)^+$, the second constraint is active. If $r_s+r_c<(1-\beta)^+$, $\Delta_{s2}(\mathbf{r})$ has no solution since this would require $r_c\leq0$. If $r_s+r_c\geq(1-\beta)^+$, the minimum is achieved for $\alpha^*=0$ and $\beta^*=(1-r_s)^+$, and is given by $\Delta_{s2}(\mathbf{r})=1+(L_s-1)(1-r_s)^+$ for $r_s>0$ and $r_c<1$.

The optimal distortion exponent of SSCC can be found by maximizing over the rates as
\begin{IEEEeqnarray}{rCl}\label{eq:DisteExpMinPartInf}
\Delta_{s}(L_s,L_c)=\max_{r_{c},r_{s}\geq0}\min\{\Delta_{s1}(\mathbf{r}),\Delta_{s2}(\mathbf{r})\}.\nonumber
\end{IEEEeqnarray}
The distortion exponent is maximized when $r_s+r_c>1$, $r_c<1$ and $r_s<1$. Then, we have $\Delta_{s1}(\mathbf{r})=r_s+r_c$, $\Delta_{s2}(\mathbf{r})=\min\{1+L_c(1-r_c)^+,1+(L_s-1)(1-r_s)^+\}$. The maximum is achieved by $r_c$ and $r_s$ for which the left and right terms in the minimization in $\Delta_{s2}(\mathbf{r})$ are equal, i.e., $1+L_c(1-r_c)=1+(L_s-1)(1-r_s)$, and $\Delta_{s1}(\mathbf{r})=\Delta_{s2}(\mathbf{r})$. Solving this, we have
\begin{IEEEeqnarray}{rCl}
r_s^*=\frac{(L_c+1)(L_s-1)}{L_s(L_c+1)-1},\qquad r_c^*=\frac{L_cL_s}{L_s(L_c+1)-1},\nonumber
\end{IEEEeqnarray}
which satisfy $r_s<1$, $r_c<1$ and $r_s+r_c>1$. Note that for $L_s=1$, we have $r_s=0$, i.e., no binning is optimal, as expected from Lemma \ref{lem:OptLay}.

Now we consider the case $L_s\leq 1$. In this regime, the gamma function is monotonically decreasing, and hence, $\bar{\gamma} =0$ and from Lemma \ref{lem:OptLay} we have $R^*_s=0$, i.e., no binning achieves the minimum distortion for SSCC.  The distortion exponent achievable without binning follows similarly by observing that by letting $R_s=0$, the outage event $\mathcal{A}_2$ is empty.

%Next, we derive the distortion exponent when no binning is considered, for general $L_s$ to account for $ED^*_{nb}$.
%
%Letting $R_s=0$, the outage event $\mathcal{A}_2$ is empty. Then, we find the distortion exponent of $ED_{nb}(R_c)$ as
%\begin{IEEEeqnarray}{rCl}
%\Delta_{nb1}(r_c)&=&\inf_{\alpha,\beta\geq0 }\max\{r_c,(1-\beta)^+\}+L_c\alpha+L_s\beta\nonumber\\
%&&\text{s.t. }r_c< (1-\alpha)^+,\nonumber
%\end{IEEEeqnarray}
%and
%\begin{IEEEeqnarray}{rCl}
%\Delta_{nb2}(r_c)&=&\inf_{\alpha,\beta\geq 0}(1-\beta)^++\alpha+L_s\beta\nonumber\\
%&&\text{s.t. }r_c\geq (1-\alpha)^+\nonumber.
%\end{IEEEeqnarray}
% By solving the cases for $r_c< 1-\beta$ and $r_c\geq1-\beta$, we find that $\Delta_{nb1}(r_c)$ attains its infimum at $\alpha^*=0$ and $\beta^*=(1-r_c)^+$ as $\Delta_{nb1}(r_c)=1+(L_s-1)(1-r_c)^+$ if $L_s< 1$. If $L_s\geq 1$, then the minimum is achieved by $\alpha^*=0$ and $\beta^*=0$, and is given by $\Delta_{nb1}(r_c)=1$. On the other hand, $\Delta_{nb2}(r_c)$ is minimized by $\alpha^*=(1-r_c)^+$ and $\beta^*=1$ when $L_s\leq 1$, and $\beta^*=0$ when $L_s>1$. Then we have $\Delta_{nb2}(r_c)=\min\{L_s,1\}+L_c(1-r_c)^+$. The distortion exponent is found as $\Delta_{nb}(L_s,L_c)=\min_{r_c}\{\Delta_{nb1}(r_c),\Delta_{nb2}(r_c)\}$. The optimal distortion exponent without binning is achieved by setting
%\begin{IEEEeqnarray}{rCl}
%r_c=\frac{L_c}{1-L_s+L_c} \text{ for }L_s \leq 1,\quad\text{and}\quad r_c=1 \text{ for } L_s>1.\nonumber
%\end{IEEEeqnarray}

\subsection{Joint Decoding Scheme (JDS)}

Here, we consider the distortion exponent for JDS. Applying the change of variables, ${H}_{0}=\rho^{-A}$, $\Gamma_{0}=\rho^{-B}$ and $R_j=\frac{r_j}{2}\log\rho$ for $r_h>0$, form (\ref{eq:Dj}) we have
\begin{IEEEeqnarray}{rCl}
ED_{j}(R_j)\nonumber
&=&\int_{\mathcal{O}_j^c}\frac{p_{{H}}(h)p_{\Gamma}(\gamma )}{2^{2(R_j-\epsilon)}+\gamma }dhd\gamma \!+\!
\int_{\mathcal{O}_j}\frac{p_{{H}}(h)p_{\Gamma}(\gamma )}{1+\gamma }dhd\gamma \nonumber\\
&\doteq&\int_{\mathcal{A}_j^{c}}\frac{p_{A}(\alpha)p_{B}(\beta)}{\rho^{\max\{r_j,(1-\beta)^+\}}}d\alpha d\beta+\int_{\mathcal{A}_j}\frac{p_{A}(\alpha)p_{B}(\beta)}{\rho^{(1-\beta)^+}}d\alpha d\beta,\nonumber
\end{IEEEeqnarray}
where we define the outage event in the high SNR regime as
\begin{IEEEeqnarray}{rCl}
\mathcal{A}_j\triangleq\left\{(\alpha,\beta):(r_j-(1-\beta)^+)^+\geq(1-\alpha)^+\right\}\nonumber.
\end{IEEEeqnarray}

The distortion exponent for each term is found applying Varadhan's Lemma as
\begin{IEEEeqnarray}{rCl}
\Delta_{j1}(r_j)&=&\inf_{\mathcal{A}_j^c}\max\{r_j,(1-\beta)^+\}+S_{A}(\alpha)+S_{B}(\beta)\nonumber,
\end{IEEEeqnarray}
and
\begin{IEEEeqnarray}{rCl}
\Delta_{j2}(r_j)&=&\inf_{\mathcal{A}_j}(1-\beta)^++S_{A}(\alpha)+S_{B}(\beta)\nonumber.
\end{IEEEeqnarray}

First we note that in both $\Delta_{j1}(r_j)$ and $\Delta_{j2}(r_j)$ we can restrict to $0\leq \alpha,\beta\leq 1$ without loss of optimality since $S_{A}(\alpha)=L_c\alpha$ and $S_{B}(\beta)=L_s\beta$. Now we solve $\Delta_{j1}(r_j)$. If $r_j<1-\beta$, we have $\mathcal{A}_j=\{(\alpha,\beta):(1-\alpha)^+\geq 0,r_j<1-\beta\}$ and it is easily seen that $\alpha^*=0$. Then if $L_s\geq 1$, we have $\beta^*=0$ and $\Delta_{j1}(r_j)=1$ for $r_j\leq 1$. If $L_s<1$, then $\beta^*=(1-r_j)^+$ and $\Delta_{j1}(r_j)=1+(L_s-1)(1-r_j)^+$ for $r_j\leq 1$. If $r_j\geq 1-\beta$, we have
\vspace{-0.3cm}
\begin{IEEEeqnarray}{rCl}
\Delta_{j1}(r_j)&=&\inf_{0\leq \alpha,\beta\leq 1} r_j+L_c\alpha+L_s\beta\nonumber\\
&&\alpha+\beta< 2-r_j, \quad \beta\geq1-r_j.
\end{IEEEeqnarray}
The minimum is achieved by $\alpha^*=0$ and $\beta^*=(1-r_j)^+$  if $r_j\leq 2$ and is given by $\Delta_{j1}(r_j)=r_j+L_s(1-r_j)^+$ and has no feasible solutions if $r_j\geq 2$. Then, the exponent $\Delta_{j1}(r_j)$ is given by the minimum of these solutions, given by
\begin{IEEEeqnarray}{rCl}
\Delta_{j1}(r_j)=
\begin{cases}
1+(L_s-1)^+(1-r_j)&\text{if }0\leq r_j< 1,\\
r_j& \text{if }1\leq r_j< 2,
\end{cases}
\end{IEEEeqnarray}
where we have used that for $L_s\leq 1$ and $0\leq r_j \leq 1$, we have $r_j+L_s(1-r_j)^+=1+(1-L_s)^+(1-r_j)^+$, and for $L_s\geq 1$ and $0\leq r_j \leq 1$, we have $\min\{r_j+L_s(1-r_j)^+,1\}=1$.

Now, we solve $\Delta_{j2}(r_j)$. If $r_{j}<1-\beta$, the problem has no feasible solution due to the constraints. If $r_{j}\geq1-\beta$, we have
\vspace{-0.5cm}
\begin{IEEEeqnarray}{rCl}
\Delta_{j2}(r_j)&=&\inf_{0\leq \alpha,\beta\leq1} 1+(L_s-1)\beta+L_c\alpha\nonumber\\
&&\alpha+\beta\geq 2-r_j, \quad \beta\geq1-r_j.
\end{IEEEeqnarray}
The minimum is achieved by $\alpha^*=(2-r_j-\beta)^+$, which satisfies $\alpha^*\leq 1$ due to $\beta\geq1-r_j$. Then, if $\beta\geq2-r_j$ and $L_s\geq 1$, we have $\beta^*=(2-r_j)^+$ for $r_j\geq 1$ and the minimum is given by $\Delta_{j2}(r_j)=1+(L_s-1)(2-r_j)^+$.  If $\beta\geq2-r_j$ and $L_s< 1$ we have $\beta^*=1$ and $\Delta_{j2}(r_j)=L_s$ for $r_j\geq 1$. If $\beta< 2-r_j$ and $L_s\geq 1+L_c$, the minimum is achieved by $\beta^*=(1-r_j)^+$ if $r_j\leq 2$ and $\Delta_{j2}(r_j)=1+(L_s-1-L_c)(1-r_j)^++L_c(2-r_j)$. If $L_s<1+L_c $, the solution is found as $\Delta_{j2}(r_j)=L_s+L_c(1-r_j)$ if $r_j\leq 1$ for $\beta^*=1$ and by $\Delta_{j2}(r_j)=1+(L_s-1)(2-r_j)$ if $ r_j\geq1$ for $\beta=(2-r_j)^+-\delta$, for arbitrarily small $\delta>0$.

Finally, $\Delta_{j2}(r_j)$ is found by the minimum of these solutions in each regime. If $0\leq r_j\leq 1$, we have
\begin{IEEEeqnarray}{rCl}
\Delta_{j2}(r_j)=\begin{cases}
L_s+L_c(1-r_j)&\text{if } L_s< L_c+1,\\
1+L_c+(L_s-1)(1-r_j)&\text{if } L_s\geq L_c+1.
\end{cases}
\end{IEEEeqnarray}
If $1\leq r_j\leq 2$, we have
\begin{IEEEeqnarray}{rCl}
\Delta_{j2}(r_j)=\begin{cases}
L_s&\text{if } L_s<1,\\
1+\min\{L_c,L_s-1\}(2-r_j)^+&\text{if } L_s\geq 1,
\end{cases}
\end{IEEEeqnarray}
where for the case $L_s< 1$ we have that $L_s\leq L_s+L_c(1-r_j)$, and in the case $L_s\geq1$, we have that $1+L_c(2-r_j)\leq 1+(L_s-1)(2-r_j)$ for $L_s\geq 1+L_c$.
Finally, for $r_j\geq 2$ we have $\Delta_{j2}(r_j)=\min\{1,L_s\}$.

The distortion exponent can be maximized over $r_j$. If $L_s\leq 1$, the maximum is found by using a rate $0\leq r_j\leq 1$ and equating $\Delta_{j1}(r_j)=1+(L_s-1)(1-r_j)$ and $\Delta_{j2}(r_j)=L_s+L_c(1-r_j)$. The optimal rate is found as $r_j^*=\frac{L_c}{1+L_c-L_s}\leq 1$. If $1<L_s\leq L_c+1$, the maximum distortion exponent is found with a rate $1\leq r_j\leq 2$ such that $\Delta_{j1}(r_j)=r_j$ and $\Delta_{j2}(r_j)=1+(L_s-1)(2-r_j)$ are equal, given by $r_j^*=2-\frac{1}{L_s}$. Finally, if $L_s>L_c+1$, the distortion exponent is maximized when $1\leq r_j\leq 2$. By equaling $\Delta_{j1}(r_j)=r_j$ and  $\Delta_{j2}(r_j)=1+L_c(2-r_j)$, the distortion exponent is maximized by $r_j^*=1+\frac{L_c}{L_c+1}$.

\subsection{Superposed Hybrid Digital-Analog Transmission (S-HDA)}
The performance of the S-HDA scheme in Section \ref{ssec:S-HDA} can be  optimized over $P_d$, $P_a$ and $\eta^2$. From the distortion exponent perspective, we have observed that it suffices to allocate all the power to the digital component, which reduces S-HDA to HDA. Therefore, we let $P_d=1$, $P_a=0$. Then, applying the change of variables, we have from (\ref{eq:DistS-HDA2})-(\ref{eq:DistS-HDA}),
\begin{IEEEeqnarray}{rCl}
  ED_{shda}(1,\eta)&=&E_{\mathcal{O}_h}[D^{out}_{h}(\eta,1)]+E_{\mathcal{O}^c_h}[D_{h}(\eta,1)]\nonumber\\
  %&=&E[\mathds{1}_{\mathcal{O}_h}D^{out}_{h}(\eta,P_d)]+E[\mathds{1}_{\mathcal{O}^c_h}D_{h}(\eta,P_d)]\\
  %&=&\int_{\mathcal{O}_h}D^{out}_{h}(\eta,P_d)p_{{H}}(h)p_{\gamma }(\gamma )dhd\gamma \nonumber\\
  %&&+\int_{\mathcal{O}_h^c}D_{h}(\eta,P_d)p_{{H}}(h)p_{\Gamma}(\gamma )dhd\gamma \nonumber\\
  &=&\int_{\mathcal{O}_h}\frac{p_{{H}}(h)p_{\Gamma}(\gamma )}{1+\gamma }dhd\gamma \nonumber
  +\int_{\mathcal{O}_h^c}\frac{p_{{H}}(h)p_{\Gamma}(\gamma )}{1+\gamma +\eta^2(1+h)}dhd\gamma \nonumber\\
  &=&\int_{\mathcal{A}_h(\rho)} \frac{p_{A}(\alpha) p_{B}(\beta)}{1+\rho^{1-\beta}}d\alpha d\beta\label{eq:hybDistEq1}
  \!+\!\!\int_{\mathcal{A}_h^c(\rho)}\frac{p_{A}(\alpha)p_B(\beta)}{1+\rho^{1-\beta}+\eta^2(1+\rho^{1-\alpha})}d\alpha d\beta\nonumber,
    \end{IEEEeqnarray}
    where $\mathcal{O}_h$ in (\ref{eq:Outagehybrid}) is found, in terms of $\alpha$ and $\beta$ as
\begin{IEEEeqnarray}{rCl}
\mathcal{A}_h(\rho)&\triangleq&
\left\{(\alpha,\beta):\frac{\rho^{1-\alpha}}{1+\rho^{1-\alpha}}(1+\rho^{1-\beta})\leq \eta^2\right\}\nonumber.
\end{IEEEeqnarray}

In the high SNR regime, we let $\eta^2=\rho^{r_h}$, for $r_h\in\mathds{R}$ ,and the outage event $\mathcal{A}_h(\rho)$  is equivalent to
\begin{IEEEeqnarray}{rCl}\label{eq:SetHighHDA1}
\mathcal{A}_{h} \triangleq \left\{(\alpha,\beta):(1-\beta)^+-(\alpha-1)^+\leq r_h \right\}.
\end{IEEEeqnarray}

Then, we have
\begin{IEEEeqnarray}{rCl}\label{eq:uppBound}
&ED&_{shda}(1,\rho^{r_h}) \nonumber\\
\!&\doteq&\!\!\int_{\mathcal{A}_{h}} \!\rho^{-(1-\beta)^+} p_{A}(\alpha) p_{B}(\beta) d\alpha d\beta
+\int_{\mathcal{A}^c_{h}}\! \rho^{-\max\{(1-\beta)^+,(1-\alpha)^++r_h\}} p_{A}(\alpha) p_{B}(\beta) d\alpha d\beta.
  \end{IEEEeqnarray}

Using Varadhan's Lemma, the distortion exponent for the first integral in (\ref{eq:uppBound}) is found as
\begin{IEEEeqnarray}{rCl}
\Delta_{h1}(r_h)&\triangleq&\inf_{\mathcal{A}_{h}}(1-\beta)^++S_{A}(\alpha)+S_{B}(\beta)\nonumber,
\end{IEEEeqnarray}
and for the second integral as
\begin{IEEEeqnarray}{rCl}
\Delta_{h2}(r_h)&\triangleq&\inf_{\mathcal{A}_{h}^c}\max\{(1-\beta)^+,(1-\alpha)^++r_h\}+S_{A}(\alpha)+S_{B}(\beta)\nonumber.
\end{IEEEeqnarray}
The distortion exponent for HDA can be optimized over the parameter $r_h$ as
\begin{IEEEeqnarray}{rCl}\label{eq:S-HDAExpProblem}
\Delta_{hda}(L_s,L_c)=\max_{r_h\in \mathds{R}}\min\{\Delta_{h1}(r_h),\Delta_{h2}(r_h)\}.
\end{IEEEeqnarray}

First, we obtain the achievable distortion exponent when $r_h<0$. To solve $\Delta_{h1}(r_h)$, note that if $0\leq \alpha\leq 1$, there are no feasible solutions. Then, for $\alpha>1$, we have
\begin{IEEEeqnarray}{rCl}
\Delta_{h1}(r_h)&\triangleq&\inf_{\alpha>1,\beta\geq0}(1-\beta)^++L_c\alpha+L_s\beta\nonumber\\
&&\text{s.t. }\alpha\geq(1-\beta)^++1-r_h.
\end{IEEEeqnarray}
We can constrain the optimization to $0\leq \beta\leq 1$ without loss of optimality, and the minimum is achieved by $\alpha^*=2-\beta-r_h$. If $L_s\geq1+L_c$, the minimum is achieved by $\beta^*=0$, and is given by $\Delta_{h1}(r_h)=1+L_c(2-r_h)$. On the other hand, if $L_s<1+L_c$, $\beta^*=1$, and $\Delta_{h1}(r_h)=L_s+L_c(1-r_h)$. Putting all together, we have $\Delta_{h1}(r_h)=\min\{L_s,1+L_c\}+L_c(1-r_h)$.

Now, we solve $\Delta_{h2}(r_h)$. Without loss of optimality, we can assume $0\leq \alpha,\beta\leq 1$, as otherwise the feasible grows and $\alpha>1$ or $\beta>1$ can only increase the objective function. Then, the constraint is always satisfied, since $1-\beta\geq r_h$ for any $0\leq\beta\leq1$. We have
\begin{IEEEeqnarray}{rCl}
\Delta_{h2}(r_h)=\max_{0\leq\alpha,\beta\leq1}\{1-\beta,1-\alpha+r_h\}+L_s\beta+L_c\alpha.
\end{IEEEeqnarray}
If $1-\beta\geq1-\alpha+r_h$, the minimum is achieved by $\alpha^*=\beta^*=0$ when $L_s\geq1$ and $\Delta_{h2}(r_h)=1$. If $L_s<1$, $\beta^*=\alpha-r_h$ if $\alpha-r_h\leq 1$, and $\alpha^*=0$ when $L_s+L_c\geq 1$  and we have $\Delta_{h2}(r_h)=1-(L_s-1)r_h$. When $L_s+L_c< 1$, we have $\alpha^*=1+r_h$ and $\Delta_{h2}(r_h)=L_s+L_c(1+r_h)$, $-1\leq r_h<0$ and, when $\alpha>1+r_h$, we have $\beta^*=1$ and $\Delta_{h2}(r_h)=L_s+L_c(1+r_h)^+$. If $1-\beta< 1-\alpha+r_h$, we have $\beta^*=\alpha+\delta$, which has to satisfy $\beta^*\leq 1$, i.e., it is feasible whenever $\alpha\leq 1+r_h$. Then, $\alpha^*=0$ if $L_s+L_c\geq1$ and the minimum is given by $\Delta_{h2}(r_h)=1-r_h(L_s-1)$. If $L_s+L_c<1$, we have $\alpha^*=1+r_h$ and $\Delta_{h2}(r_h)=L_s+L_c(1+r_h)$, for $r_h\geq-1$. Putting all together, we have $\Delta_{h2}(r_h)=1$ when $L_s\geq1$ and $\Delta_{h2}(r_h)=\min\{1-(L_s-1)r_h,L_s+L_c(1+r_h)\}$ for $L_s<1$.

 If $L_s\leq 1$, we have $\Delta_{h1}(r_h)\geq\Delta_{h2}(r_h)$, and the distortion exponent is maximized by letting $r_h\rightarrow 0$ and we get $\Delta_{hda}(L_s,L_c)=\min\{L_s+L_c,1\}$.  If $L_s\geq1$, we have $\Delta_{hda}(L_s,L_c)=1$ for any $r_h<0$.

In the following, we derive the distortion exponent achievable by S-HDA when $r_h\geq0$. First, we solve $\Delta_{h1}(r_h)$. We can limit the optimization to $0\leq \beta\leq 1$ without loss of optimality. Then, for $0\leq \alpha\leq 1$ the minimum is achieved by $\alpha^*=0$, and if $L_s\geq1$, the minimum is achieved by $\beta^*=(1-r_h)^+$ and $\Delta_{h1}(r_h)=1+(L_s-1)(1-r_h)^+$, and if $L_s<1$, $\beta^*=1$ and $\Delta_{h1}(r_h)=L_s$. If $\alpha>1$, the constraint becomes $\alpha\geq2-\beta-r_h$, and the minimizing $\alpha$ is given by $\alpha^*=2-\beta-r_h$, which is feasible provided that $\beta<1-r_h$. Then, we have
\begin{IEEEeqnarray}{rCl}\label{eq:HDADistExp2}
\Delta_{h1}(r_h)&=&\inf_{0\leq \beta\leq 1 }1+(L_s-1-L_c)\beta+L_c(2-r_h)\nonumber\\
&&\text{s.t. }\beta<1-r_h.
\end{IEEEeqnarray}
If $L_s\geq1+L_c$, we have $\beta^*=0$ and $\Delta_{h1}(r_h)=1+L_c(2-r_h)$ for $r_h\leq 1$, and if $L_s<1+L_c$, we have $\beta^*=1-r_h$ and $\Delta_{h1}(r_h)=1+L_c+(L_s-1)(1-r_h)$. Putting all together, $\Delta_{h1}(r_h)$ is found as
\begin{IEEEeqnarray}{rCl}
\Delta_{h1}(r_h)=\begin{cases}
L_s&\text{if }L_s<1,\\
1+(L_s-1)(1-r_h)^+ &\text{if }L_s\geq1.
\end{cases}
\end{IEEEeqnarray}

Next, we solve $\Delta_{h2}(r_h)$. First, we note that we can constrain to $0\leq\beta\leq 1$, since the optimization set is empty if $\beta>1$. Similarly, we assume $0\leq\alpha\leq 1$, since any $\alpha>1$ achieves a larger exponent. Then,
\begin{IEEEeqnarray}{rCl}\label{eq:HDADistExp2}
\Delta_{h2}(r_h)&=&\inf_{0\leq \alpha,\beta\leq 1 }\max\{1-\beta,1-\alpha+r_h\}+L_s\beta+L_c\alpha\nonumber\\
&&\text{s.t. }\beta<1-r_h.
\end{IEEEeqnarray}

If $1-\beta>1-\alpha+r_h$, we have $\alpha^*=\beta+r_h$, which satisfies $\alpha^*\leq1$ since $\beta<1-r_h$. Then, $\beta^*=0$ if $L_s+L_c\geq1$ and $\Delta_{h2}(r_h)=1+L_cr_h$, and if $L_s+L_c<1$, $\beta^*=1-r_h-\epsilon$ for an arbitrarily $\epsilon>0$ and the infimum is found as $\Delta_{h2}(r_h)=1+L_c+(L_s-1)(1-r_h)$ for $r_h<1$. If $1-\beta\leq 1-\alpha+r_h$, the infimum is given by $\beta^*=(\alpha-r_h)^+$. If $\alpha\geq r$ and $L_s+L_c\geq1$, the minimum is found as $\alpha^*=r_h$ and $\Delta_{h2}(r_h)=1+r_hL_c$, while $\alpha^*=1$ if $L_s+L_c<1$, and $\Delta_{h2}(r_h)=1+L_c+(L_s-1)(1-r_h)$. If $\alpha<r_h$, we have $\alpha^*=0$ if $L_c\geq1$ and $\Delta_{h2}(r_h)=1+r_h$ and if $L_c<1$, we have $\alpha^*=r_h+\epsilon$ for an arbitrarily small $\epsilon>0$ and $\Delta_{h2}(r_h)=1+r_hL_c$. Putting all together, we have $\Delta_{h2}(r_h)=1+\min\{1,L_c\}r_h$ for $r_h\leq 1$.

We optimize over $r_h$ to solve (\ref{eq:S-HDAExpProblem}).
For $L_s\leq 1$, we have $\Delta_{h1}(r_h)<\Delta_{h2}(r_h)$ for any $r_h\geq 0$ and $\Delta_{hda}(L_s,L_c)=L$. Then, the achievable distortion exponent is maximized, by using $r_h<0$ and $r_h\rightarrow0$, for which we obtain $\Delta_{hda}(L_s,L_c)=\min\{L_s+L_c,1\}$. On the contrary, when $L_s\geq1$, the distortion exponent is maximized for an $r_h>0$ such that $\Delta_{h1}(r_h)=\Delta_{h2}(r_h)$, i.e.,
\begin{IEEEeqnarray}{rCl}
r_h^*=\frac{(L_s-1)}{L_s-1+\min\{1,L_c\}}.
\end{IEEEeqnarray}
Putting all together we obtain the achievable distortion exponent in  (\ref{eq:S-HDAexponent}).

\bibliographystyle{ieeetran}
\bibliography{ref}

\end{document}